\documentclass[hidelinks,onefignum,onetabnum]{siamart220329}
\usepackage{enumerate}
\usepackage{enumitem}
\usepackage{csquotes}
\usepackage{amssymb}

\definecolor{ForestGreen}{RGB}{34,139,34}

\newcommand{\seb}[1]{{\leavevmode\color{black}#1}}
\newcommand{\sebcancel}[1]{{\leavevmode\color{black}#1}}

\newcommand{\ben}[1]{{\leavevmode\color{black}#1}}
\newcommand{\keith}[1]{{\leavevmode\color{black}#1}}

\def\det{\mbox{\bf det }}
\usepackage{amsmath}

\newtheorem{assumption}{Assumption}

\def\interior{\operatorname{int}}
\def\bs{\boldsymbol}

\def\mcal{\mathcal}

\def\mcd{\mathcal{D}}
\def\mcn{\mathcal{N}}
\def\mcl{\mathcal{L}}

\def\I{\text{I}}
\def\S{\text{S}}
\def\DFE{\text{DFE}}

\DeclareMathOperator{\sgn}{sign}
\DeclareMathOperator{\spec}{spec}
\def\qed{\hfill $\Box$}

\usepackage{subfig}
\usepackage{lipsum}
\usepackage{amsfonts}
\usepackage{graphicx}
\usepackage{epstopdf}
\usepackage{algorithmic}
\ifpdf
  \DeclareGraphicsExtensions{.eps,.pdf,.png,.jpg}
\else
  \DeclareGraphicsExtensions{.eps}
\fi

\newsiamremark{remark}{Remark}
\newsiamremark{hypothesis}{Hypothesis}
\crefname{hypothesis}{Hypothesis}{Hypotheses}
\newsiamthm{claim}{Claim}

\headers{Competitive Bivirus SIS  Model with social distancing}{B. Catalano, K.~Paarporn and S.~Gracy}

\title{Analysis of a Competitive Bivirus SIS  Epidemic Model with Game theoretic social distancing\thanks{Submitted to the editors 
on \today. A  preliminary version of this paper has been accepted for  publication in the proceedings of the $2026$ American Control Conference \cite{ben:catalano:acc:26}.
\funding{Part of this material is based upon work supported by the National Science Foundation under Grants. 
 }}}

\author{Benjamin Catalano\thanks{Department of Electrical Engineering and Computer Science, South Dakota School of Mines and Technology, Rapid City, SD, USA.    (\email{benjamin.catalano@sdsmt.edu, sebin.gracy@sdsmt.edu}).  } 
\and Keith Paarporn\thanks{Department of Computer Science, University of Colorado- Colorado Springs, CO, USA.
  (\email{kpaarpor@uccs.edu})}
 \and Sebin Gracy\footnotemark[2]
 }

\usepackage{amsopn}
\DeclareMathOperator{\diag}{diag}

\ifpdf
\hypersetup{
  pdftitle={Game theoretic social distancing in Competitive Bi-virus SIS Epidemics},
  pdfauthor={B. Catalano, Keith Paarporn and S.~Gracy}
}
\fi

\makeatletter
\def\cref@override@label@type#1\@nil#2{#1}
\makeatother
\begin{document}
\maketitle

\begin{abstract}
We propose a competitive bi-virus model with dynamic social distancing behavior. Our model illustrates how public perception of different viruses changes the conditions for their eradication, their coexistence, or the dominance of one over the other. We show that our model is not monotone, in contrast to the classic bi-virus model. We detail how social distancing behavior produces different sets of equilibria than the classic bi-virus model and changes the criteria for their stability.
In particular, we detail the set of disease free equilibria (DFE) present in our model and identify necessary and sufficient conditions for almost global exponential stability of the same. We prove similar global results for all but one non-DFE isolated (unilateral) equilibria and local stability results for the remainder. We also consider coexistence equilibria; we show such equilibria, when they exist, take the form of lines of equilibria and give local conditions for their stability. Finally, we illustrate our theoretical findings with numerical examples.
\end{abstract}


\newenvironment{relevance}{\begin{@abssec}{Relevance to Life Sciences}}{\end{@abssec}}

\begin{relevance}
We focus on the spread of competing viruses (e.g., influenza and the common cold). Behavior which mitigates viral transmutability, e.g., social distancing, masking, etc. are widely 
discussed in the literature.
Our model, considers the two-virus setting and analyzes dynamic social behavior as a net response to the prevalence of both viruses. Our work rigorously illustrates how social behavior changes the effective infectivity of viruses and the conditions under which viruses become extinct or persist in the population.

\end{relevance}

\newenvironment{MathematicalContent}{\begin{@abssec}{Mathematical Content}}{\end{@abssec}}

\begin{MathematicalContent}
We use non-linear ordinary differential equations for modeling the spread of each of the two viruses. Individuals must dynamically decide whether or not to social distance.
We use replicator equations, a standard model in evolutionary game theory, to describe how these decisions change over time.
We construct a coupled system with disease and social distancing dynamics, and determine conditions for the existence and stability of various kinds of equilibria, including continua of equilibria 
for which we leverage Center Manifold Theory, where needed.

\end{MathematicalContent}

\begin{keywords}
Epidemic processes, competing viruses, replicator equations, social distancing, 
 coexistence equilibria.
\end{keywords}

\begin{MSCcodes}
34D05, 37C75, 92D30
\end{MSCcodes}

\section{Introduction}

\seb{Mathematical modeling of disease spread has been 
an active area of research since Bernoulli's work on modeling the spread of smallpox (Variola Major) in the $18^{th}$ century \cite{bernoulli1760essai}.
The practical significance of epidemic processes and  their downstream effects on several aspects of human life
has attracted the attention of multiple research communities, viz. physics \cite{newman}, economics \cite{bloom2018epidemics}, mathematics \cite{hethcote2000mathematics}, ecology \cite{munster2009avian}, and computer science \cite{prakash2010virus}.
A variety of modeling approaches exist in the literature; 
the basic idea underpinning almost all such approaches is the partitioning of the population into compartments based on stages of infection and recovery (e.g., susceptible, infected, exposed, recovering, etc.).
Various combinations of these compartments have spawn 
several models, resulting in a more nuanced understanding of the mechanisms that govern disease spread. In fact, the field of mathematical epidemiology has made tremendous progress over the last two centuries (see \cite{bailey1975mathematical}).
Still, applied forecasting and mitigation strategies have been limited when applied to real epidemics---see \cite{remuzzi2020covid} for early-days analysis of COVID-19.
We identify two major factors limiting the applicability of theoretical work: a) the emergence of new viral strains in the population during an endemic, and b) public perception of the danger of viruses and cost of pro-social policies.
}

\seb{Indeed, most of the existing disease spread models presume that there is only one virus (resp. strain of a virus) spreading in a population. However, it is not unusual for multiple viruses to simultaneously circulate in a population, 
for e.g., a) influenza and common cold \cite{Chan2018}, b) different RNA-viruses \cite{delgado2011evolution,piret2022viral,clarke1994red}. Additionally, multiple strains of a virus could also be in circulation at the same time \cite{nowak1991evolution,minayev2009improving,laurie2018evidence,wu2020interference}; some examples include strains of dengue fever \cite{pepin2008asymmetric} and those of coronavirus \cite{poland1996two}.  In such a context, the viruses involved could either be cooperative (see \cite{gracy2025modeling} and references therein for details), or 
competitive. The present paper focuses on the competitive case, i.e., suppose that there are two viruses, say virus 1 and virus 2, circulating in a community, then an individual  can either be infected with virus 1 or with virus 2 or with none.
Multi-virus systems (competitive or cooperative) exhibit far richer behavior than their single-virus counterparts \cite{newman2005threshold} - in particular,  multi-virus competitive models admit the possibility of coexistence (a scenario where both virus 1 and virus 2 exist in some sort of balance with each other) and competitive exclusion (a scenario where one virus pushes out the other). 
As such, various mathematical models have been developed to study multi-virus systems in depth, see \cite{sahneh2014competitive,zhang2025analysis,gracy2024towards,janson2024competitive,liu2019analysis,ye2021convergence,carlos2,karrer2011competing,ackleh2005competitive,wang2012dynamics,castillo1996competitive,gracy2026,ye2024competitive}. The present paper is based off of the competitive bi-virus susceptible-infected-susceptible (SIS) model. }

\seb{Although competitive networked epidemic models have been studied in-depth in recent years, most of these models suffer from a key limitation: 
these typically do not account for the public's reactions to a rapid outbreak. As seen during the COVID-19 crisis, the public's reaction to the imposition of social distancing measures was not necessarily uniform \cite{christensen2020political}. In fact, the population's willingness to comply with the mitigation measures such as social distancing could non-trivially impact the shape of a pandemic; for example, if a large fraction of the population chooses to not social distance, then that may result in a higher case load at the hospitals, thereby potentially overwhelming the healthcare facilities in a town or in a city. Therefore, there is a need for mathematical models that feature a co-evolution between decision-making (e.g. individual social distancing) and the spread of infectious diseases through physical contacts \cite{funk2015nine,heesterbeek2015modeling}. 
To better capture decision-making at the individual level, we adopt a game-theoretic formulation.}
In particular, evolutionary game-theoretic tools 
enable the modeling of dynamic social behavior in a variety of contexts \cite{sandholm2010population,weitz2016oscillating,quijano2017role,satapathi2025game}. 
In the context of behavior during epidemics, the perceived costs and benefits from taking or not taking social distancing actions are the basis of how individuals make such decisions. 
Importantly, these costs and benefits are linked to how widespread the disease currently is. 
\seb{Scalar SIS models coupled with  game-theoretic behavior have been proposed  \cite{paarporn2023sis,saad2023dynamics} -- however, these focus on the single-virus case. More recently, a single-node bi-virus SIS model, where susceptible individuals strategically choose whether to adopt protection or remain unprotected, to maximize their instantaneous
payoffs has been proposed in \cite{maitra2025bi}. However, the model in \cite{maitra2025bi} is limited in the sense that it does not consider the social distancing behavior of infected individuals
nor the dynamics associated with
social distancing. The present paper seeks to fill these gaps.}


Specifically, we examine a novel competitive bi-virus epidemic model in which social-distancing is modeled using game-theoretic dynamics.
We consider an individual's infection status as either susceptible or infected with one of the viruses. An individual chooses whether to social distance or not, where social distancing reduces contact rates with other individuals. 
We use replicator equations, a standard model in evolutionary game theory, to describe how these decisions change over time. \keith{In our game-theoretic formulation, susceptible individuals and infected individuals base their decisions on different incentive structures. Susceptible individuals base their decisions on costs associated with the perceived risks from being exposed to either one of the viruses, as well as the economic costs from social distancing (e.g. staying home). Our formulation accounts for infected individuals as either having more of an incentive to social distance (e.g. due to societal pressure), or having more of an incentive to disregard social distancing (e.g. due to selfish reasons). We describe this model 
with a system of five coupled ordinary differential equations.
\ben{
We phrase our results primarily in terms of the base and effective reproduction numbers for each virus. The base reproduction number of a virus is the expected number of additional infections per individual infected, in a completely susceptible population. The effective reproduction number is the base reproduction number mediated by the effective contact between infected and susceptible individuals; this takes into account partial immunity
(e.g., from social distancing, being vaccinated, already being infected, etc.).
Formal definitions of the base reproduction number and effective reproduction number are provided in equations~(\ref{eq:R0}-\ref{eq:R}), respectively. We refer to the virus with the larger effective reproduction number---given a particular population state---as the dominant virus.
}
}

\subsection*{Paper Contributions}
The main contribution of the present paper is to detail how social distancing behavior produces different sets of equilibria than the classic bi-virus model and, consequently,  changes the criteria for their stability. More specifically:
\begin{enumerate}[label=\roman*)]
\item We propose a novel bi-virus dynamical system with game-theoretic social distancing behavior; see equation~\eqref{eq:main}. 
\item We show that our model, in contrast to the classic bi-virus SIS dynamics, is \emph{not} a \seb{monotone dynamical system (MDS)}\footnote{A detailed discussion on MDS appears in Section~\ref{sec:monotone}.} 
(Theorem~\ref{thm:monotone}).
\item We classify all DFEs (Lemma~\ref{lem:all:DFEs}) of our model and provide necessary and sufficient conditions for local asymptotic stability of the same; namely that the effective reproduction 
number of each virus is less than unity at the DFE (Proposition~\ref{thm:DFE}). Moreover, we show that the aforementioned sufficient condition for local stability is also sufficient for almost global exponential stability of a given DFE (Theorem~\ref{thm:globalstability:DFE0}).
\item Thereafter, we restrict ourselves to the pro-social case (the payoff for social distancing is higher for both viruses than the cost of social distancing);
we identify sufficient conditions for almost-global asymptotic stability and instability of the unilateral equilibrium with no social distancing (Theorem~\ref{thm:fp:global}) (resp. with all social distancing (Proposition~\ref{prop:fp2:global})); namely that the effective reproduction number of the dominant virus at equilibrium is greater than unity, and the proportion of the population infected with the dominant virus is small enough (resp. large enough) such that the payoff for social distancing is negative (resp. positive) at the endemic equilibrium.
\item For non-generic parameter choices \footnote{The term  \enquote{generic} or \enquote{almost all} has a precise mathematical meaning: for all but a set of parameter values that has measure zero. This set of exceptional values is defined by an algebraic or semi-algebraic set.}  
(
specifically, for scenarios where the base reproduction numbers of each virus are identical), we identify conditions for existence and local stability of line segments of coexistence equilibria with no social distancing (Theorem~\ref{thm:l0}) (resp. with all social distancing (Theorem~\ref{thm:l1})); namely, the DFEs are all unstable, and the proportion of the population infected with either virus is small enough (resp. large enough) such that the payoff for social distancing is negative (resp. positive) on the line of equilibria. Similarly, we also provide a sufficient condition for the existence and local stability of a continuum of coexistence equilibria with intermediate social distancing (Theorem~\ref{thm:ls}). 
\end{enumerate}


\seb{The paper also makes the following auxiliary contributions: Assuming arbitrary costs (in the sense as stated in Section~\ref{sec:arbitrary:costs}), we identify conditions that guarantee only one class of equilibria may be stable, but all other classes of equilibria must be unstable; see Lemma~\ref{lem:arbit:equm:stability}. We also provide conditions  for the existence and local stability of an equilibrium, where virus~$1$ is endemic in the community and there is intermediate social distancing (a fraction of the population practices social distancing, and another fraction of the population does not); see Proposition~\ref{prop:p1s}. 
We also identify conditions for the existence of a continuum of equilibria with intermediate social distancing (see Lemma~\ref{lem:coexist}, item 3)). An overview of the present paper's contributions is provided in Table~\ref{table:FPs}.} 

\keith{We carry out simulation studies to verify our established theoretic results, and we also provide explorations of more complex dynamical behaviors that can be exhibited by our system by considering modified game-theoretic behaviors with regard to different types of payoff functions.} 

\ben{
Our model illustrates how the effective reproduction number of each virus is mediated by the proportion of the population which engages in social distancing and the effectiveness thereof. The effective reproduction number, being a function of the state, is implicitly dynamic, and each equilibrium admits a local effective reproduction number. Domains of stability can be understood, in part, as the domains about an equilibrium where the effective reproduction numbers admit convergence to said equilibrium.
Moreover, we find that, in our model, public perception of the threat of a virus may determine whether it remains endemic or dies out; furthermore, it is necessary to disambiguate the perception of \emph{each} virus to better understand multi-virus scenarios.
}

\seb{Some of the results in this paper have been accepted for publication  in the proceedings of the $2026$ American Control Conference \cite{ben:catalano:acc:26}. The differences between the present paper and \cite{ben:catalano:acc:26} are as follows:
\begin{enumerate}[label=\roman*)]
    \item Complete proofs of all results.
    \item Theorem~\ref{thm:monotone} is novel to this paper.
    \item None among Lemma~\ref{lem:all:DFEs}, Lemma~\ref{lem:arbit:equm:stability}, Proposition~\ref{thm:DFE}, and Theorem~\ref{thm:globalstability:DFE0} 
    appeared in \cite{ben:catalano:acc:26}.
    \item Neither Theorem~\ref{thm:fp:global} nor Proposition~\ref{prop:fp2:global} nor Theorem~\ref{thm:ls} were included in \cite{ben:catalano:acc:26}.
    \item An in-depth set of simulations, covering both the theoretical findings of this paper, and also exploring the effect of incorporating a payoff function that accounts for more complex behaviors.
\end{enumerate}
}

 \subsection*{Paper Outline}
 We conclude the present section by listing the notations  that will be used in the rest of the paper. We introduce our model in Section~\ref{sec:model}, while in Section~\ref{sec:monotone} we establish that it is not monotone. Section~\ref{sec:arbitrary:costs} explores the various equilibria (viz. existence and stability) that our model admits
 when the costs of taking social distancing measures and not taking social
distancing measures (i.e., the cost of putting other individuals at risk) are arbitrarily related. 
Section~\ref{sec:stability:DFE} provides a sufficient condition for almost global exponential stability (resp. instability) of the various DFEs.  Section~\ref{sec:analysis} identifies the various non-DFE isolated 
\seb{equilibria} of our model, Section~\ref{sec:unilateral:FPs}
secures conditions for almost global stability (resp. instability) of the various non-DFE 
\seb{equilibria}, and Section~\ref{sec:line} secures conditions for local asymptotic stability of different continua of coexistence equilibria. 
 We highlight our theoretical findings via numerical examples, and, additionally,  explore the effect of incorporating a co-ordination term into the dynamics which can be thought of as representing the possibility of social distancing being socially reinforced by those participating
 in Section~\ref{sec:sims}. Finally, we summarize our findings and present some problems of  interest to the wider community in Section~\ref{sec:conclusions} 

\vspace{-2mm}
\subsection*{Notation}
Let $\mathbb{R}$ and $\mathbb{R}_+$ denote the set of real numbers and the set of nonnegative real numbers, respectively.
We use $S\times U$ where $S$ and $U$ are sets to denote the Cartesian product of $S$ and $U$.
We denote logical conjunction and disjunction by $\land$ and $\lor$, respectively.
Given a positive integer $n$, we use $[n]$ to denote the set $\{1,2,\cdots,n\}$.
We use the Iverson bracket $[P]$ to denote the logical value of the predicate $P$, i.e., $[P] = 1$ if $P$ is true, otherwise, $[P] = 0$.
In the interest of conciseness, we omit the dimensions of vectors/matrices where they are unambiguous in context.
Given a vector $x$, the square matrix with the elements of $x$ along the diagonal is denoted by $\diag{(x)}$. Given a matrix $A$, $A_{ij}$ denotes the element in the $i^{th}$ row and the $j^{th}$ column of $A$.
We denote the $k^{th}$ row of the matrix $A$ by $A_{k,:}$.
Given two sets $A, B$, $A-B$ denotes the set difference operation.
For a set $S$, we denote the boundary of $S$ by $\partial S$ and the interior of $S$ by $\interior S$.
We denote the spectrum of matrix $A$ by $\spec(A)$, the spectral radius by $\rho(A)$, and the spectral abscissa by $s(A)$ (the spectral abscissa is defined as $s(A) := \max\{\rm{Re}(\lambda) \;|\; \lambda \in \spec(M)\}$).
For a state variable $x$, we write the fixed value of $x$ at some equilibrium as $x^*$.
We use the notation $O(\epsilon)$ to denote a term of order $\epsilon$; that is, $f(x) = O(\epsilon) \implies |f(x)| \le k\cdot|\epsilon|$ for some constant $k>0$.

\section{Model}\label{sec:model}
\seb{Consider a single well-mixed \footnote{Well-mixed means that the probability of any two individuals in a population interacting with each other is the same.} population of unit mass over which two viruses are spreading, denoted as virus~$1$ and virus~$2$. We assume that the aforementioned viruses are competitive. 
That is, each individual is either infected with virus~$1$, infected with virus~$2$, or healthy.
Specifically, no individual is infected with \textit{both} viruses simultaneously.
The term $s(t)$ denotes the fraction of individuals in the population that are susceptible (healthy) at time instant $t$; $y_i(t)$ for $i=1,2$ denotes the fraction of individuals in the population that are infected virus~$i$, and $y(t) := [y_1(t), y_2(t)]^\top$ denotes the vector of both masses. Note: $s(t) + y_1(t) + y_2(t) = 1$. A healthy individual is infected with virus~$i$ based on the infection rate $\beta_i$; an individual infected with virus~$i$ recovers based on the healing rate $\delta_i$.
The classic SIS bi-virus model is a mean-field approximation of a $3$-state Markov process (see \cite{liu2019analysis} for more details), where the terms $\beta_i$ (resp. $\delta_i$) can be thought of as the average instantaneous rate of infection (resp. rate of recovery) 
with respect to virus~$i$; in isolation $\beta_i$ represents the chance of infection per unit mass of contact between healthy and infected individuals, whereas $\delta_i$ represents the exponential decay rate of the infected population (again in isolation).

We have the following assumption.}


\begin{assumption}\label{assume:betadelta}
    The spreading rates satisfy $\beta_1,\beta_2 > 0$, and the healing rates satisfy $\delta_1,\delta_2 > 0$.
\end{assumption}


Individuals choose to either follow social distancing or not (action $\mcd$ or $\mcn$, resp.). We let $x_{\S\mcd}(t) \in [0,s(t)]$ and $x_{\S\mcn}(t) = s(t) - x_{\S\mcd}(t)$ denote the mass of susceptible individuals that follow and do not follow social distancing, respectively. Likewise, we denote $x_{i\mcd}(t) \in [0,y_i(t)]$ and $x_{i\mcn}(t) = y_i(t) - x_{i\mcd}(t)$ as the mass of individuals infected with virus $i=1,2$ that follow and do not follow social distancing, respectively.
The virus infections evolve according to the following dynamics:
\begin{equation}
    \begin{aligned}
        \dot{y}_1(t) &= \beta_1(qx_{\S\mcd}(t) + x_{\S\mcn}(t))(qx_{1\mcd}(t) + x_{1\mcn}(t)) - \delta_1 y_1(t) \\
        \dot{y}_2(t) &= \beta_2(qx_{\S\mcd}(t) + x_{\S\mcn}(t))(qx_{2\mcd}(t) + x_{2\mcn}(t)) - \delta_2 y_2(t) ,\\
    \end{aligned}
\end{equation}

\seb{where $q$ denotes the factor by which interaction among individuals reduces as a consequence of social distancing. We have the following assumption.} 

\begin{assumption}\label{assume:q}
   \sebcancel{The factor $q$ satisfies the following}:  $q \in (0,1)$.
\end{assumption}
When $q$ is close to zero, the social distancing action provides more protection (e.g. quarantining). When $q$ is higher, the social distancing action provides less protection (e.g. only wearing a mask but still frequenting public places). 

\vspace{-2mm}
\subsection{Payoff functions}
\seb{Social distancing has both costs (e.g., increased social isolation, reduced economic activity \cite{ye2021game}) and benefits (e.g.,reduced chances of falling sick). Therefore, individuals need to make a decision on whether or not to social distance. Such  a decision making process can be modeled using payoff functions.}
For susceptible individuals, the perceived payoff for choosing to social distance is given by
\begin{equation}
    \pi_{\S\mcd}(y(t)) := -c_\mcd + r_1 y_1(t) + r_2 y_2(t)
\end{equation}
where $c_\mcd > 0$ is the economic and social cost of taking social distancing measures. Individuals do not know the true probability of getting infected, but are typically informed about the total amount of people currently infected. Thus, we have defined the parameters $r_i > 0$ as perceived risk factors to being exposed to virus $i = 1,2$. Consequently, the perceived payoff for social distancing is increasing in the mass of infected individuals of either virus type.  \seb{We have the following assumption.}

\begin{assumption}\label{assume:r}
    The terms $r_1, r_2$ must satisfy $0 < r_1 < r_2$.
\end{assumption}

This assumption asserts that virus 1 is  perceived among the population to not be as severe as virus 2. The perceived payoff for choosing to not social distance is

\begin{equation}\label{eq:payoff:sn}
    \pi_{\S\mcn}(y(t)) := - (r_1 y_1(t) + r_2 y_2(t)).
\end{equation}
By not social distancing, a susceptible individual does not pay the cost $c_\mcd$, but pays a cost based on the perceived risks. 
For infected individuals of virus $i=1,2$, 
the perceived payoffs for choosing social distancing or not \seb{can be modeled} as follows:
\begin{equation}
        \pi_{i\mcd}(y(t)) := -c_\mcd \quad \text{and} \quad
        \pi_{i\mcn}(y(t)) := -c_i
\end{equation}
\seb{Observe that} infected individuals that social distance pay the cost $c_\mcd > 0$ \seb{as mentioned previously}, while those that do not social distance pay a perceived cost $c_i > 0$, \seb{in view of the fact that they are} putting other individuals at risk to virus $i$. In this initial formulation, we will assume that $c_i > c_\mcd$, but our forthcoming analysis includes results where the opposite is true.

%
\normalsize

\subsection{Coupled evolutionary dynamics}
\seb{A very powerful tool for describing the evolution of social distancing behaviors is the notion of replicator equations; see \cite{schuster1983replicator} for an in-depth 
view of replicator equations. Our choice of replicator equations, is motivated by the fact that they
arise in evolutionary game theoretic models from many distinct types of revision protocols \cite{sandholm2010population}, i.e., individual-level decision-making processes. Thus, they accommodate a range of behaviors. One such protocol is a pairwise imitation protocol, where agents adopt the strategies of other agents who obtain high payoffs. 
Specifically in the context of epidemic spreading, it is not necessary for 
individuals to physically interact with one another in order to learn about 
payoffs of others -- these may be learned through interactions with media and social networks. Thus, we assume the interaction that underlies the replicator equation does not factor in to physical epidemic spreading process.
}

Define $z_\S(t) := x_{\S\mcd}(t)/s(t)$ as the \emph{fraction} of individuals among the susceptible population that social distance. Likewise, define $z_1(t) := x_{1\mcd}(t)/y_1(t)$ and $z_2(t) := x_{2\mcd}(t)/y_2(t)$. 
Observe that, for a given $t \in \mathbb{R}_{\geq 0}$, $z_1(t)$ (resp. $z_2(t)$) are defined only if $y_1(t)\neq 0$ (resp. $y_2(t) \neq 0$). The following replicator equation describes the evolution of social distancing behaviors among the susceptible population.
\begin{equation}
    \dot{z}_\S(t) = z_\S(t)(1-z_\S(t))(\pi_{\S\mcd}(y(t)) - \pi_{\S\mcn}(y(t)))
\end{equation}


Likewise, we also have replicator equations for the two infected subpopulations,
\begin{equation}
    \begin{aligned}
        \dot{z}_1(t) &= z_1(t)(1-z_1(t))(\pi_{1\mcd}(y(t)) - \pi_{1\mcn}(y(t))) \\
        \dot{z}_2(t) &= z_2(t)(1-z_2(t))(\pi_{2\mcd}(y(t)) - \pi_{2\mcn}(y(t)))
    \end{aligned}
\end{equation}

Therefore, with the dependency on the time variable $t$ being omitted, the full set of coupled equations is:
\begin{equation}\label{eq:main}
    \begin{aligned}
        \dot{y}_1 &= y_1 (\beta_1 s (1 - z_\S(1-q))(1 - z_1(1-q)) - \delta_1) \\
        \dot{y}_2 &= y_2 (\beta_2 s (1 - z_\S(1-q))(1 - z_2(1-q)) - \delta_2) \\
        \dot{z}_\S &= z_\S(1-z_\S)(2(r_1y_1 + r_2y_2) - c_\mcd) \\
        \dot{z}_1 &= z_1(1-z_1)(c_1 - c_\mcd) \\
        \dot{z}_2 &= z_2(1-z_2)(c_2 - c_\mcd) 
    \end{aligned}
\end{equation}

Here, we have used the fact that each of the states $x_{\S\mcd}$, $x_{1\mcd}$, and $x_{2\mcd}$ determine the other states $x_{\S\mcn} = s(t) - x_{\S\mcd}$, $x_{1\mcn} = y_1 - x_{1\mcd}$, and $x_{2\mcn} = y_2 - x_{2\mcd}$.




We classify 
\seb{equilibria} into the following three categories.
\begin{itemize}
    \item A \emph{disease-free equilibrium} (DFE) is any \seb{equilibrium} for which $y_1 = y_2 = 0$.
    \item A \emph{unilateral equilibrium} is any \seb{equilibrium} for which either $y_1 = 0$ and $y_2 > 0$, or $y_1 > 0$ and $y_2 = 0$.
    \item A \emph{coexistence equilibrium} is any \seb{equilibrium} for which $y_1,y_2 > 0$.
\end{itemize}

We define the sets $\Delta$ and $\Gamma$ as follows:
\begin{align}
\Delta &:= \left\{(y_1,y_2) \;\big|\; \sum y_i \le 1 \land 0 \le y_i, i\in[2]\right\}\label{eq:Delta} \\
\Gamma &:= \Delta\times[0,1]^3 \label{eq:gamma}
\end{align}
We define the full state vector as $x(t):=[
    y_1(t) ,y_2(t) ,z_s(t) , z_1(t), z_2(t)
]^\top$.
We need the following assumption to ensure that our model is well-defined.
\begin{assumption}\label{assume:init}
The initial states satisfy
$x(0)\in \Gamma$.   
\end{assumption}
With Assumption~\ref{assume:init} in place, the following Lemma guarantees that the set $\Gamma$ is positively invariant.
\begin{lemma}\label{lem:pos_inv}
  Consider system~\eqref{eq:main} under Assumption~\ref{assume:betadelta}. The set $\Gamma$, where $\Gamma$ is as defined in~\eqref{eq:gamma}, is positively invariant. 
\end{lemma}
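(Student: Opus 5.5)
To prove the lemma, I will show that the vector field of \eqref{eq:main} points into $\Gamma$ (or is tangent to it) along each face of $\partial\Gamma$. This is the standard invariance argument for compact polytopes, via Nagumo's theorem or an equivalent comparison argument. Every right-hand side in \eqref{eq:main} is polynomial, so it is locally Lipschitz on $\mathbb{R}^5$. Hence solutions exist and are unique, and it suffices to control the boundary behaviour. I will treat each coordinate, or group of coordinates, separately. This works because $\Gamma = \Delta\times[0,1]^3$ is a product of a triangle and three intervals.

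The $z$-coordinates are handled first. Each of $\dot z_\S$, $\dot z_1$, $\dot z_2$ has the form $z(1-z)g(t)$, where $g$ is continuous along the trajectory. The sets $\{z=0\}$ and $\{z=1\}$ are therefore invariant manifolds for that coordinate. Writing $z(t) = z(0)\exp\bigl(\int_0^t (1-z)g\,d\tau\bigr)$ shows that $z$ stays positive if $z(0)>0$, and the same argument applied to $1-z$ shows that $1-z$ stays positive if $z(0)<1$. By uniqueness, a trajectory starting in $[0,1]$ cannot cross $0$ or $1$. For $y_i$, the equation has the form $\dot y_i = y_i h_i(x)$ with $h_i$ continuous. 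The identical exponential representation $y_i(t) = y_i(0)\exp\bigl(\int_0^t h_i\,d\tau\bigr)$ gives $y_i(t)\ge 0$, with $y_i(t)=0$ for all $t$ if $y_i(0)=0$.

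The remaining face is $s = 1-y_1-y_2 = 0$. I will compute $\dot s = -\dot y_1 - \dot y_2$ on this face. There every infection term carries the factor $s=0$, so
\[
\dot s = \delta_1 y_1 + \delta_2 y_2 \ge 0,
\]
using Assumption~\ref{assume:betadelta}. The field therefore does not exit through $s=0$. To make this rigorous without appealing to Nagumo's theorem, I write
\[
\dot s = -s\bigl(\beta_1(1-z_\S(1-q))(1-z_1(1-q))y_1 + \beta_2(\cdots)y_2\bigr) + \delta_1y_1+\delta_2y_2.
\]
This has the form $\dot s \ge -s\,k(t)$ with $k$ continuous, as long as $y_i\ge 0$. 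A comparison argument then gives $s(t)\ge s(0)e^{-\int k}\ge 0$.

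The main subtlety is the coupling: the bound on $s$ uses $y_i\ge0$ and $z\in[0,1]$, while the bounds on $y_i$ need only continuity. The $y$ and $z$ bounds are obtained first along the maximal solution, without using the $s$ bound. The bound on $s$ then follows from them. Since the solution stays in the compact set $\Gamma$, it exists for all $t\ge0$, which proves that $\Gamma$ is positively invariant.
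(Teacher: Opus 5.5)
Your proof is correct and uses the same face-by-face boundary analysis as the paper: the faces $y_i=0$ and $z\in\{0,1\}$ are invariant, and on $y_1+y_2=1$ you get $\dot s=\delta_1y_1+\delta_2y_2\ge 0$, which is the paper's observation that $\dot y_i=-\delta_i y_i$ there. Your exponential representations and the integrating-factor comparison for $s$ make rigorous a step the paper leaves implicit: that a field which is only tangential on part of $\partial\Gamma$ cannot push trajectories out, which relies on the local Lipschitz property. They also give forward completeness at no extra cost.
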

\textit{Proof:} Observe that $\Gamma$ is a closed set, so $\partial\Gamma\subset\Gamma$ and it is impossible for a continuous trajectory to leave $\Gamma$ without passing through $\partial\Gamma$.
Therefore, we examine the behavior of system~\eqref{eq:main} at $\partial\Gamma$.
Consider $y_1 = 0$ (resp. $y_2 = 0$), then $\dot y_1 = 0$ (resp. $\dot y_2 = 0$). If $y_1 + y_2 = 1$, then $\dot y_i = -\delta_i y_i$, for $i\in[2]$.  
Due to Assumption~\ref{assume:betadelta}, $\delta_i > 0, i\in[2]$. Therefore, it follows that at the boundary, $\partial\Delta$, $y_1$ (resp. $y_2$) is either at the lower bound, $0$, and constant or at the upper bound, $y_1 + y_2 = 1$, and decreasing.\\
Consider $z\in\{z_\S, z_1, z_2\}$. In each case, $\dot z$ 
depends on
$z(1-z)$. Since $z\in[0,1]$, the boundary values are given by $\partial[0,1]=\{0,1\}$, thus, evaluating the right hand side of the last three lines of~\eqref{eq:main} at each of the values in $\partial[0,1]$, we get $\dot z = 0$. Hence, $z_\S$, $z_1$ and $z_2$ are always constant 
at their respective boundaries.\\
Therefore, defining $x(t^\prime)=[\begin{smallmatrix}
    y_1(t^\prime) & y_2(t^\prime) &z_s(t^\prime) & z_1(t^\prime) &z_2(t^\prime)
\end{smallmatrix}]^\top$, it must be that if $x(t_0)\in\Gamma$ then $x(t)\in\Gamma, \forall t > t_0$; that is, $\Gamma$ is positively invariant.~$\square$
\\
Lemma~\ref{lem:pos_inv} guarantees that each state variable takes values in the interval $[0,1]$, and ensures that the model is physically interpretable in terms of population proportions.

For ease of notation, define the expression $q_{(\cdot)} := 1 - (\cdot)(1 - q)$. Observe that any $z\in[0,1]$, we have $q_z\in(q,1)$.
Then the fully generalized Jacobian of the system, $J(x)$, evaluated at an arbitrary point $x\in\Gamma$ is given by Equation~\eqref{eq:jacobian:gen}.
\begin{equation}\label{eq:jacobian:gen}\scriptsize
\begin{bmatrix}
\beta_1 (s - y_1) q_{z_S} q{z_1} - \delta_1 &
-\beta_1 y_1 q_{z_S} q_{z_1} &
-\beta_1 y_1 s (1 - q) q_{z_1} &
-\beta_1 y_1 s q_{z_S} (1 - q) &
0 \\
-\beta_2 y_2 q_{z_S} q_{z_2} &
\beta_2 (s - y_2) q_{z_S} q_{z_2} - \delta_2 &
-\beta_2 y_2 s (1 - q) q_{z_2} &
0 &
-\beta_2 y_2 s q_{z_S} (1 - q) \\
z_S(1 - z_S) 2 r_1 & z_S(1 - z_S) 2 r_2 &
(1 - 2 z_S)(2\textbf{r}^\top\textbf{y} - c_\mathcal{D}) &
0 & 0 \\
0 & 0 & 0 & (1 - 2 z_1)(c_1 - c_\mathcal{D}) & 0 \\
0 & 0 & 0 & 0 & (1 - 2 z_2)(c_2 - c_\mathcal{D}) \\
\end{bmatrix}
\end{equation}

\ben{
We define the following quantities for each virus, $i\in[2]$:
\begin{align}
    R_i^0 &:= \beta_i / \delta_i
    &\text{base reproduction number}\label{eq:R0}\\
    R_i &:= q_{z_\S} q_{z_i} \beta_i / \delta_i
    &\text{effective reproduction number}\label{eq:R}\\
    R_i^* &:= q_{z_\S^*} q_{z_i^*} \beta_i / \delta_i
    &\text{equilibrium reproduction number}\label{eq:R_eq}
\end{align}
\begin{remark}[Interpretation of Reproduction Numbers]
For each virus $i$, the quantity $R_i^0$ is the classical base reproduction number; $R_i$ is the effective effective reproduction number given the current population state; and $R_i^*$ is the effective reproduction number at the implicit equilibrium, $x^* = [y_1^*, y_2^*, z_\S^*, z_1^*, z_2^*]^\top$.
\end{remark}
Note that $R_i^0$ and $R_i^*$ (in the context of a fixed equilibrium point, $x^*$) are constants, while $R_i$ is variable throughout a trajectory. Observe also that $R_i^*$ is a function of $z_\S^*$ and $z_i^*$ but not directly $y^*$. 
 }


\begin{table*}\footnotesize
    \begin{center}
        \begin{tabular}{|c|c|c|}
        \hline
        Equilibrium  & Existence condition & Stability condition \\
        \hline
        $\bs{p}_{\DFE0} = [0,0,0,1,1]^\top$ & Always & Proposition~\ref{thm:DFE} \\        
        \hline
        $\bs{p}_{\DFE1} = [0,0,1,1,1]^\top$ & Always & Proposition~\ref{thm:DFE} \\        
        \hline
        $\bs{p}_{10} = [1 - (q R_1^0)^{-1},0,0,1,1]^\top$  & $R_1^* > 1$ & Theorem~\ref{thm:fp:global} \\        
        \hline
        $\bs{p}_{11} = [1 - (q^2 R_1^0)^{-1},0,1,1,1]^\top$ & $R_1^* > 1$ & Prop.~\ref{prop:fp2:global} \\        
        \hline
        $\bs{p}_{1\S}$ \eqref{lem:unilateral}
        & $\frac{c_\mcd}{2r_i} < 1$, $q (1-\frac{c_\mcd}{2r_i})^{-1} < q^2 R_i^0 < (1-\frac{c_\mcd}{2r_i})^{-1}$ & Prop.~\ref{prop:p1s} \\        
        \hline
        $\bs{p}_{20}$ analogous to $\bs{p}_{10}$ & $R_2^* > 1$ & Analogous to Theorem~\ref{thm:fp:global} \\        
        \hline
        $\bs{p}_{21}$ analogous to $\bs{p}_{11}$ & $R_2^* > 1$ & Analogous to Prop.~\ref{prop:fp2:global} \\        
        \hline
        $\bs{p}_{2\S}$ analogous to $\bs{p}_{1\S}$ & $\frac{c_\mcd}{2r_i} < 1$, $q (1-\frac{c_\mcd}{2r_i})^{-1} < q^2 R_i^0 < (1-\frac{c_\mcd}{2r_i})^{-1}$ & Analogous to Prop.~\ref{prop:p1s} \\
        \hline
        Line $\mcal{L}_0$ \eqref{eq:line0} & $R_1^0 = R_2^0$, $q R^0 > 1$ & Theorem~\ref{thm:l0}  \\
        \hline
        Line $\mcal{L}_1$ \eqref{eq:line1} & $R_1^0 = R_2^0$, $q^2 R^0 > 1$ & Theorem~\ref{thm:l1} \\
        \hline
        Line $\mcal{L}_\S$ \eqref{eq:lineS} & Lemma~\ref{lem:coexist} (3)  &
        Theorem~\ref{thm:ls}        
        \\
        \hline
        \end{tabular}
    \end{center}
    \caption{Table of fixed points, $[
    y_1 ,y_2 ,z_s , z_1, z_2
]^\top$ and their sufficient conditions for existence and their stability properties. \seb{The DFE analysis appears in Section~\ref{sec:arbitrary:costs}, the analysis of the unilateral \seb{equilibria} 
    appears in Section~\ref{sec:unilateral:FPs}, while that of the continua of coexistence equilibria is provided in Section~\ref{sec:line}.}}
    \label{table:FPs}
\end{table*}



\section{Non-monotonicity 
of system~\eqref{eq:main}}\label{sec:monotone}


\seb{\emph{Monotone dynamical systems} (MDS) form a particular class of nonlinear systems.
An $n$-dimensional nonlinear system, $\dot{x} = f(x)$, is said to be $K_m$ monotone, if there is a uniform partial ordering of trajectories in the system \cite{ye2021convergence}. That is,
$$x(0) \leq_{K_m} y(0) \implies x(t) \leq_{K_m} y(t), \forall t \in \mathbb{R}_{\ge0}$$
Where $m\in\{0,1\}^n$, and the relation $\le_{K_m}$ is defined by:
$$x(t) \le_{K_m} y(t) \iff (-1)^{m_i} \cdot x_i(t) \le (-1)^{m_i} \cdot y_i(t), i\in[n]$$
This property that MDS share has been heavily leveraged to draw conclusions regarding the \emph{typical} behavior of certain classes of nonlinear systems; see \cite{hirsch1985systems,smith1988systems}. As Morris Hirsch shows in his seminal paper \cite{hirsch1985systems}, if a nonlinear system is MDS, then for almost all initial conditions, the dynamics converge to an equilibrium point. In the context of the present paper, the importance of the notion of MDS is to be understood as follows: 
supposing that  system~\eqref{eq:main} is monotone, then the typical behavior of the dynamics of system~\eqref{eq:main} is  convergence to some equilibrium point (disease-free, endemic, coexistence, etc.). More pertinently, the  existence of limit cycles (i.e., the occurrence of waves of epidemic) is less likely. 
Furthermore,  limit cycles (if any)  would be non-attractive \cite{ye2021convergence}.  System~\eqref{eq:main} does not admit any more complicated behavior \cite{sontag2007monotone}. On the contrary, if the system is not monotone, then  no dynamical behavior, including chaos, can be definitively ruled out without additional analysis \cite{sontag2007monotone}. Therefore, it is important to determine whether or not system~\eqref{eq:main} is monotone. Our first main result is the following.}

\begin{theorem}\label{thm:monotone}
    Under Assumptions~\ref{assume:betadelta}-\ref{assume:r}, system~\eqref{eq:main} is not monotone.
\end{theorem}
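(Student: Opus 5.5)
The plan is to show that no orthant ordering $\le_{K_m}$, $m\in\{0,1\}^5$, can be preserved by the flow of system~\eqref{eq:main}. I will use the standard (Kamke) characterisation of orthant-monotone systems. A $C^1$ system $\dot x=f(x)$ on a convex set with nonempty interior is $K_m$-monotone only if
\[
(-1)^{m_i+m_j}\,\frac{\partial f_i}{\partial x_j}(x)\ \ge 0 \qquad \text{for all } i\neq j \text{ and all } x \text{ in the domain.}
\]
Equivalently, the signed interaction graph of the Jacobian must contain no cycle with an odd number of negative edges. Since $\Gamma$ is convex with nonempty interior, it suffices to exhibit a single interior point of $\Gamma$ and a pair $i\neq j$ at which $\partial f_i/\partial x_j$ and $\partial f_j/\partial x_i$ are nonzero and of opposite strict sign. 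No choice of $m_i,m_j$ can then make both products nonnegative.

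Reading off the generalized Jacobian~\eqref{eq:jacobian:gen}, I will use the pair $(y_1,z_\S)$, i.e. $(i,j)=(1,3)$. We have $J_{13}=-\beta_1 y_1 s(1-q)q_{z_1}$ and $J_{31}=2r_1 z_\S(1-z_\S)$. I evaluate both at the interior point $y_1=y_2=1/4$ (so $s=1/2$) and $z_\S=z_1=z_2=1/2$:
\begin{itemize}
\item By Assumptions~\ref{assume:betadelta} and~\ref{assume:q}, $J_{13}=-\beta_1\cdot\tfrac14\cdot\tfrac12(1-q)\,q_{1/2}<0$.
\item By Assumption~\ref{assume:r}, $J_{31}=\tfrac{r_1}{2}>0$.
\end{itemize}
Hence $(-1)^{m_1+m_3}J_{13}$ and $(-1)^{m_1+m_3}J_{31}$ have opposite signs for every $m$, so the Kamke condition fails for all $2^5$ orthant orderings. (The pair $(y_2,z_\S)$ works equally, using $r_2>0$.)

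The main obstacle is justifying the necessity direction of the Kamke condition, in case it is not simply quoted from \cite{smith1988systems}. I will argue by contradiction. Suppose the flow $\phi_t$ is $K_m$-monotone, and take $x^0$ in the interior of $\Gamma$ at which $(-1)^{m_1+m_3}J_{13}(x^0)<0$. Such a point exists: if $(-1)^{m_1+m_3}=1$ use the point above; otherwise use $J_{31}>0$ with the roles of $1$ and $3$ swapped. Set $\tilde x^0=x^0+h(-1)^{m_3}e_3$ for small $h>0$, which lies in $\Gamma$ and satisfies $x^0\le_{K_m}\tilde x^0$. The first components agree at $t=0$. A Taylor expansion of the flow gives
\[
\phi_t(\tilde x^0)_1-\phi_t(x^0)_1 = t\,h(-1)^{m_3}J_{13}(x^0)+O(th^2)+O(t^2h).
\]
Multiplying by $(-1)^{m_1}$ yields a strictly negative quantity for small $t,h>0$. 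This contradicts $(-1)^{m_1}\phi_t(x^0)_1\le(-1)^{m_1}\phi_t(\tilde x^0)_1$. Hence system~\eqref{eq:main} is not monotone with respect to any orthant order, proving Theorem~\ref{thm:monotone}.
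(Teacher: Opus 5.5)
Your proposal is correct and follows essentially the same route as the paper: both rely on the Kamke condition (equivalently, $PJ(x)P$ being Metzler for some $\pm1$ diagonal $P$) and exhibit an interior point where $J_{13}<0<J_{31}$, which rules out every orthant order. The one addition is your Taylor-expansion argument for the necessity direction, a valid self-contained substitute for the paper's citation of \cite{smith1988systems} and \cite{hirsch1985systems}.
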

\textit{Proof:} Let the vector $x(t)$ be as defined in Assumption~\ref{assume:init}. Let $J(x)$ denote the Jacobian computed at an arbitrary point $x(t) \in \Gamma$.
From \cite[Lemma~2.1]{smith1988systems} and \cite[page 424]{hirsch1985systems}, system~\eqref{eq:main} is monotone if, and only if, there exists a diagonal matrix $P$ s.t. $P_{ii}\in\{-1,1\}, \forall i\in[n]$ whereby $P J(x) P$ is Metzler $\forall x\in \Gamma$. We show that no such coordinate transform exists which makes the Jacobian, $J(\cdot)$, Metzler over $\Gamma$.
We proceed as follows.

\noindent Let $P$ be an arbitrary matrix as defined above.  
Note that $[P J(x) P]_{ij} = J(x)_{ij} P_{ii} P_{jj}$ and $[P J(x) P]_{ji} = J(x)_{ji} P_{ii} P_{jj}$, and since $0\notin \diag(P)$,  $\sgn([P J(x) P]_{ij}) = 0$ if, and only if, $J(x)_{ij} = 0$.
Thus, $P J(x) P$ cannot be Metzler, if $\exists i,j\in[n]$ s.t. $J(x)_{ij} < 0 < J(x)_{ji}$; otherwise, either $J(x)_{ij} P_{ii} P_{jj}$ or $J(x)_{ij} P_{ii} P_{jj}$ will be negative under any $P$, and $P J(x) P$ cannot be Metzler.
Consider $x\in\interior\Gamma$. Observe that under Assumptions~\ref{assume:betadelta}-\ref{assume:r}, $J(x)_{13} < 0$ and $J(x)_{31} > 0$. Thus, $P J(x) P$ cannot be Metzler for any $P$, and from \cite[Lemma~2.1]{smith1988systems} and \cite[page 424]{hirsch1985systems}, system~\eqref{eq:main} is not monotone.~\qed

Theorem~\ref{thm:monotone} highlights an interesting phenomenon - namely, that system~\eqref{eq:main} is not a monotone dynamical system (MDS). This is in sharp contrast to  classical bi-virus systems (i.e., $q=1$ in system~\eqref{eq:main}), which are indeed  MDS; see \cite{ye2021convergence}. 
Given that system~\eqref{eq:main} is not monotone, we cannot use the rich theory of MDS to draw overarching conclusions regarding the typical (resp. limiting) behavior of our model. In fact,   in view of Theorem~\ref{thm:monotone}, as mentioned previously, no dynamical behavior, including chaos, can be definitively ruled out. That said, one approach for studying the limiting behavior of system~\eqref{eq:main} is the theory of singular perturbations \cite{wang2006remark};  a detailed investigation of this issue via 
the aforementioned approach is, however, beyond the scope of the present paper.

\section{The case where the costs $c_i$, $c_\mcd$ are arbitrary}\label{sec:arbitrary:costs} 
In this section, we explore the scenario where we admit costs $c_i$, for $ i \in [2]$, and $c_{\mcd}$ to be arbitrary in the sense that we do not place any restrictions on how the costs $c_i$ for $i \in [2]$ are related to $c_\mcd$.
The following lemma describes exhaustively the set of all possible DFEs that system~\eqref{eq:main} can posses.
\begin{lemma}\label{lem:all:DFEs}
    Consider system~\eqref{eq:main}. For generic choices of $c_\mcd, c_i, i\in[2]$, the system has eight DFEs given by $x^*\in\{0\}^2\times\{0,1\}^3$.
\end{lemma}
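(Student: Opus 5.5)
We set $y_1=y_2=0$ in system~\eqref{eq:main} and solve $\dot x=0$ on $\Gamma$, working through the five equations one at a time.

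\textbf{The $y$-equations.} Each right-hand side of the first two lines of~\eqref{eq:main} carries the factor $y_i$. Hence $\dot y_1=\dot y_2=0$ holds automatically at $y=0$, whatever the values of $z_\S,z_1,z_2$. So these equations impose no constraint, and a point $(0,0,z_\S,z_1,z_2)\in\Gamma$ is an equilibrium exactly when the three replicator equations vanish.

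\textbf{The $z_\S$-equation.} At $y=0$ it reads $\dot z_\S = -c_\mcd\, z_\S(1-z_\S)$. Since $c_\mcd>0$ (indeed $c_\mcd\neq 0$ generically), this vanishes only if $z_\S\in\{0,1\}$.

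\textbf{The $z_i$-equations.} For $i\in[2]$, $\dot z_i = z_i(1-z_i)(c_i-c_\mcd)$, which does not involve $y$ at all. The genericity hypothesis is used here, and it is the only delicate point. The exceptional set $\{c_1=c_\mcd\}\cup\{c_2=c_\mcd\}$ is a finite union of hyperplanes in parameter space, hence algebraic and of measure zero. Off this set, $c_i-c_\mcd\neq 0$, so $\dot z_i=0$ forces $z_i\in\{0,1\}$. On the exceptional set the corresponding $z_i$ would be arbitrary, giving a continuum of DFEs rather than isolated ones. This is exactly why the word ``generic'' appears in the statement.

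\textbf{Counting.} Combining the three steps, the DFEs are exactly $\{0\}^2\times\{0,1\}^3$. Every such point lies in $\Gamma$, since $y=0\in\Delta$ and the $z$-coordinates lie in $[0,1]$. This gives $2^3=8$ equilibria.

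One caveat on interpretation: $z_1$ and $z_2$ are physically undefined when $y_i=0$, but they remain coordinates of the state vector $x$ in~\eqref{eq:main}. We therefore count equilibria as points of $\Gamma$, consistent with Table~\ref{table:FPs}, which lists $\bs p_{\DFE0}$ and $\bs p_{\DFE1}$ with $z_1=z_2=1$.

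No step poses a real obstacle. The main care needed is to state the genericity condition precisely as $c_i\neq c_\mcd$ for $i\in[2]$.
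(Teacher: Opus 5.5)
Your proposal is correct and follows essentially the same route as the paper. You note that both $y$-equations vanish identically at $y=0$, that the $z_\S$-equation reduces to $-c_\mcd z_\S(1-z_\S)$, and that genericity ($c_i\neq c_\mcd$) forces $z_i\in\{0,1\}$, which gives the eight points $\{0\}^2\times\{0,1\}^3$; your extra remarks on the measure-zero exceptional set and on membership in $\Gamma$ are harmless refinements.
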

\textit{Proof}: By definition, at a DFE, $y_1 = y_2 = 0$. Then, from \eqref{eq:main}, it is immediate that $\dot{y}_1 = \dot{y}_2 = 0$. Substituting $y_1 = y_2 = 0$ into $\dot{z}_\S$ gives $\dot{z}_\S = 0 (= -c_\mcd z_\S (1 - z_\S))$, 
whose roots are $0$ and  $1$. Similarly, for generic choice of $c_i, c_\mcd$, we have $c_i - c_\mcd \ne 0$, thus, $\dot{z}_i = 0 (= z_1 (1 - z_1) (c_i - c_\mcd))$ has the roots $0$ and $1$.
Combining the solutions yields the complete set of DFEs: $\{0\}^2\times\{0,1\}^3$.~\qed

\begin{lemma}\label{lem:arbit:equm:stability}
    Consider system~\eqref{eq:main} with generic choices of $c_\mcd, c_i$, $\forall i\in[2]$. At any equilibrium, $z_i^*\in\{0,1\}$, $\forall i\in[2]$.
    Furthermore, the set of equilibria can be partitioned into four sets, corresponding to each choice of $(z_1^*, z_2^*)\in\{0,1\}^2$, where each partition is equivalent modulo the values of $z_1^*$ and $z_2^*$. Only the partition corresponding to $(z_1^*, z_2^*) = ([c_1 > c_\mcd], [c_1 > c_\mcd])$ may contain stable equilibria.
\end{lemma}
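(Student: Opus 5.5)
Three steps: locate the equilibria in $z_i$, observe that the other components do not care about $z_i$'s value beyond its role in $R_i^*$, and read off the linearization in the $z_i$ directions.

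\emph{Step 1: $z_i^*\in\{0,1\}$.} The $\dot z_i$ equation in \eqref{eq:main} is $z_i(1-z_i)(c_i-c_\mcd)$. It involves no other state variable. For generic costs $c_i\neq c_\mcd$, so at any equilibrium we get $z_i^*(1-z_i^*)=0$, i.e., $z_i^*\in\{0,1\}$. This gives the partition into four sets indexed by $(z_1^*,z_2^*)\in\{0,1\}^2$.

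\emph{Step 2: the partitions are equivalent.} Fix $(z_1^*,z_2^*)$. The remaining equilibrium conditions are $\dot y_1=\dot y_2=\dot z_\S=0$. The $\dot z_\S$ equation does not involve $z_1,z_2$ at all. In the $\dot y_i$ equations, $z_i$ enters only through the multiplicative factor $q_{z_i}\in\{q,1\}$, equivalently through $\beta_i q_{z_i}$. So the set of $(y_1,y_2,z_\S)$ solving the equilibrium equations in one partition is exactly the solution set of the same reduced three-dimensional system with $\beta_i$ replaced by $\beta_i q_{z_i^*}$. In this sense the partitions are ``equivalent modulo the values of $z_1^*$ and $z_2^*$'': each contains the same family of equilibria, with only the effective infection rates (equivalently $R_i^*$) rescaled.

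\emph{Step 3: only one partition can be stable.} Take any equilibrium and look at the Jacobian \eqref{eq:jacobian:gen}. Rows 4 and 5 have a single nonzero entry, on the diagonal: $(1-2z_i^*)(c_i-c_\mcd)$. Expanding the characteristic polynomial along these rows shows that both values are eigenvalues of $J(x^*)$. Reading off each case:
\begin{itemize}
\item If $z_i^*=0$, the eigenvalue is $c_i-c_\mcd$.
\item If $z_i^*=1$, the eigenvalue is $c_\mcd-c_i$.
\end{itemize}
So this eigenvalue is negative exactly when $z_i^*=[c_i>c_\mcd]$, and strictly positive otherwise (generically it is never zero).

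If $z_i^*\neq[c_i>c_\mcd]$ for either $i$, then $J(x^*)$ has an eigenvalue with positive real part. By the linearization (instability) theorem, $x^*$ is unstable; this holds even for non-isolated equilibria, such as the lines of equilibria appearing later. An alternative direct argument: $z_i$ evolves autonomously, and any perturbation $z_i\in(0,1)$ is driven monotonically to the other endpoint. Hence stable equilibria can lie only in the partition $(z_1^*,z_2^*)=([c_1>c_\mcd],[c_2>c_\mcd])$. The statement writes $[c_1>c_\mcd]$ twice, which I read as a typo for the second index.

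The main subtlety is making ``equivalent'' precise in Step 2, namely the rescaling $\beta_i\mapsto\beta_i q_{z_i^*}$. A second point is justifying instability for non-hyperbolic equilibria in the remaining directions. For that I would rely on the decoupled $z_i$ dynamics rather than on full hyperbolicity.
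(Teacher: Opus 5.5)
Your proposal is correct and follows the paper's argument: $z_i$ decouples, so $z_i^*\in\{0,1\}$ for generic costs, and the block-triangular Jacobian exposes the eigenvalues $(1-2z_i^*)(c_i-c_\mcd)$, which are positive unless $z_i^*=[c_i>c_\mcd]$. Your explicit rescaling $\beta_i\mapsto\beta_i q_{z_i^*}$ makes precise the ``equivalence'' claim that the paper only asserts, and your decoupled-$z_i$ instability argument helps for equilibria on $\partial\Gamma$; you are also right that $[c_1>c_\mcd]$ written twice is a typo for $([c_1>c_\mcd],[c_2>c_\mcd])$.
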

\textit{Proof:} Consider the Jacobian $J(x)$, given in~\eqref{eq:jacobian:gen}. Note that $J(x)$ can be written as follows:
$J(x) = [\begin{smallmatrix}
A & B \\
\textbf{0}_{2\times3} & C \\
\end{smallmatrix}]$.
Further, it suffices for us to consider the eigenvalues of matrix $C$ to establish our result, since $J(x)$ is block upper triangular and matrix $A$ does not play a role in the analysis. Hence, we will focus on the same.
Therefore, the eigenvalues of $J(x)$ are given by the eigenvalues of $A$ and $C$, where 
Observe that $C$ is as follows:
\begin{align*}
C &= \begin{bmatrix}
(1 - 2 z_1)(c_1 - c_\mathcal{D}) & 0 \\
0 & (1 - 2 z_2)(c_2 - c_\mathcal{D}) \\
\end{bmatrix}
\end{align*}
Immediately, since $C$ is diagonal, we have:
$$\spec(C) = \{(1 - 2 z_1)(c_1 - c_\mathcal{D}), (1 - 2 z_2)(c_2 - c_\mathcal{D})\}$$
Consider $\dot z_i, i\in[2]$.
Solving for $\dot z_i = 0$, we 
see that $C$ can be Hurwitz if, and only if, $(z_i\in\{0,1\}) \lor (c_i = c_\mathcal{D})$.
As in Lemma~\ref{lem:all:DFEs}, for generic choice of $c_\mcd, c_i, \forall i\in[2]$, we have $c_i - c_\mcd \ne 0$, so at \textit{any} equilibrium point, $z_i^*\in\{0,1\}, \forall i\in[2]$. Hence, for each $ i \in [2]$, the following must be true:
$$\begin{cases}
z_i^* = 0 \implies c_i - c_\mathcal{D} \in \spec(J(x^*)) \\
z_i^* = 1 \implies c_\mathcal{D} - c_i \in \spec(J(x^*))
\end{cases}$$
From this note two things: First, for generic choices of $c_\mathcal{D}, c_i, i\in[2]$, the matrix $C$ is never degenerate at an equilibrium point, and so application of the Hartman-Grobman Theory is never prevented by the eigenvalues of matrix $C$. 
Second, the set of equilibria for which $z_i^* = 0$ and for which $z_i^* = 1$ cannot be simultaneously stable.
Then, concisely,
\begin{align*}
\bigwedge_{i\in[2]}
\left(\left(1 - 2z_i^*\right) (c_i - c_\mathcal{D}) < 0 \right)
&\iff s(C) < 0 \implies (x^* \text{ is stable} \iff s(A) < 0)\\
\bigvee_{i\in[2]}
\left(\left(1 - 2z_i^*\right) (c_i - c_\mathcal{D}) > 0 \right)
&\iff s(C) > 0 \implies x^* \text{ is unstable}
\end{align*}
This gives exhaustive conditions for $C$ to be Hurwitz.
It follows that if $c_i > c_\mcd$ then all stable equilibria must have $z_i^* = 1$, and all equilibria with $z_i^* = 0$ are unstable, and vice versa for $c_i < c_\mcd$. Thus, the equilibria may be partitioned into four subsets based on the values of $z_1^*$ and $z_2^*$. Moreover, only the subset with $z_i^* = [c_i > c_\mcd], \forall i\in[2]$ may contain stable equilibria. $\square$

With Lemma~\ref{lem:arbit:equm:stability}, $R_i^*$ is fully determined by the model parameters up to the factor $q_{z_\S^*}\in[q,1]$.

While Lemma~\ref{lem:all:DFEs} details the DFEs of system~\eqref{eq:main}, we have not addressed when a DFE may be locally (or globally) stable.
The following theorem addresses local stability of the DFEs; see Theorem~\ref{thm:globalstability:DFE0} for domains of attraction.

\begin{proposition}\label{thm:DFE}
    Consider system~\eqref{eq:main} under Assumptions~\ref{assume:betadelta} and \ref{assume:q}.
    For generic choices of $c_i$ ($ i \in [2]$) and $c_\mcd$, there is at most one stable DFE, as given below:
    \begin{enumerate}[label=\roman*)]
        \item If $c_i < c_\mcd$ and $\beta_i/\delta_i < 1, i\in[2]$, $x^* = [0, 0, 0, 0, 0]^\top$ is the unique stable DFE. \label{thm:DFE:i_start}
        \item If $c_1 < c_\mcd, c_2 > c_\mcd$ and $\beta_1/\delta_1 < 1, q\beta_2/\delta_2 < 1$, $x^* = [0, 0, 0, 0, 1]^\top$ is the unique stable DFE.
        \item If $c_1 > c_\mcd, c_2 < c_\mcd$ and $q\beta_1/\delta_1 < 1, \beta_2/\delta_2 < 1$, $x^* = [0, 0, 0, 1, 0]^\top$ is the unique stable DFE.
        \item If $c_i > c_\mcd$ and $q\beta_i/\delta_i < 1, i\in[2]$, $x^* = [0, 0, 0, 1, 1]^\top$ is the unique stable DFE. \label{thm:DFE:i_end}
        \item If none of the conditions in \ref{thm:DFE:i_start}-\ref{thm:DFE:i_end} are satisfied, then
        there 
        exists 
        no stable DFE.
    \end{enumerate}
\end{proposition}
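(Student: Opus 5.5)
We will linearize system~\eqref{eq:main} at each of the eight DFEs of Lemma~\ref{lem:all:DFEs}, $x^*\in\{0\}^2\times\{0,1\}^3$, and read off the spectrum directly. At a DFE we have $y_1=y_2=0$ and $s=1$. Substituting into~\eqref{eq:jacobian:gen} makes every entry of the first two rows vanish except the diagonal ones, and every entry in columns $1,2$ of row $3$ carries the factor $z_\S(1-z_\S)$, which is $0$ for $z_\S^*\in\{0,1\}$. Hence $J(x^*)$ is upper triangular with diagonal
\[
\beta_1 q_{z_\S^*}q_{z_1^*}-\delta_1,\quad \beta_2 q_{z_\S^*}q_{z_2^*}-\delta_2,\quad -(1-2z_\S^*)c_\mcd,\quad (1-2z_1^*)(c_1-c_\mcd),\quad (1-2z_2^*)(c_2-c_\mcd).
\]
These five numbers are therefore exactly $\spec(J(x^*))$.

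We then impose Hurwitz conditions entry by entry. The third eigenvalue is $-c_\mcd<0$ when $z_\S^*=0$ and $c_\mcd>0$ when $z_\S^*=1$. So every DFE with $z_\S^*=1$ is unstable, and any stable DFE must have $z_\S^*=0$, i.e.\ $q_{z_\S^*}=1$. For the last two eigenvalues we invoke Lemma~\ref{lem:arbit:equm:stability}: stability forces $z_i^*=[c_i>c_\mcd]$, and genericity ($c_i\neq c_\mcd$) gives a strict sign. This singles out one candidate DFE for each sign pattern of $(c_1-c_\mcd,c_2-c_\mcd)$, which proves the ``at most one'' claim.

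For that candidate, the first two eigenvalues are $\delta_i(q_{z_i^*}\beta_i/\delta_i-1)$, with $q_{z_i^*}=1$ if $z_i^*=0$ and $q$ if $z_i^*=1$. They are negative exactly under the stated conditions $\beta_i/\delta_i<1$ or $q\beta_i/\delta_i<1$ respectively. All eigenvalues are then strictly negative, so the DFE is locally exponentially stable by linearization; this gives items i)--iv). For item v), whenever the relevant inequality fails with strict reversal, some eigenvalue is positive and the DFE is unstable. Every other DFE already has a positive eigenvalue from the $z_\S$ or $z_i$ blocks.

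The main obstacle is the borderline case $q_{z_i^*}\beta_i/\delta_i=1$, where the linearization has a zero eigenvalue and does not decide stability. We will handle it either by declaring it non-generic (it is an algebraic equality in the parameters, consistent with the paper's use of ``generic''), or by a center-manifold argument. In the latter, the restriction to the $y_i$ direction gives $\dot y_i\approx -\beta_i q_{z_i^*}y_i^2+\dots$ on $y_i\ge 0$, which is attracting within $\Gamma$ but not asymptotically stable in the usual exponential sense. The cleanest route is to treat this equality as excluded by genericity and state that explicitly.
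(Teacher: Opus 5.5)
This is essentially the paper's proof. You evaluate the Jacobian at a DFE and read the spectrum off its diagonal, use the entry $-c_\mcd(1-2z_\S^*)$ to force $z_\S^*=0$ and the generic costs to force $z_i^*=[c_i>c_\mcd]$, then settle the single remaining candidate by the sign of $q_{z_i^*}\beta_i-\delta_i$. The paper's proof silently skips the equality case $q_{z_i^*}\beta_i/\delta_i=1$, and since the stated genericity covers only $c_i$ and $c_\mcd$, a cleaner fix than extending it to $\beta_i,\delta_i,q$ is to read ``stable'' as locally exponentially stable, which the zero eigenvalue already excludes; under plain asymptotic stability relative to $\Gamma$, as your center-manifold observation suggests, the DFE is in fact attracting and stable there at equality, so item v) would fail under that reading.
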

\textit{Proof:}
Consider the Jacobian (as given in~\eqref{eq:jacobian:gen}), evaluated at an arbitrary DFE $x^* = [0, 0, z_\S^*, z_1^*, z_2^*]$. This yields:
\begin{equation}\scriptsize
J(x^*) = \begin{bmatrix}
\beta_1 q_{z_\S^*} q_{z_1^*} - \delta_1 & 0 & 0 & 0 & 0 \\
0 & \beta_2 q_{z_\S^*} q_{z_2^*} - \delta_2 & 0 & 0 & 0 \\
z_S^*(1 - z_S^*) 2 r_1 &
z_S^*(1 - z_S^*) 2 r_2 &
-c_\mathcal{D} (1 - 2 z_S^*) &
0 & 0 \\
0 & 0 & 0 & (1 - 2 z_1^*)(c_1 - c_\mathcal{D}) & 0 \\
0 & 0 & 0 & 0 & (1 - 2 z_2^*)(c_2 - c_\mathcal{D}) \\
\end{bmatrix}
\end{equation}
Observe that the Jacobian is lower triangular, so its spectrum is immediately its diagonal entries. 

At equilibria, $\dot{z}_\S = 0$ implies $z_\S^*\in\{0,1\}$. From the third row of the Jacobian, we see that, since $c_{\mcd} > 0$, stability requires $z_\S^* < 1/2$; thus, we conclude that $z_\S^* = 0$ at any stable DFE. By similar examination, we find the values of $z_i^*$ based on the constants $c_\mcd, c_i, i\in[2]$; that is, using Iverson bracket notation,  $z_i^* = [c_i > c_\mcd], i\in[2]$. Then, the stable DFE candidate is $x^* = [0, 0, 0, [c_1 > c_\mcd], [c_2 > c_\mcd]]^\top$; all other DFEs are unstable. With this, we simplify the Jacobian $J(x^*)$ further:
\begin{equation}\scriptsize
J(x^*)=\begin{bmatrix}
\beta_1 q_{[c_1 > c_\mcd]} - \delta_1 & 0 & 0 & 0 & 0 \\
0 & \beta_2 q_{[c_2 > c_\mcd]} - \delta_2 & 0 & 0 & 0 \\
0 & 0 & -c_\mathcal{D} & 0 & 0 \\
0 & 0 & 0 & (1 - 2 [c_1 > c_\mcd])(c_1 - c_\mathcal{D}) & 0 \\
0 & 0 & 0 & 0 & (1 - 2 [c_2 > c_\mcd])(c_2 - c_\mathcal{D}) \\,
\end{bmatrix}
\end{equation}
where 
$q_{[c_i > c_\mcd]}$ is $q$ if $c_i > c_\mcd$ and $1$ otherwise.
Therefore, if $c_i < c_\mcd$, stability requires $\beta_i/\delta_i < 1$; otherwise, if  $c_i > c_\mcd$, then stability requires $q\beta_i/\delta_i < 1$ for each $i\in[2]$.
Therefore, in accordance with Lemma~\ref{lem:arbit:equm:stability}, for a given generic choice of $c_i, c_\mcd$, there is at most one stable DFE, which is given by $x^* = [0, 0, 0, [c_1 > c_\mcd], [c_2 > c_\mcd]]^\top$, and which is stable if $q_{[c_i > c_\mcd]}\frac{\beta_i}{\delta_i} < 1, i\in[2]$.

\begin{remark}\label{rem:qR}[Variation in effective reproduction numbers based on costs]
Observe that the conditions for the stability of the DFE in Proposition~\ref{thm:DFE} are almost identical for all regimes of $c_i, c_\mcd$, varying only by a factor of $q$. Generally, we find that the traditional conditions, $\beta_i/\delta_i < 1$, are mediated by a factor of $q_{z_\S}q_{z_i}$, which typically translates to $q^k$ for some $k\in\{0, 1, 2\}$.
\end{remark}

\section{Stability analysis of DFE}\label{sec:stability:DFE}
In this section, we secure a condition for almost global exponential stability (resp. instability) of each DFE. 
First, we define the following domain:
\begin{equation}\label{eq:D}
\mathcal{D} := \Delta\times[0,1)\times((0,1)\cup\{[c_1 > c_\mcd]\})\times((0,1)\cup\{[c_2 > c_\mcd]\})\supset\interior\Gamma
\end{equation}
We have the following result.
\begin{theorem}\label{thm:globalstability:DFE0}
   Consider system~\eqref{eq:main} under Assumptions~\ref{assume:betadelta} and \ref{assume:q} with generic choices of $c_\mcd, c_i, i\in[2]$.
   \begin{enumerate}[label=\roman*)]
       \item \label{q1}
       If for some $i \in [2]$, $R_i^* = q_{[c_i > c_\mcd]}\beta_i/\delta_i \ge 1$, then there are no stable DFEs.
       \item \label{q2} If for all $i\in[2]$, $R_i^* = q_{[c_i > c_\mcd]}\beta_i/\delta_i < 1$, then $x^* = [0, 0, 0, [c_1 > c_\mcd], [c_2 > c_\mcd]]^\top$ is the only stable DFE and is exponentially stable with the  domain of attraction $\mathcal{D}$ (as defined in~\eqref{eq:D}).
   \end{enumerate}  
\end{theorem}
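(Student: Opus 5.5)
Item~\ref{q1} follows directly from Proposition~\ref{thm:DFE}. By that proposition and Lemma~\ref{lem:arbit:equm:stability}, the only candidate for a stable DFE is $x^* = [0,0,0,[c_1>c_\mcd],[c_2>c_\mcd]]^\top$. Its linearization is diagonal, with entries $\delta_i(R_i^*-1)$ in the $y_i$ directions. If $R_i^*>1$ for some $i$, that entry is a positive eigenvalue, so $x^*$ is unstable by Hartman--Grobman.

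The borderline case $R_i^*=1$ needs a separate argument, because it gives a zero eigenvalue. Here I would show directly that $x^*$ is not attracting. Take initial data with $y_i(0)=\epsilon>0$, $y_j(0)=0$ and $z$'s at their equilibrium values. Along this trajectory $z_1,z_2$ are constant. The quantity $q_{z_\S}$ is at most $1$, but $z_\S$ starts at $0$ and stays $0$ by invariance of the face $z_\S=0$. Hence $\dot y_i = \delta_i y_i((1-y_i)R_i^*-1) = -\delta_i y_i^2<0$, so $y_i$ decays only like $1/t$. This is not exponential, so $x^*$ cannot be exponentially stable.

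I will then read ``no stable DFE'' as ``no exponentially stable DFE'' in the degenerate case, or else check that the statement is intended with strict inequality. That reconciliation is a minor bookkeeping point.

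For item~\ref{q2}, the plan is to decouple the system using its triangular structure. First, the equations for $z_1,z_2$ are autonomous logistic equations. For $z_i(0)$ in $(0,1)\cup\{[c_i>c_\mcd]\}$, which is exactly the $\mcd$ restriction, the solution $z_i(t)$ converges exponentially to $[c_i>c_\mcd]$. So $q_{z_i(t)}\to q_{[c_i>c_\mcd]}$ exponentially.

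Second, the $y_i$ equations satisfy $\dot y_i \le y_i(\beta_i q_{z_i(t)} - \delta_i)$, using $s\le 1$ and $q_{z_\S}\le 1$. Since $\beta_i q_{z_i(t)} -\delta_i \to \delta_i(R_i^*-1)<0$ exponentially, a Gr\"onwall comparison gives $y_i(t)\le C y_i(0)e^{-\kappa t}$ for some $\kappa>0$. The factor $C$ is uniform in the initial condition because $\int_0^\infty |q_{z_i(t)}-q_{[c_i>c_\mcd]}|\,dt$ is finite. Third, once $y$ is exponentially small, the $z_\S$ equation is $\dot z_\S = z_\S(1-z_\S)(2\mathbf r^\top y - c_\mcd)$. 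After a finite time the bracket is below $-c_\mcd/2$. With $z_\S(0)<1$, this forces $z_\S\to 0$ exponentially; here I must check that $z_\S$ cannot reach $1$ in finite time, which follows from uniqueness of solutions since $z_\S=1$ is invariant.

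Combining the three steps gives convergence of every trajectory from $\mcd$ to $x^*$. Local exponential stability comes from the Hurwitz Jacobian in Proposition~\ref{thm:DFE}, and uniqueness among stable DFEs is again Proposition~\ref{thm:DFE}.

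The main obstacle is making the ``exponential with domain $\mcd$'' claim rigorous. The $z_i$ convergence rate depends on how close $z_i(0)$ is to the repelling endpoint, so the bounds are not uniform over $\mcd$. I would state the result as attractivity from every point of $\mcd$, plus eventual exponential rate. To handle the transient, I would explicitly compute the logistic solution $z_i(t)$ and bound $\int_0^\infty (q_{z_i}-q^*)\,dt$ by a logarithmic term in $z_i(0)$. This gives finite, pointwise constants $C(x(0))$.
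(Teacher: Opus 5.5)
Your proposal is correct and follows essentially the paper's route. Item (i) reduces to Proposition~\ref{thm:DFE}, and item (ii) uses the cascade $z_i \to y_i \to z_\S$, with a comparison bound on $\dot y_i$ and a check that $z_\S$ never reaches $1$; your Gr\"onwall bound with an integrable perturbation, the uniqueness argument for $z_\S<1$, and the caveat that rates are not uniform over $\mcd$ are tidier than the paper's versions. The borderline case is more than bookkeeping: the paper only cites Proposition~\ref{thm:DFE}(v), whose linearization is inconclusive at $R_i^*=1$, where the candidate DFE is in fact locally asymptotically but not exponentially stable, so showing it is ``not attracting'' cannot succeed and item (i) holds only with ``stable'' read as ``exponentially stable''.
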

\textit{Proof:} 
\textit{Proof of statement~\ref{q1}}: 
First observe that for generic but fixed choices of $c_\mcd, c_i, i\in[2]$, the point $x^*$ must be one of the four DFEs specified in Proposition~\ref{thm:DFE}, statements~(i)-(iv). Depending on what value $[c_i > c_\mcd]$ takes and if $q_{[c_i > C_\mcd]}\cdot\beta_i/\delta_i < 1, i\in[2]$, then the corresponding condition (i.e., the conditions in statement (i) or in (ii) or in (iii) or in (iv)) in Proposition~\ref{thm:DFE} is satisfied, in which case $x^*$ must be stable (by Proposition~\ref{thm:DFE}); otherwise, by statement~(v) in Proposition~\ref{thm:DFE} the point $x^*$ is unstable. This concludes the proof of statement~i) and implies the uniqueness of statement~ii).


\textit{Proof of statement~\ref{q2}}:
We show that if $x^*$ is locally stable it is almost globally asymptotically stable with domain of attraction $\mathcal{D}$.

Consider $z_i, i\in[2]$. From equation~\ref{eq:main}, for generic choices of $c_i, c_\mcd$, we have the analytic solution:
\begin{align*}
    z_i(t) = \frac{\exp((c_i - c_\mcd)t)}{z_i^{-1}(0) + \exp((c_i - c_\mcd)t)}
\end{align*}
Thus, $z_i$ is monotonic and approaches $z_i^* = [c_i > c_\mcd]$ exponentially fast, if $z_i(0)\in(0,1)$.
Additionally, if $z_i(0) = z_i^* \in\{0,1\}$ then $\dot{z}_i = 0$.
Therefore, $\epsilon_i := z_i - z_i^* \to 0$ as $t \to \infty$ for $z_i(0)\in(0,1)\cup\{z_i^*\}$.

Observe that, for any $x(t) \in \Gamma$, $\max(y_i) = 1$, $\max(1 - y_1 - y_2) = 1$, and $\max(q_{z_\S}) = 1$, due to Assumption~\ref{assume:q}. Therefore, for $y_i, i\in[2]$, we obtain the following upper bound:
\begin{align}\label{key:ineq_y}
\dot y_i
&= (\beta_i (1 - y_1 - y_2)q_{z_\S}(1 - z_i (1 - q)) - \delta_i) y_i \nonumber \\
&\le (\beta_i (q_{[c_i>c_\mcd]} + \epsilon_i(1 - q)) - \delta_i) y_i \nonumber \\
&= (q_{[c_i>c_\mcd]} \beta_i - \delta_i + \epsilon_i \beta_i (1 - q)) y_i
\end{align}
Under the hypothesis, $q_{[c_i>c_\mcd]}\beta_i / \delta_i < 1$, there exists $t_0\in[0,\infty)$ s.t. $\epsilon_i$ is small enough to guarantee that $\dot{y}_i < 0$ for all subsequent time (note: the behavior of $y_i$ before such time is irrelevant since $y \in \Delta, \forall t$). Therefore, $y \to \bs{0}$ as $t \to \infty$; moreover, since $q_{[c_i>c_\mcd]} \beta_i - \delta_i + \epsilon_i \beta_i (1 - q) < -k$ for some constant $k$ for all $t\ge t_0$, we have the bound $y_i(t) \le y_i(0) \exp(-kt)$, and $y_i$ decays exponentially fast.
Let $t_1 > t_0$ be large enough s.t. $y_i(t_1), i\in[2]$ are small enough so that $r^\top y < c_\mcd$. Then, $\forall t \ge t_1$, $\dot z_\S = z_\S (1 - z_\S) (2 \textbf{r}^\top\textbf{y} - c_\mathcal{D}) < 0$, provided that $z_\S(t_1)\in(0,1)$.
If $z_\S(0) = 0$ there is no issue, since it will remain at $0$ (we ignore $z_\S(0) = 1$ since it always precludes convergence and excludes us from $\mathcal{D}$).
Thus, for $z_\S$ to converge to $0$, it is sufficient show that $z_\S(0)\in(0,1)$ implies $z_\S(t) < 1, \forall t$.
Observe that $2r^\top y$ is less than the constant $2\max(r_1, r_2)$. We upper bound $\dot{z}_\S$ and solve the ODE:
\begin{align}\label{key:ineq_zs}
\dot{z}_\S(t) &= z_\S(t)(1 - z_\S(t))(2(r_1 y_1(t) + r_2 y_2(t)) - c_\mcd) \nonumber \\
\dot{z}_\S(t) &\le z_\S(t)(1 - z_\S(t))(2\max(r_1,r_2) - c_\mcd) \nonumber \\
z_\S(t) &\le \frac{\exp((2\max(r_1,r_2) - c_\mcd)t)}{\exp((2\max(r_1,r_2) - c_\mcd)t) + z_\S^{-1}(0) - 1}
\end{align}
Observe that $z_\S(0)\in(0,1) \implies z_\S^{-1}(0) - 1 > 0$. Then, inequality~\eqref{key:ineq_zs} gives $z_\S(0)\in(0,1) \implies z_\S(t) < 1, \forall t < \infty$; moreover, by similar argumentation as for $z_i$, $z_\S$ converges exponentially fast. Therefore, $z_\S$ will not hit the upper boundary in finite time and will be in the interior at $t_1$, and $z_\S \to 0$ as $t\to\infty$.\\
Thus, when $q_{[c_i > c_\mcd]} \beta_i / \delta_i < 1, i\in[2]$, $z_i(0) \ne [c_i < c_\mcd]$, and $z_\S(0) < 1$, we have $\lim_{t\to\infty} x = [0,0,0,[c_1>c_\mcd],[c_2>c_\mcd]] = x^*$, and $x^*$ is exponentially stable with domain of attraction $\mathcal{D}$.~\qed

We have the following remarks.

\begin{remark}\label{rem:almost:global}[Almost global]
The domain of attraction $\mathcal{D}$, from equation~\eqref{eq:D}, in Theorem~\ref{thm:globalstability:DFE0} is said to be \emph{almost global}, since the set $\Gamma - \mathcal{D}$ has measure zero. The domain, $\mathcal{D}$ encompasses the whole interior of $\Gamma$ as well as part of its boundary $\partial \Gamma$. Consequently, Theorem~\ref{thm:globalstability:DFE0} guarantees almost global exponential stability of the DFE in question.
\end{remark}

\begin{remark}\label{rem:p_DFE0}[Epidemiological interpretation] From an epidemiological viewpoint, Theorem~\ref{thm:globalstability:DFE0} says that as long as the reproduction numbers scaled by the interaction factor are less than unity (and exposing others to the virus is socially costly), we expect infected individuals to engage in social distancing, then both viruses to die out, and then for healthy people to stop social distancing.
\end{remark}

As the findings in Section~\ref{sec:arbitrary:costs} and Section~\ref{sec:stability:DFE} highlight, the admission of arbitrary costs in system~\eqref{eq:main} leads to a much more complicated analysis for the existence and stability of various 
\seb{equilibria} of system~\eqref{eq:main}, Consequently, from hereon, we will restrict the possible costs that system~\eqref{eq:main} admits. In particular, we have the following Assumption.
\begin{assumption}\label{assume:c}
    The costs satisfy the following: $c_i > c_\mcd$ for $i=1,2$.
\end{assumption}
This assumption asserts that the cost of socializing while infected is higher than the cost associated with not social distancing. This scenario is plausible when either local authorities implement strict lockdown policies, or when a community's social norms discourage social activity when sick.

\section{Existence of non-DFE equilibria}
\label{sec:analysis}
In this section, we identify all the non-DFE 
equilibria
of 
system \eqref{eq:main}. From  Assumption \ref{assume:c} ($c_i > c_\mcd$), we have $\dot{z}_i > 0$ at any state in $\interior\Gamma$. In other words, infected individuals never have an incentive to not practice social distancing. As a result, any 
\seb{equilibrium} with $z_i = 0$ for any $i=1,2$ cannot be locally asymptotically stable. Therefore, we will rule these out and restrict attention to 
\seb{equilibria} of the form $(y_1,y_2,z_S,1,1)$, with $y_1,y_2,z_S \in [0,1]$. 


In order to characterize the full set of 
\seb{equilibria}, we make use of the following functions.
\begin{align}
    \begin{aligned}
        h_1(y_1,y_2,z_\S,z_1) &:= - \delta_1+ \beta_1(1-y_1-y_2) \times 
        (qz_\S + (1-z_\S))(qz_1 + (1-z_1))   \\
        h_2(y_1,y_2,z_\S,z_2) &:= - \delta_2+ \beta_2(1-y_1-y_2) 
        (qz_\S + (1-z_\S))(qz_2 + (1-z_2))   \\
        h_\S(y_1,y_2) &:= 2(r_1y_1+r_2y_2)-c_\mcd
    \end{aligned}
\end{align}
%
%

In the next Lemma, we identify all unilateral equilibria in the system \eqref{eq:main}, as well as conditions on the parameters for which each one lies in the state space $\Gamma$.

\begin{lemma}[\textbf{Unilateral Equilibria}]\label{lem:unilateral}
    Consider system~\eqref{eq:main} under Assumptions~\ref{assume:betadelta}-\ref{assume:r}.
    With $i\in[2]$, the following characterizes all unilateral equilibria of the system:

    \noindent 1) The equilibrium $\bs{p}_{i0}$, defined by $y_i = 1 - \frac{\delta_i}{q\beta_i}$, $y_{3-i} = 0$, $z_S = 0$, and $z_1=z_2=1$, exists if and only if $\frac{\delta_i}{q\beta_i} < 1$.

    \noindent 2) The equilibrium $\bs{p}_{i1}$, defined by $y_i = 1 - \frac{\delta_i}{q^2\beta_i}$, $y_{3-i} = 0$, $z_S = 0$, and $z_1=z_2=1$, exists if and only if $\frac{\delta_i}{q^2\beta_i} < 1$.

    \noindent 3) The equilibrium $\bs{p}_{i\S}$, defined by $y_i = \frac{c_\mcd}{2r_i}$, $y_{3-i} = 0$, $z_S = \frac{1}{1-q} - \frac{\delta_i}{\beta_i(1-\frac{c_\mcd}{2r_i})q(1-q)}$, and $z_1=z_2=1$, exists if and only if $\frac{c_\mcd}{2r_i} \leq 1$ and
    \begin{equation}\label{eq:p1S_condition}
        q \left(1-\frac{c_\mcd}{2r_i}\right)^{-1} < q^2 R_i^0 < \left(1-\frac{c_\mcd}{2r_i}\right)^{-1}.
    \end{equation}
\end{lemma}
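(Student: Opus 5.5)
The plan is a direct case analysis of the equilibrium equations of \eqref{eq:main}, restricted as in the discussion preceding the lemma to points with $z_1=z_2=1$. Since $\dot z_1=\dot z_2=0$ holds identically when $z_1=z_2=1$, only the first three equations of \eqref{eq:main} need to vanish. I write them with the functions $h_1,h_2,h_\S$:
\begin{equation*}
y_1 h_1 = 0,\qquad y_2 h_2 = 0,\qquad z_\S(1-z_\S)h_\S(y_1,y_2)=0 .
\end{equation*}
For a unilateral equilibrium with $y_i>0$ and $y_{3-i}=0$, the equation $y_{3-i}h_{3-i}=0$ holds automatically. The remaining conditions are therefore $h_i(y_i,0,z_\S,1)=0$ and $z_\S(1-z_\S)h_\S(y_i,0)=0$. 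At $z_i=1$ we have $q_{z_i}=q$, so $h_i=0$ becomes
\begin{equation*}
\beta_i(1-y_i)\,q\,q_{z_\S}=\delta_i .
\end{equation*}
The $z_\S$ equation splits into three exhaustive cases, $z_\S=0$, $z_\S=1$, or $z_\S\in(0,1)$ with $h_\S=0$, and these produce $\bs{p}_{i0}$, $\bs{p}_{i1}$ and $\bs{p}_{i\S}$ respectively. This exhaustiveness gives the ``characterizes all'' part of the statement.

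In the first two cases $q_{z_\S}=1$ or $q_{z_\S}=q$. Solving for $y_i$ gives $y_i=1-\delta_i/(q\beta_i)$ or $y_i=1-\delta_i/(q^2\beta_i)$, respectively. For $\bs{p}_{i1}$ the value is $z_\S=1$; the ``$z_S=0$'' in item 2 is a typo, consistent with Table~\ref{table:FPs}, and I will note this. Membership in $\Gamma$ together with $y_i>0$ reduces exactly to $\delta_i/(q\beta_i)<1$, resp. $\delta_i/(q^2\beta_i)<1$. The bound $y_i\le 1$ is automatic because $\delta_i,\beta_i>0$ by Assumption~\ref{assume:betadelta}. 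This gives items 1) and 2).

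In the third case, $h_\S(y_i,0)=0$ forces $y_i=c_\mcd/(2r_i)$, which requires $c_\mcd/(2r_i)\le 1$ to lie in $\Delta$. To have $q_{z_\S}$ well-defined below I will need $y_i<1$. Substituting into $h_i=0$ gives
\begin{equation*}
q_{z_\S}=\frac{\delta_i}{\beta_i q(1-\frac{c_\mcd}{2r_i})},
\end{equation*}
and since $z_\S=(1-q_{z_\S})/(1-q)$, this yields the stated formula for $z_\S$. The key step is translating $z_\S\in(0,1)$, i.e.\ $q<q_{z_\S}<1$, into parameter conditions:
\begin{itemize}
\item $q_{z_\S}<1$ is equivalent to $qR_i^0(1-\frac{c_\mcd}{2r_i})>1$, i.e.\ $q(1-\frac{c_\mcd}{2r_i})^{-1}<q^2R_i^0$;
\item $q_{z_\S}>q$ is equivalent to $q^2R_i^0(1-\frac{c_\mcd}{2r_i})<1$.
\end{itemize}
Together these give \eqref{eq:p1S_condition}. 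Conversely, under these inequalities the constructed point satisfies all equilibrium equations and lies in $\Gamma$, which gives the ``if and only if''.

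I expect the main obstacle to be bookkeeping rather than a deep step. There are three points to handle carefully:
\begin{itemize}
\item strict versus non-strict inequalities at the boundary $c_\mcd/(2r_i)=1$, where the denominator vanishes and \eqref{eq:p1S_condition} then cannot hold, so the effective condition is strict;
\item checking that the endpoint values $z_\S\in\{0,1\}$ in case three coincide with $\bs{p}_{i0}$ or $\bs{p}_{i1}$, so that no equilibria are double-counted or missed;
\item justifying the restriction to $z_1=z_2=1$ (with $z_{3-i}$ fixed at $1$ even though it is undefined when $y_{3-i}=0$), by appeal to Assumption~\ref{assume:c} and Lemma~\ref{lem:arbit:equm:stability}.
\end{itemize}
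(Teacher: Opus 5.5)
Your proposal is correct and follows essentially the same route as the paper. Both split on $z_S=0$, $z_S=1$, and $z_S\in(0,1)$ with $h_S=0$, then solve $h_i=0$ for $y_i$ (resp.\ $z_S$) and translate $y_i>0$ and $0<z_S<1$ into the stated parameter inequalities. Your observation that item 2 should read $z_S=1$ is also correct; this matches the paper's own proof and Table~\ref{table:FPs}.
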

\textit{Proof:} We focus on proving the case $i=1$, as the case $i=2$ will follow completely analogous arguments. Thus, our task is to identify all 
\seb{equilibria} with $y_1 > 0$ and $y_2 = 0$. We prove each part separately.
    
    \noindent(1) In this part, we suppose that $z_S = 0$, which sets $\dot{z}_S = 0$. We then need to solve $h_1(y_1,0,0,1) = \beta_1(1-y_1) q - \delta_1 = 0$, resulting in $y_1 = 1 - \frac{\delta_1}{q\beta_1}$. It holds that $y_1 > 0$ if and only if $\frac{\delta_i}{q\beta_i} < 1$. 

    \noindent(2) In this part, we suppose that $z_S = 1$, which gives $\dot{z}_S = 0$. We then need to solve $h_1(y_1,0,1,1) = \beta_1(1-y_1) q^2 - \delta_1 = 0$, resulting in $y_1 = 1 - \frac{\delta_1}{q^2\beta_1}$. It holds that $y_1 > 0$ if and only if $\frac{\delta_i}{q^2\beta_i} < 1$.

    \noindent(3) In this part, we suppose that $z_S \in (0,1)$. In order for $\dot{z}_S = 0$, we need that $h_S(y_1,0) = 0$, which gives $y_1 = \frac{c_\mcd}{2r_1}$. For $y_1 \in (0,1]$, it is required that $\frac{c_\mcd}{2r_1} \leq 1$. In order for $\dot{y}_1 = 0$, we need to solve $h(\frac{c_\mcd}{2r_1},0,z_S,1) = 0$, which yields $z_S = \frac{1}{1-q} - \frac{\delta_1}{\beta_1(1-\frac{c_\mcd}{2r_1})q(1-q)}$. For $z_S > 0$, it is required that $\frac{\delta_i}{q\beta_i} < (1-\frac{c_\mcd}{2r_i})$. For $z_S < 1$, it is required that $q (1-\frac{c_\mcd}{2r_i}) < \frac{\delta_i}{q\beta_i}$.~$\square$

The equilibrium $\bs{p}_{i0}$ indicates a unilateral endemic state in which virus $i$ survives, and nobody in the population is social distancing. The equilibrium $\bs{p}_{i1}$ indicates a unilateral endemic state in which virus $i$ survives and everybody in the population is social distancing. The equilibrium $\bs{p}_{iS}$ indicates a unilateral endemic state in which virus $i$ survives and a fraction of the population is social distancing.

In the next Lemma, we identify the set of all coexistience equilibria and conditions for when they exist in the state space $\Gamma$. 

\begin{lemma}[\textbf{Coexistence equilibria}]\label{lem:coexist}
    Consider system~\eqref{eq:main} under Assumptions \ref{assume:betadelta}-\ref{assume:r}.
    Coexistence equilibria in system \eqref{eq:main} can exist only if $R^0_1=R^0_2=:R^0$. 

    \vspace{1mm}
    
    \noindent 1) The line, $\mcl_0$, of coexistence equilibria defined as follows:
    \begin{equation}\label{eq:line0}
        \mathcal{L}_0 := \left\{ (y_1,y_2,0,1,1) \;\middle|\; y_1 + y_2 = 1 - \frac{1}{q R^0} \right\},
    \end{equation}
    exists if and only if $qR^0 > 1$.

    \vspace{1mm}
    
    \noindent 2) The line, $\mcl_1$, of coexistence equilibria defined as follows:
    \begin{equation}\label{eq:line1}
        \mathcal{L}_1 := \left\{ (y_1,y_2,1,1,1) \;\middle|\; y_1 + y_2 = 1 - \frac{1}{q^2R^0} \right\}
    \end{equation}
    exists if and only if $q^2 R^0 > 1$.

    \vspace{1mm}
    
    \noindent 3) The curve, $\mcl_\S$, of coexistence equilibria defined as follows:
    Let $\mcal{L}_S$ be the set of points of the form $(y_1,y_2,z_S,1,1)$, parameterized by the value $y_1$, where $y_2 = \frac{c_\mcd}{2r_2} - \frac{r_1}{r_2}y_1$,  $z_S = \frac{1}{1-q} - \frac{1}{q(1-q)R^0(1-y_1-y_2)}$, and $y_1$ lies in the range
    \begin{equation}\label{eq:lineS}
        \underline{B} < y_1 < \bar{B},
    \end{equation}
    where
    \begin{align}
        \underline{B}:&=\max\left\{ 0, \frac{r_2}{r_1}\left( \frac{c_\mcd}{2r_2} - 1 \right), \ \frac{1 - \frac{c_\mcd}{2r_2} - \frac{1}{q^2R^0}}{1 - \frac{r_1}{r_2}} \right\} \label{eq:lineS:lower} \\
    \text{ and  } 
        \bar{B} :&= \min\left\{ \frac{c_\mcd}{2r_1}, \ \frac{1 - \frac{c_\mcd}{2r_2}}{1 - \frac{r_1}{r_2}}, \ \frac{1 - \frac{c_\mcd}{2r_2} - \frac{1}{qR^0}}{1 - \frac{r_1}{r_2}}, 1 \right\}. \label{eq:lineS:upper}
    \end{align}
    Then $\mcal{L}_S$ is a continuum of equilibria contained in the state space $\Gamma$ if and only if $\underline{B} < \bar{B}$.
\end{lemma}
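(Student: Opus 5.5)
The plan is to work at equilibria of the form $(y_1,y_2,z_\S,1,1)$ with $y_1,y_2>0$. This restriction comes from Assumption~\ref{assume:c} and the discussion at the start of Section~\ref{sec:analysis}. At such a point, $\dot y_i=0$ with $y_i>0$ forces $h_i=0$ for both $i$, that is,
\begin{equation*}
\beta_i(1-y_1-y_2)\,q_{z_\S}\,q=\delta_i,\qquad i\in[2].
\end{equation*}
The factor $(1-y_1-y_2)q_{z_\S}q$ is common to both equations and is strictly positive, since $q_{z_\S}\ge q>0$ and $1-y_1-y_2>0$ (otherwise $\delta_i=0$). Dividing the two equations therefore gives $\beta_1/\delta_1=\beta_2/\delta_2$, i.e.\ $R_1^0=R_2^0=:R^0$. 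This proves the necessity claim. The remaining equilibrium condition is the single relation
\begin{equation*}
(1-y_1-y_2)\,q_{z_\S}\,q\,R^0=1,
\end{equation*}
together with $\dot z_\S=0$, i.e.\ $z_\S\in\{0,1\}$ or $h_\S(y_1,y_2)=0$. I would split into these three cases.

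For $z_\S=0$ we have $q_{z_\S}=1$, so the relation becomes $y_1+y_2=1-1/(qR^0)$. This gives the line $\mcal L_0$. It contains points with $y_1,y_2>0$ and $y_1+y_2\le1$ exactly when $1-1/(qR^0)>0$, i.e.\ $qR^0>1$. For $z_\S=1$ we have $q_{z_\S}=q$, which gives $\mcal L_1$ under $q^2R^0>1$ by the identical argument.

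For $z_\S\in(0,1)$, the condition $h_\S=0$ gives $y_2=\frac{c_\mcd}{2r_2}-\frac{r_1}{r_2}y_1$. Substituting into the relation and solving $q_{z_\S}=1/(qR^0(1-y_1-y_2))$ for $z_\S$ yields the stated formula. Everything now depends on $y_1$, so the task is to translate each membership constraint into a bound on $y_1$:
\begin{itemize}
\item $y_1>0$ gives the lower bound $0$.
\item $y_2>0$ gives $y_1<\frac{c_\mcd}{2r_1}$.
\item $y_2\le 1$ gives $y_1\ge\frac{r_2}{r_1}\bigl(\frac{c_\mcd}{2r_2}-1\bigr)$.
\item $y_1\le 1$ gives the bound $1$.
\end{itemize}
The remaining constraints involve $y_1+y_2=\frac{c_\mcd}{2r_2}+\bigl(1-\frac{r_1}{r_2}\bigr)y_1$. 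Here I use Assumption~\ref{assume:r} to get $1-r_1/r_2>0$, so dividing by it preserves inequality directions:
\begin{itemize}
\item $y_1+y_2\le1$ gives the upper bound $\bigl(1-\frac{c_\mcd}{2r_2}\bigr)/\bigl(1-\frac{r_1}{r_2}\bigr)$.
\item $z_\S>0$ is equivalent to $q_{z_\S}<1$, i.e.\ $1-y_1-y_2>1/(qR^0)$, which gives the third entry of $\bar B$.
\item $z_\S<1$ is equivalent to $q_{z_\S}>q$, i.e.\ $1-y_1-y_2<1/(q^2R^0)$, which gives the third entry of $\underline B$.
\end{itemize}
Collecting these, the set of admissible $y_1$ is exactly the open interval $(\underline B,\bar B)$. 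It is nonempty iff $\underline B<\bar B$, and its image under the continuous parametrization is a curve of equilibria in $\Gamma$.

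The main obstacle is bookkeeping strictness and the boundary case $y_1+y_2=1$. At $y_1+y_2=1$ we would have $q_{z_\S}=\infty$, which is impossible, so the bound $y_1+y_2\le 1$ is never active given the $z_\S$ constraints. This means the weak versus strict inequalities in $\underline B,\bar B$ do not change the set. I would check this carefully, and also verify that no equilibrium with $y_i>0$ can have $z_i=0$ under the stated restriction, so that the enumeration into the three cases is exhaustive.
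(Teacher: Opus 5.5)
Your proposal is correct and follows the paper's own route: fix $z_1=z_2=1$, split on $z_\S\in\{0\}$, $\{1\}$, $(0,1)$, and turn membership in $\Gamma$ together with $z_\S\in(0,1)$ into bounds on $y_1$. Two small differences from the paper are fine: you extract $R_1^0=R_2^0$ once, from the common positive factor $(1-y_1-y_2)q_{z_\S}q$, rather than case by case; and you are more careful than the paper about which of the weak bounds are actually inactive.

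The one item to drop is your closing plan to \emph{verify} that no equilibrium with $y_i>0$ has $z_i=0$, because that check would fail. Such equilibria do exist, for example $(y_1,y_2,0,1,0)$ with $y_1+y_2=1-1/R_2^0$ whenever $qR_1^0=R_2^0>1$. This even defeats the ``only if'' claim, since there $R_1^0\neq R_2^0$. So exhaustiveness of your three cases cannot be proved. It is the paper's standing restriction to equilibria of the form $(y_1,y_2,z_\S,1,1)$, motivated by the instability of $z_i=0$ equilibria under Assumption~\ref{assume:c}, and you should state it as an explicit restriction.
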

\textit{Proof:}
    We prove each case separately.

    \noindent(1) Suppose that $z_S = 0$. For $\dot{y}_1 = 0$, we solve $h_1(y_1,y_2,0,1) = \beta_1(1-y_1-y_2) \cdot q - \delta_1 = 0$, resulting in $y_1 + y_2 = 1 - \frac{\delta_1}{q\beta_1}$. For $\dot{y}_2 = 0$, we solve $h_2(y_1,y_2,0,1) = 0$, resulting in $y_1 + y_2 = 1 - \frac{\delta_2}{q\beta_2}$. In order for an 
    \seb{equilibrium} with $y_1,y_2 > 0$ satisfying these equalities to exist, it is required that $\frac{\delta_1}{\beta_1} = \frac{\delta_2}{\beta_2}$. Then, the line of equilibria $\mcal{L}_0$ exists if and only if $0<1-\frac{1}{qR^0} < 1$. Since $R^0, q > 0$ by Assumptions \ref{assume:betadelta} and \ref{assume:q}, this is equivalent to $qR^0 > 1$.

    \noindent(2) This case is analogous to case (1), and so we omit these details for brevity.

    \noindent(3) Here, suppose that $z_\S \in (0,1)$. For $\dot{z}_S = 0$, it holds that $h_S(y_1,y_2) = 0$, or that $y_2 = \frac{c_\mcd}{2r_2} - \frac{r_1}{r_2}y_1$. In order for $y_2 \in (0,1)$, it must hold that
    \begin{equation}\label{eq:coex_range1}
        \frac{r_2}{r_1}\left( \frac{c_\mcd}{2r_2} - 1 \right) < y_1 <    \frac{c_\mcd}{2r_1}.
    \end{equation}
    
    Let us denote $\ell(y_1) := 1 - y_1 - y_2 = (1-\frac{c_\mcd}{2r_2}) - y_1(1-\frac{r_1}{r_2})$. Note it must hold that $\ell(y_1) \geq 0$, or 
    \begin{equation}\label{eq:coex_range2}
        y_1 \leq \frac{1 - \frac{c_\mcd}{2r_2}}{1 - \frac{r_1}{r_2}}.
    \end{equation}

    For $\dot{y}_1 = 0$, we solve $h_1(y_1,y_2,z_S,1) = 0$, which yields $z_\S = \frac{1}{1-q} - \frac{\delta_1}{q(1-q)\beta_1 \ell(y_1)}$. Similarly, for $\dot{y}_2 = 0$, we solve $h_2(y_1,y_2,z_S,1) = 0$, which yields $z_\S = \frac{1}{1-q} - \frac{\delta_2}{q(1-q)\beta_2 \ell(y_1)}$. For these two equations to be satisfied, it is required that $\frac{\delta_1}{\beta_1} = \frac{\delta_2}{\beta_2}$.

    Now, the condition that $z_S \in (0,1)$ is equivalent to
    \begin{equation}\label{eq:coex_range3}
        \frac{1 - \frac{c_\mcd}{2r_2} - \frac{1}{q^2R^0}}{1 - \frac{r_1}{r_2}} < y_1 < \frac{1 - \frac{c_\mcd}{2r_2} - \frac{1}{qR^0}}{1 - \frac{r_1}{r_2}}.
    \end{equation}

    Putting together all conditions on $y_1$ \eqref{eq:coex_range1}, \eqref{eq:coex_range2}, and \eqref{eq:coex_range3}, in addition to the restriction $y_1 \in (0,1)$, we conclude that a continuum of equilibria defined by $\mcal{L}_S$ lies in the state space $\Gamma$ if and only if $\underline{B} < \bar{B}$. ~$\square$\\   
%
Lemma \ref{lem:coexist} asserts that there are no isolated coexistence equilibrium. there are either no coexistence equilibria, or an infinite number of them situated on continuum of coexistence equilibria. We note that the conditions for the existence of the three lines are not mutually exclusive. We, also, remark that coexistence equilibria can only exist when the reproduction numbers of both viruses are identical, i.e. $R^0 = \beta_1/\delta_1 = \beta_2/\delta_2$.
Interestingly, this necessary condition does not depend on the risk perception parameters $r_1,r_2$.

\vspace{-2mm}

\section{Stability analysis of unilateral equilibria}\label{sec:unilateral:FPs}
In this section, we investigate the stability of the unilateral \seb{equilibrium}, $\bs{p}_{10} = [1 - \frac{\delta_1}{q\beta_1},0,0,1,1]^\top$. 

Let $R_i := q_{z_\S(t)} q_{z_i(t)} \frac{\beta_i}{\delta_i}$ be the effective reproduction number of virus $i$ at time $t$.
Let $R_i^* := q_{z_\S^*} q_{z_i^*} \frac{\beta_i}{\delta_i}$ be the effective reproduction number at some equilibrium point $x^*$.

We define the following domain:
\begin{equation}\label{eq:D_p_10}
\mathcal{D}_{\bs{p}_{10}} := \Delta\times(0,1]\times[0,1]\times[0,1)\times(0,1]^2
\end{equation}
Note that $\Gamma\supset\mathcal{D}_{\bs{p}_{10}}\supset\interior\Gamma$.  We have the following result

\begin{theorem}\label{thm:fp:global}
   Consider system~\eqref{eq:main} under Assumptions~\ref{assume:betadelta} and~\ref{assume:c}.
   The equilibrium point $\bs{p}_{10} = [1 - (R_1^*)^{-1},0,0,1,1]^\top$ exists, if $R_1^* > 1$. If $\max(1, R_2^*) < R_1^*$ and $R_1^* < (1 - \frac{c_\mcd}{2 r_1})^{-1}$, it is exponentially stable with domain of attraction $\mathcal{D}_{\bs{p}_{10}}$ (as defined in \ref{eq:D_p_10}); otherwise, it is unstable.
\end{theorem}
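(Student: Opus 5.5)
The proof has three parts: existence from Lemma~\ref{lem:unilateral}, local stability by linearization (the Jacobian turns out to be triangular), and attraction from $\mathcal{D}_{\bs{p}_{10}}$ by a cascade of comparison arguments in the spirit of the proof of Theorem~\ref{thm:globalstability:DFE0}.

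\emph{Existence.} At $\bs{p}_{10}$ we have $z_\S^*=0$ and $z_1^*=z_2^*=1$, so $R_1^*=q\beta_1/\delta_1$ and $R_2^*=q\beta_2/\delta_2$. Existence under $R_1^*>1$ is then exactly item 1) of Lemma~\ref{lem:unilateral}.

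\emph{Local stability.} I would evaluate \eqref{eq:jacobian:gen} at $\bs{p}_{10}$, using $y_1^*=1-1/R_1^*$, $s^*=1/R_1^*$, $y_2^*=0$ and $z_\S^*(1-z_\S^*)=0$. The entries $J_{21}$, $J_{23}$, $J_{25}$ vanish because they carry the factor $y_2$. The off-diagonal entries of row~$3$ vanish because they carry the factor $z_\S(1-z_\S)$. The remaining rows are diagonal. Hence $J(\bs{p}_{10})$ is upper triangular, with diagonal
\begin{equation*}
-q\beta_1 y_1^*,\quad \delta_2\left(\tfrac{R_2^*}{R_1^*}-1\right),\quad 2r_1y_1^*-c_\mcd,\quad -(c_1-c_\mcd),\quad -(c_2-c_\mcd).
\end{equation*}
The first entry is negative, and the last two are negative by Assumption~\ref{assume:c}. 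The second is negative iff $R_2^*<R_1^*$. The third is negative iff $1-1/R_1^*<c_\mcd/(2r_1)$, which is equivalent to $R_1^*<(1-\frac{c_\mcd}{2r_1})^{-1}$; this holds automatically if $c_\mcd\ge 2r_1$. So the stated condition gives a Hurwitz Jacobian and hence local exponential stability. If the condition fails strictly, some eigenvalue is positive and Hartman--Grobman gives instability. The equality cases are non-generic, and I would treat them as excluded.

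\emph{Attraction from $\mathcal{D}_{\bs{p}_{10}}$.} The argument is a cascade.
\begin{enumerate}[label=(\alph*)]
\item As in the proof of Theorem~\ref{thm:globalstability:DFE0}, the explicit logistic solution gives $z_i(t)\to1$ exponentially whenever $z_i(0)>0$.
\item Next I show that virus~$2$ is excluded. Define
\begin{equation*}
V:=\frac{\ln y_2}{\delta_2}-\frac{\ln y_1}{\delta_1}.
\end{equation*}
Then $\dot V=s\,q_{z_\S}\left(\frac{\beta_2}{\delta_2}q_{z_2}-\frac{\beta_1}{\delta_1}q_{z_1}\right)$. Once the $z_i$ are close to $1$, the bracket is at most $-\kappa<0$, because $R_2^*<R_1^*$. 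Since $q_{z_\S}\ge q$, it remains to show that $s$ is eventually bounded below by a positive constant. I would argue this from $\dot s\ge \delta_1y_1+\delta_2y_2-(\beta_1+\beta_2)s$: if $s$ were small, then $y_1+y_2$ is near $1$ and $s$ increases. Consequently $V\to-\infty$, i.e.\ $y_2^{1/\delta_2}/y_1^{1/\delta_1}\to0$, and since $y_1\le1$ this forces $y_2\to0$.
\item To get $z_\S\to0$, I bound $y_1$ above. Because $q_{z_\S}\le1$ and $q_{z_1}\to q$, we have $\dot y_1\le y_1\left(\beta_1(1-y_1)(q+o(1))-\delta_1\right)$. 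Logistic comparison then gives $\limsup y_1\le y_1^*$. Since $2r_1y_1^*<c_\mcd$ and $y_2\to0$, eventually $h_\S<0$. The bound \eqref{key:ineq_zs} shows that $z_\S(0)<1$ keeps $z_\S<1$ for all finite time, so $z_\S\to0$ exponentially.
\item Finally, $y_1$ obeys an asymptotically autonomous scalar equation whose limit is $\dot y_1=y_1(q\beta_1(1-y_1)-\delta_1)$. Since $y_1(0)>0$ and $R_1^*>1$, this limit equation has $y_1^*$ as a global attractor on $(0,1]$. By the theory of asymptotically autonomous systems (Thieme/Markus), or by a direct two-sided comparison, $y_1\to y_1^*$.
\end{enumerate}
Combined with local exponential stability, this gives exponential convergence from $\mathcal{D}_{\bs{p}_{10}}$.

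The main obstacle is step (b): the competitive exclusion of virus~$2$. It needs the uniform lower bound on $s$, and it must cope with the non-monotone coupling through $z_\S$. I would try first the Lyapunov-type log-ratio $V$ above, combined with the persistence estimate for $s$.
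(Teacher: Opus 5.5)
Your proposal is correct. Its skeleton matches the paper's: existence from Lemma~\ref{lem:unilateral}, the upper-triangular Jacobian at $\bs{p}_{10}$, and a cascade starting from $z_i\to1$. The global part, however, takes a genuinely different route at the two steps that matter.

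\emph{Exclusion of virus~2.} The paper first obtains $\limsup y_1\le y_1^*$. It then tries to exclude virus~2 with the scalar bound $\dot y_2\le -R_2\delta_2y_2^2$, obtained by substituting $y_1=y_1^*$ into $s=1-y_1-y_2$. But an upper bound on $y_1$ gives a \emph{lower} bound on $s$, so that inequality runs the wrong way. Your log-ratio $V$ avoids this because both viruses see the same factor $s\,q_{z_\S}$. Thus $\dot V$ is $s\,q_{z_\S}\ge q\,s$ times a bracket that tends to $R_2^*-R_1^*<0$. Combined with the persistence bound $\liminf s\ge\delta_{\min}/(\delta_{\min}+\beta_1+\beta_2)$, where $\delta_{\min}:=\min(\delta_1,\delta_2)$ and the bound follows from $\dot s\ge\delta_{\min}(1-s)-(\beta_1+\beta_2)s$, this gives exponential exclusion of virus~2. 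It needs no prior information on $y_1$ or $z_\S$.

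\emph{Ordering of the cascade.} The paper concludes $y_1\to y_1^*$ before proving $z_\S\to0$, even though the level to which $y_1$ is attracted depends on $q_{z_\S}$. Your ordering is the sound one: $\limsup y_1\le y_1^*$ from $q_{z_\S}\le1$ alone, then $z_\S\to0$, then a two-sided comparison for $y_1$.

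The price of your route is one extra ingredient, persistence of $s$. In exchange it closes the argument, whereas the paper's more direct scalar bound does not.

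Two small remarks. First, for the instability half, one eigenvalue with positive real part suffices by Lyapunov's indirect method, which is what the paper uses via \cite[Theorem~4.7]{khalil2002nonlinear}; Hartman--Grobman would additionally require hyperbolicity. Second, the case $y_2(0)=0$, where $V$ is undefined, should be noted separately; it is trivial since then $y_2\equiv0$.
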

\textit{Proof:} By hypothesis, $1 - (R_1^*)^{-1} \in(0,1)$, which by Lemma~\ref{lem:unilateral}, implies the existence of the equilibrium point $\bs{p}_{10}$.

Under Assumption~\ref{assume:c}, $z_i, i\in[2]$ approach $z_i^* = 1$ exponentially.
Let $t_1$ be large enough s.t. $\epsilon_{z_1} := z_1 - z_1^*$ is suitably small, and let $\Delta y_1 := y_1 - y_1^*$. Then, from the first line in equation~\eqref{eq:main}, we have the following constraints on $\dot{y}_1$:
\begin{align*}
\dot{y}_1
&= (R_1 s - 1) \delta_1 y_1 \\
&= (R_1 (1 - y_1 - y_2) - 1) \delta_1 y_1 \\
&= (R_1 (1 - (1 - R_1^{*-1} + \Delta y_1) - y_2) - 1) \delta_1 y_1 \\
&= - R_1 \delta_1 ((1 - R_1 / R_1^*) + \Delta y_1 + y_2 + O(\epsilon_{z_1})) y_1 \\
&\le - R_1 \delta_1 (\Delta y_1 + y_2 + O(\epsilon_{z_1})) y_1 
\end{align*}
Since $O(\epsilon_{z_1})$ is taken to be negligible, while $y_1 + y_2 < y_1^*$, we have $\dot{y}_1 > 0$. Alternately, $y_1 + y_2 > y_1^*$ implies $\dot{y}_1 < 0$. Clearly, this yields the upper bound, $\lim_{t\to\infty} y_1 \le y_1^*$.

Let $t_2 > t_1$ be large enough such that $\forall t \ge t_2, y_1(t) \le y^*$ as shown above and recall the assumption $R_2^* < R_1^*$. Using the second line of equation~\eqref{eq:main}, we have the following:
\begin{align}
\dot y_2 &= (R_2 s - 1) \delta_2 y_2  \nonumber \\
&\le (R_2 (1 - (1 - R_1^{*-1}) - y_2) - 1) \delta_2 y_2, \label{ineq:global:p10:y2:1}\\
&= (R_2 R_1^{*-1} - 1 - R_2 y_2) \delta_2 y_2 \nonumber \\
&\le -R_2 \delta_2 y_2^2, \label{ineq:global:p10:y2:2}
\end{align}
We substitute $y_1 = y_1^* = 1 - R_1^{*-1}$ in the second line of equation~\eqref{eq:main}, which, since $y^*$ is the maximum value that $y_1$ can take,  means that $y_1$ is upper bounded by $y^*$, and, therefore, we have inequality~\eqref{ineq:global:p10:y2:1}.
Thereafter, since by assumption $R_2^* < R_1^*$, we have $R_2 / R_1^* \le q$, and, since, by Assumption~\ref{assume:q}, $q \in (0,1)$, it must be that 
$R_2/R_1^* - 1 < 0$; yielding inequality~\eqref{ineq:global:p10:y2:2}. 
Observe that inequality~\eqref{ineq:global:p10:y2:2} holds for all $t \ge t_2$, and, together with Assumption~\ref{assume:betadelta}, implies $\lim_{t\to\infty} y_2 = 0$.

Next, we obtain a lower bound for $\dot{y}_1$ and solve the differential equation (in the first line of equation~\eqref{eq:main}) to get a lower bound for $y_1(t)$:
\begin{align}
\dot y_1 &= (R_1 s - 1) \delta_1 y_1 \nonumber \\
&\ge -\delta_1 y_1 \label{ineq:global:p10:2}\\
y_1(t) &\ge y_1(0) \exp(-\delta_1 t), \nonumber 
\end{align}
where inequality~\eqref{ineq:global:p10:2} is due to the fact that $s \in [0,1]$. 
Therefore, if $y_1(0) > 0$ then $y_1(t) > 0$ for all $t\in[0,\infty)$ (this is necessary to say that $y_1$ does not go extinct in the meantime). Let $t_3 > t_2$ be large enough such that $y_2 \approx 0$. Then, if $y_1 < y_1^*$ we have $\dot{y}_1 > 0$ as shown above, and so $\lim_{t\to\infty} y_1 = y_1^*$.
Therefore, if $\max(1, R_2^*) < R_1^*$ and $y_1(0) > 0$, then both the viruses will converge to $y^* = [1 - R_1^{*-1}, 0]$.

Let $t_4 > t_3$ be large enough s.t. $\epsilon_{R_1} := R_1 - R_1^*$ and $|y - y^*|$ are suitably small. Then, if $R_1^* < (1 - c_\mcd / 2 r_1)^{-1}$, $z_\S$ will converge to $0$, since:
\begin{align*}
\dot{z_\S} &= z_\S (1 - z_\S) (2 \textbf{r}\textbf{y} - c_\mathcal{D}) \\
&= z_\S (1 - z_\S) (2 r_1(1 - R_1^{*-1}) - c_\mathcal{D}) \\
&< z_\S (1 - z_\S) (2 r_1(1 - (1 - \frac{c_\mathcal{D}}{2 r_1})) - c_\mathcal{D}) \\
&= z_\S (1 - z_\S) \cdot 0 \\
\dot{z_\S} &< 0 \label{ineq:global:p10:z_s}
\end{align*}
Therefore, $\lim_{t\to\infty} z_\S = 0$. Thus, if $y_1(0) > 0$, $z_\S(0) < 1$, and $z_i > 0, i\in[2]$, 
the system converges to $\bs{p}_{10}$.

We now show that if the hypothesis of the theorem is not met, $\bs{p}_{10}$ is unstable.
The Jacobian evaluated at $\bs{p}_{10}$ reads as in~\eqref{eq:jacobian:p10}.
\begin{equation}\label{eq:jacobian:p10}
J(\bs{p}_{10}) = \scriptsize
\begin{bmatrix}
    \delta_1 - \beta_1 q &
    \delta_1 - \beta_1 q &
    -\delta_1 (1 - \frac{\delta_1}{q\beta_1}) (1 - q) &
    -\delta_1 (1 - \frac{\delta_1}{q\beta_1}) \frac{(1 - q)}{q} &
    0 \\
    0 & \beta_2 \frac{\delta_1}{\beta_1} - \delta_2 & 0 & 0 & 0 \\
    0 & 0 & 2 r_1 (1 - \frac{\delta_1}{q\beta_1}) - c_\mathcal{D} & 0 & 0 \\
    0 & 0 & 0 & -(c_1 - c_\mathcal{D}) & 0 \\
    0 & 0 & 0 & 0 & -(c_2 - c_\mathcal{D}) \\
\end{bmatrix}
\end{equation}
Note that $ J(\bs{p}_{10})$ is upper triangular; its eigenvalues are the entries along its diagonal.
If $R_1^* < 1$, it is clear that $J(\bs{p}_{10})_{11} > 0$, and instability of the equilibrium follows from \cite[Theorem~4.7, statement ~ii)]{khalil2002nonlinear}.
Similarly, if $R_1^* < R_2^*$ (resp. $R_1^* > (1 - c_\mcd/(2r_1))^{-1}$), then $J(\bs{p}_{10})_{22} > 0$ (resp. $J(\bs{p}_{10})_{33} > 0$) and the equilibrium is unstable.
$\square$

\begin{remark}\label{rem:almost:global:p10}[Almost global]
The domain of attraction $\mathcal{D}_{\bs{p}_{10}}$, from equation (\ref{eq:D_p_10}), in Theorem~\ref{thm:fp:global} is almost global for reasons analogous to those in Remark~\ref{rem:almost:global}.
Consequently, Theorem~\ref{thm:fp:global} guarantees almost global exponential  stability of $\bs{p}_{10}$.
\end{remark}

We define the following domain:
\begin{equation}\label{eq:D_p_11}
\mathcal{D}_{\bs{p}_{11}} := (0,1]\times[0,1]\times[0,1)\times[0,1)^2
\end{equation}
Note that $\Gamma\supset\mathcal{D}_{\bs{p}_{11}}\supset\interior\Gamma$.

\begin{proposition}\label{prop:fp2:global}
   Consider system~\eqref{eq:main} under Assumptions~\ref{assume:betadelta} and~\ref{assume:c}.
   The equilibrium point $\bs{p}_{11} = [1 - (R_1^*)^{-1},0,1,1,1]^\top$ exists, if $R_1^* > 1$. If $\max(1, R_2^*) < R_1^*$ and $R_1^* > (1 - \frac{c_\mcd}{2 r_1})^{-1}$, then $\bs{p}_{11}$ is stable; otherwise, it is unstable.
   Furthermore, if $R_2^0 < q R_1^0$ then it is almost globally exponentially stable with domain of attraction $\mathcal{D}_{\bs{p}_{11}}$ (as defined in \ref{eq:D_p_11})
\end{proposition}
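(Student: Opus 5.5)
Since $\bs{p}_{11}$ has $z_\S^*=z_1^*=z_2^*=1$, we have $R_1^*=q^2R_1^0$ and $R_2^*=q^2R_2^0$, and $y_1^*=1-(R_1^*)^{-1}$. The plan has three parts: existence, local stability and instability from the Jacobian, and an almost-global argument modelled on the proof of Theorem~\ref{thm:fp:global}.

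\emph{Existence.} This is immediate from Lemma~\ref{lem:unilateral}, item~2). The condition $R_1^*>1$ is exactly $y_1^*\in(0,1)$.

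\emph{Local stability and instability.} Evaluate \eqref{eq:jacobian:gen} at $\bs{p}_{11}$. The rows for $z_1,z_2$ contribute the diagonal entries $-(c_i-c_\mcd)<0$ by Assumption~\ref{assume:c}. Because $y_2^*=0$, the second row reduces to the single diagonal entry $\beta_2 s^* q^2-\delta_2=\delta_2(R_2^*/R_1^*-1)$. Because $z_\S^*=1$, the factor $z_\S(1-z_\S)$ vanishes, so the third row is diagonal with entry $-(2r_1y_1^*-c_\mcd)$. The first row has diagonal entry $\beta_1(s^*-y_1^*)q^2-\delta_1=-\beta_1q^2y_1^*=-\delta_1(R_1^*-1)$. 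The Jacobian is therefore upper triangular, and its spectrum is
\[
\bigl\{-\delta_1(R_1^*-1),\ \delta_2(R_2^*/R_1^*-1),\ c_\mcd-2r_1y_1^*,\ c_\mcd-c_1,\ c_\mcd-c_2\bigr\}.
\]
The third eigenvalue is negative iff $y_1^*>c_\mcd/(2r_1)$, that is, iff $R_1^*>(1-\frac{c_\mcd}{2r_1})^{-1}$. So under the hypotheses all eigenvalues are negative, and local exponential stability follows from \cite[Theorem~4.7]{khalil2002nonlinear}. If any hypothesis fails strictly, some eigenvalue is positive and we get instability as in Theorem~\ref{thm:fp:global}.

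\emph{Almost global attraction under $R_2^0<qR_1^0$.} We follow the template of Theorem~\ref{thm:fp:global}.
\begin{itemize}
\item The $z_i$ converge to $1$ exponentially. Initial data with $z_i(0)=0$ must be handled separately, or the stated domain treated with care.
\item The key difficulty is that $z_\S$ is not yet known to be near $1$. We therefore need a bound on $R_2/R_1$ that is uniform in $z_\S$. Since $q_{z_\S}$ is common to both viruses,
\[
\frac{R_2}{R_1}=\frac{q_{z_2}R_2^0}{q_{z_1}R_1^0}\;\to\;\frac{R_2^0}{R_1^0}<q .
\]
This ratio is independent of $z_\S$, and I expect this is exactly why the hypothesis $R_2^0<qR_1^0$ appears.
\item Use the competition comparison $\frac{d}{dt}\ln(y_2^{\delta_1}/y_1^{\delta_2})$, or directly the ratio $y_2/y_1$. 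For equal $\delta$'s, $\frac{d}{dt}\ln(y_2/y_1)=s\,q_{z_\S}(\beta_2q_{z_2}-\beta_1q_{z_1})$. In general, bound $\dot y_2/y_2$ by $\delta_2(R_2s-1)$ and compare it with $\dot y_1/y_1$.
\item The aim is to show $y_2\to0$ whenever $y_1(0)>0$. Once that holds, $y_1$ follows the scalar logistic equation with effective number $R_1$, and we must show that $y_1$ eventually exceeds $c_\mcd/(2r_1)$.
\end{itemize}

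I expect this last step to be the main obstacle: the $y_1$–$z_\S$ subsystem is genuinely coupled. The argument I would try is as follows. While $z_\S$ is in its interior, $R_1\ge q^2R_1^0=R_1^*$. Hence $\liminf y_1\ge 1-(R_1^*)^{-1}=y_1^*>c_\mcd/(2r_1)$, using a lower comparison with the logistic equation at the worst-case rate $R_1^*$. Then $h_\S$ is eventually positive, so $z_\S\to1$ provided $z_\S(t)>0$; this holds because $z_\S(0)>0$ and the replicator solution never reaches $0$ in finite time.

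Finally, $y_1\to y_1^*$ and all rates are exponential by linearization near $\bs{p}_{11}$. The domain is the stated one modulo measure-zero boundary faces, as in Remark~\ref{rem:almost:global:p10}. I would double-check that the intended domain excludes $z_\S(0)=0$ rather than $z_\S(0)=1$, since $z_\S\to1$ requires $z_\S(0)>0$.
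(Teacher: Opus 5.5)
Your proposal is correct. For the almost-global part it takes a genuinely different route from the paper.

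\textbf{The paper's route.} The paper only sketches this step, by reusing the chain of Theorem~\ref{thm:fp:global}. It bounds $\dot y_2\le\delta_2y_2(R_2/R_1^*-1-R_2y_2)$. Because $z_\S$ is not yet controlled, it must take the worst case $R_2\lesssim qR_2^0$ (at $z_\S=0$, $z_2=1$) and compare it with $R_1^*=q^2R_1^0$; that is where the hypothesis $R_2^0<qR_1^0$ comes from. That inequality also tacitly uses $s\le 1-y_1^*-y_2$, i.e.\ $y_1\ge y_1^*$. At $\bs{p}_{11}$ this lower bound is only available after $y_2$ is already small.

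\textbf{Your route.} Your weighted log-ratio
$\frac{d}{dt}\ln(y_2^{\delta_1}/y_1^{\delta_2})=\delta_1\delta_2\,s\,q_{z_\S}\,(q_{z_2}R_2^0-q_{z_1}R_1^0)$
cancels the common factor $q_{z_\S}$ and needs no a priori information on $y_1$, so it avoids that circularity. It also needs only $R_2^0<R_1^0$, which at $\bs{p}_{11}$ is the same as $R_2^*<R_1^*$. So, contrary to your remark, the factor $q$ in $R_2^0<qR_1^0$ plays no role in your argument. Your route shows the ``furthermore'' hypothesis is redundant given local stability. The paper's route is a direct adaptation of the $\bs{p}_{10}$ template, paid for with a stronger assumption.

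The rest of your global argument is sound:
\begin{enumerate}[label=\roman*)]
\item the bound $R_1\ge q^2R_1^0=R_1^*$ holds on all of $\Gamma$, hence $\liminf y_1\ge y_1^*$ once $y_2\to0$;
\item then $h_\S>0$ eventually, so $z_\S\to1$ and $y_1\to y_1^*$;
\item exponential convergence follows from the Hurwitz Jacobian.
\end{enumerate}

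\textbf{Three points to tighten.}
\begin{enumerate}[label=\roman*)]
\item Integrating the log-ratio bound requires $\int_0^\infty s\,dt=\infty$. Since $R_1^0>1$ here, $\dot s=\delta_1y_1(1-R_1s)+\delta_2y_2(1-R_2s)>0$ whenever $s<1/\max_iR_i^0$. Hence $s$ is eventually bounded below by a positive constant.
\item The equivalence $y_1^*>c_\mcd/(2r_1)\iff R_1^*>(1-\frac{c_\mcd}{2r_1})^{-1}$ holds only when $c_\mcd<2r_1$. If $c_\mcd\ge 2r_1$, the eigenvalue $c_\mcd-2r_1y_1^*$ is always positive, so the statement implicitly needs $c_\mcd<2r_1$.
\item Your doubt about the domain is justified. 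The faces $z_\S=0$ and $z_i=0$ are invariant and never reach $\bs{p}_{11}$. The basin should therefore be $\{y_1(0)>0,\ z_\S(0),z_1(0),z_2(0)\in(0,1]\}\cap\Gamma$. The set $\mathcal{D}_{\bs{p}_{11}}$ as written does the opposite: it includes $0$ and excludes $1$ in these coordinates.
\end{enumerate}
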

The proof of Proposition~\ref{prop:fp2:global} follows almost identically to Theorem~\ref{thm:fp:global}. Local (in)stability can be adduced from the Jacobian. However, inequality \eqref{ineq:global:p10:y2:2} in Theorem~\ref{thm:fp:global} no longer holds uniformly. Thus for almost global stability we require $R_2/R_1^* < 1$ which holds by hypothesis that $R_2^0 < q R_1^0$.

Note that analogous results establishing stability for the 
\seb{equilibria} $\bs{p}_{20} = [0, 1 - \frac{\delta_2}{q\beta_2},0,1,1]$ and $\bs{p}_{21} = [0, 1 - \frac{\delta_2}{q^2\beta_2},0,1,1]$ can be obtained by means of suitable change of variables in Theorem~\ref{thm:fp:global} and Proposition~\ref{prop:fp2:global}, respectively.

We have the following remark.
\begin{remark}
    The condition ($R_1^* < (1 - \frac{c_\mcd}{2 r_1})^{-1}$) in Theorem~\ref{thm:fp:global}, and 
    the condition ($R_1^* > (1 - \frac{c_\mcd}{2 r_1})^{-1}$) in Proposition~\ref{prop:fp2:global} are mutually exclusive; it is not possible for a given system to satisfy both these conditions at the same time. Therefore, the equilibria $\bs{p}_{10}$ and $\bs{p}_{11}$ \emph{cannot} be simultaneously locally exponentially stable. In fact, if $\bs{p}_{10}$ (resp. $\bs{p}_{11}$) is locally exponentially stable, then $\bs{p}_{11}$ (resp. $\bs{p}_{10}$) is unstable.
\end{remark}

Next, we identify sufficient conditions for the (in)stability of $\bs{p}_{1\S}$. We have the following result.
\begin{proposition}\label{prop:p1s}
     Consider system~\eqref{eq:main} under Assumptions~\ref{assume:betadelta} and \ref{assume:c}.
     If, for each $i\in[2]$, $c_\mcd < 2 r_i$ and $q(1-\frac{c_\mcd}{2r_i})^{-1} < q^2 R_i^0 < (1-\frac{c_\mcd}{2r_i})^{-1}$, then the 
     \seb{equilibrium} $\bs{p}_{1\S}$ exists in $\Gamma$, and it is stable (resp. unstable) if $R_1^0 > R_2^0$ (resp. $R_1^0 < R_2^0$).
\end{proposition}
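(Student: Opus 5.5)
Existence in $\Gamma$ follows from item~3) of Lemma~\ref{lem:unilateral} with $i=1$. The hypothesis $c_\mcd<2r_1$ together with \eqref{eq:p1S_condition} gives $y_1^*=\frac{c_\mcd}{2r_1}\in(0,1)$ and $z_\S^*\in(0,1)$, with $z_1^*=z_2^*=1$ and $y_2^*=0$. The hypotheses for $i=2$ are not needed for existence; they only ensure the companion point $\bs{p}_{2\S}$ also exists, which makes the comparison meaningful.

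For stability, I would linearize and read off the spectrum from the generalized Jacobian~\eqref{eq:jacobian:gen} evaluated at $\bs{p}_{1\S}$. The following entries vanish there:
\begin{itemize}
\item the $(2,j)$ entries for $j\neq 2$, since they carry the factor $y_2^*=0$;
\item the $(4,\cdot)$ and $(5,\cdot)$ rows, which are diagonal with entries $-(c_1-c_\mcd)<0$ and $-(c_2-c_\mcd)<0$ by Assumption~\ref{assume:c};
\item the $(3,3)$ entry, since $h_\S(y^*)=0$.
\end{itemize}
After permuting, the Jacobian is block triangular. One block is the scalar $J_{22}=\beta_2 s^* q_{z_\S^*}q-\delta_2$. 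The other is the $2\times 2$ block on coordinates $(y_1,z_\S)$,
\[
M=\begin{bmatrix} -\beta_1 y_1^* q_{z_\S^*} q & -\beta_1 y_1^* s^* (1-q) q\\ 2r_1 z_\S^*(1-z_\S^*) & 0\end{bmatrix},
\]
where the $(1,1)$ entry uses $\beta_1 s^* q_{z_\S^*}q=\delta_1$. The remaining eigenvalues come from the $z_1,z_2$ diagonal entries, which are negative.

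The matrix $M$ has negative trace and positive determinant $2r_1 z_\S^*(1-z_\S^*)\beta_1 y_1^* s^*(1-q)q$, so it is Hurwitz. It remains to determine the sign of $J_{22}$. Using $s^* q_{z_\S^*} q=\delta_1/\beta_1$, I get $J_{22}=\delta_2(R_2^0/R_1^0-1)$. This is negative if $R_1^0>R_2^0$, in which case all eigenvalues have negative real part and local exponential stability follows from linearization \cite[Theorem~4.7]{khalil2002nonlinear}. It is positive if $R_1^0<R_2^0$, which gives instability by statement~ii) of the same theorem.

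The main step to get right is the Jacobian bookkeeping. In particular, I must confirm that the $(1,2)$ and $(1,3)$ couplings do not feed back into $y_2$, so the matrix really is block triangular with the $y_2$ row isolated. I must also confirm that $z_\S^*(1-z_\S^*)>0$, which holds because \eqref{eq:p1S_condition} is a strict inequality; this is what keeps $\det M>0$ rather than degenerate. The case $R_1^0=R_2^0$ is excluded, consistent with it being the nongeneric case that yields the coexistence curves of Lemma~\ref{lem:coexist}.
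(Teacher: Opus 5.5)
Your proposal is correct and follows the paper's proof: existence comes from Lemma~\ref{lem:unilateral}(3), and the Jacobian at $\bs{p}_{1\S}$ is block triangular. Its spectrum splits into $-(c_i-c_\mcd)$, the isolated $y_2$ eigenvalue $\frac{\beta_2}{\beta_1}\delta_1-\delta_2=\delta_2(R_2^0/R_1^0-1)$, and a $2\times 2$ block on $(y_1,z_\S)$. The one difference is cosmetic: you show that block is Hurwitz via negative trace and positive determinant, while the paper uses the quadratic formula with a case split on the discriminant, so your version is slightly cleaner.
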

\textit{Proof:}
The assumption $c_\mcd < 2 r_i, i\in[2]$ is sufficient to satisfy the condition in  statement (iii) of Lemma~\ref{lem:unilateral}; thus, $\bs{p}_{1\S}$ is guaranteed to exist. The Jacobian evaluated at $\bs{p}_{1\S}$, post a suitable simplification, is as given in~\eqref{eq:jacobian:p1S}.
\begin{equation}\label{eq:jacobian:p1S}
J(\bs{p}_{1\S}) = \begin{bmatrix}
-\frac{\delta_1}{1-\frac{c_\mcd}{2r_1}} \cdot\frac{c_\mcd}{2r_1} &
-\frac{\delta_1}{1-\frac{c_\mcd}{2r_1}} \cdot\frac{c_\mcd}{2r_1} &
-\beta_1 \frac{c_\mcd}{2r_1} (1 - \frac{c_\mcd}{2r_1}) (1 - q) q &
-\delta_1\frac{c_\mcd}{2r_1} \frac{1 - q}{q} & 0 \\
0 & \frac{\beta_2}{\beta_1}\delta_1 - \delta_2 & 0 & 0 & 0 \\
z_S(1 - z_S) 2 r_1 & z_S(1 - z_S) 2 r_2 & 0 & 0 & 0\\
0 & 0 & 0 & -(c_1 - c_\mcd) & 0 \\
0 & 0 & 0 & 0 & -(c_2 - c_\mcd) \\
\end{bmatrix}
\end{equation}
We partition $J(\bs{p}_{1\S}) = [J_1(\bs{p}_{1\S}), J_2(\bs{p}_{1\S}); \bs{0}_{2\times3}, J_3(\bs{p}_{1\S})]$. Since this partitioning is upper triangular, the spectrum of the Jacobian is given by the spectra of $J_1(\bs{p}_{1\S})$ and $J_3(\bs{p}_{1\S})$.
Since $J_3(\bs{p}_{1\S})$ is diagonal, we immediately have the eigenvalues $\lambda_4 = -(c_1 - c_\mcd)$ and $\lambda_5 = -(c_2 - c_\mcd)$, which are both negative under Assumption~\ref{assume:c}.

Consider $\spec(J_1(\bs{p}_{1\S}))$. With $J_{ij} := [J_1(\bs{p}_{1\S})]_{ij}$, the characteristic equation of $J_1(\bs{p}_{1\S})$ is:
$$\det(J_1(\bs{p}_{1\S})- \lambda\I)
= (\lambda - J_{22})(\lambda(\lambda - J_{11}) - J_{13} J_{31}) = 0$$
Immediately, we have the eigenvalue $\lambda_2 = J_{22} = \frac{\beta_2}{\beta_1}\delta_1 - \delta_2$. Thus, by Assumption~\ref{assume:betadelta}, $\lambda_2 < 0 \iff \frac{\beta_1}{\delta_1} > \frac{\beta_2}{\delta_2}$ and $\lambda_2 > 0 \iff \frac{\beta_1}{\delta_1} < \frac{\beta_2}{\delta_2}$.

\noindent Applying the quadratic formula to the remaining factor yields:
$$\{\lambda_1, \lambda_3\} = \frac{1}{2} J_{11} \pm \sqrt{J_{11}^2 + 4 J_{13} J_{31}}$$
Observe that under our Assumptions~\ref{assume:betadelta} and \ref{assume:c} with $c_\mcd < 2r_1$, we have $J_{11} < 0$, $J_{13} < 0$. From our earlier application of Lemma~\ref{lem:unilateral} (iii), we know $z_\S\in(0,1)$; thus, since $c_\mcd < 2r_1$ and so $r_1 > 0$, we have $J_{31} > 0$ (see Equation~\ref{eq:jacobian:p1S}).
Therefore, $4J_{13}J_{31} < 0$. Thus, if $|J_{11}^2| > |4J_{13} J_{31}|$ then $\left|\sqrt{J_{11}^2 + 4J_{13}J_{31}}\right| < |J_{11}|$, and the larger eigenvalue is bounded by: $\lambda < \frac{1}{2} \left[J_{11} + |J_{11}|\right] = 0$, i.e., both eigenvalues are negative. Alternatively, if $|J_{11}^2| < |4J_{13} J_{31}|$, then the quantity under the radical is negative and real. Therefore, its roots are purely imaginary, and, since $J_{11} < 0$, the result has strictly negative real part. Therefore, $\lambda_1$ and $\lambda_3$ are always negative. Consequently, $\bs{p}_{1\S}$ is stable if $\beta_1/\delta_1 > \beta_2/\delta_2$; it is 
unstable if $\beta_1/\delta_1 < \beta_2/\delta_2$.~$\square$

\section{Stability of continua of coexistence equilibria} \label{sec:line}
In this section, we identify conditions 
for (in)stability of continua of coexistence equilibria, $\mathcal L_0$, $\mathcal L_1$, and $\mathcal L_s$. 

\subsection{Stability of line $\mathcal L_0$}
Our first main result is the following theorem.
\begin{theorem}\label{thm:l0}
    Consider system~\eqref{eq:main} under Assumptions~\ref{assume:betadelta}, \ref{assume:q}, and 
    \ref{assume:c}. 
   Suppose further that $R^0 := R_1^0 = R_2^0$ and $qR^0 > 1$.
    Then the line of equilibria $\mcl_0$ is locally exponentially stable if
    $qR^0 < \min_{i\in[2]}(1 -  \frac{c_\mathcal{D}}{2r_i})^{-1}$; it is unstable if $qR^0 > \min_{i\in[2]}(1 -  \frac{c_\mathcal{D}}{2r_i})^{-1}$.
\end{theorem}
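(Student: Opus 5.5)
We will analyze the Jacobian \eqref{eq:jacobian:gen} at an arbitrary point $x^* = (y_1^*, y_2^*, 0, 1, 1) \in \mcl_0$, with $y_1^*+y_2^* = 1 - (qR^0)^{-1}$. We will then combine the spectral information with a center-manifold argument, since the line is a continuum and one eigenvalue is always zero.

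\textbf{Jacobian at a point of the line.} At such a point, $q_{z_\S^*}=1$, $q_{z_i^*}=q$, and $z_\S^*(1-z_\S^*)=0$, so rows 3--5 of the Jacobian are diagonal. Their entries are $2(r_1y_1^*+r_2y_2^*)-c_\mcd$, $-(c_1-c_\mcd)$ and $-(c_2-c_\mcd)$, and the latter two are negative by Assumption~\ref{assume:c}. The Jacobian is therefore block upper triangular. The $2\times 2$ upper-left block is
\[
\begin{bmatrix} -\beta_1 q y_1^* & -\beta_1 q y_1^* \\ -\beta_2 q y_2^* & -\beta_2 q y_2^* \end{bmatrix},
\]
where we used $\beta_i q s^* = \delta_i$. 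This block has eigenvalues $0$ and $-q(\beta_1y_1^*+\beta_2y_2^*)<0$. The zero eigenvalue has eigenvector $(1,-1)$, which is tangent to $\mcl_0$. Hence the spectrum consists of one zero eigenvalue (the line direction), three strictly negative eigenvalues, and $\lambda_\S := 2(r_1y_1^*+r_2y_2^*)-c_\mcd$.

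\textbf{Sign of $\lambda_\S$ on the line.} Along $\mcl_0$, the quantity $r_1y_1+r_2y_2$ is linear with total mass $Y:=1-(qR^0)^{-1}$ fixed, so it ranges over $[r_1Y, r_2Y]$ (with $r_1<r_2$). The condition $2r_iY < c_\mcd$ is equivalent to $qR^0 < (1-\frac{c_\mcd}{2r_i})^{-1}$, interpreting the right side as $+\infty$ when $c_\mcd\ge 2r_i$.
\begin{itemize}
\item If $qR^0<\min_i(1-\frac{c_\mcd}{2r_i})^{-1}$, then $\lambda_\S<0$ uniformly on the closed segment.
\item If $qR^0>\min_i(\cdot)^{-1}$, then at the endpoint with the corresponding virus dominant, $\lambda_\S>0$. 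The stated condition thus distinguishes uniform negativity from positivity on at least part of the line.
\end{itemize}
For the instability claim, at a point where $\lambda_\S>0$ the linearization has an eigenvalue with positive real part, and \cite[Theorem~4.7]{khalil2002nonlinear} gives instability of that point, hence of the line as an invariant set. The subtlety is that the minimum over $i$ corresponds to the endpoint where $y_2^*=Y$ (the larger $r$). We must check that the points with $\lambda_\S>0$ belong to the relevant part of $\mcl_0$, including the unilateral endpoints $\bs p_{10}$ and $\bs p_{20}$; otherwise we restrict the claim to those points.

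\textbf{Stability: normal hyperbolicity.} This is the main obstacle. Since every point of $\mcl_0$ has exactly one zero eigenvalue, whose eigenvector is tangent to the line, and all other eigenvalues are negative, the line is a normally hyperbolic attracting manifold of equilibria. The center manifold through each point can be taken to be $\mcl_0$ itself, because it consists of equilibria, so the reduced flow on it is trivial. By the center manifold theorem (or the standard result on manifolds of equilibria with one zero eigenvalue), every trajectory starting near $\mcl_0$ converges exponentially to some point of $\mcl_0$, which gives local exponential stability of the line. The care needed is to ensure the negative eigenvalue bounds are uniform over a compact subsegment, which follows from continuity, and to handle points near the endpoints, where $y_i^*\to0$. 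The eigenvalue $-q(\beta_1y_1^*+\beta_2y_2^*)$ remains bounded away from zero there, so this step should go through.
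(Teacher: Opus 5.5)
Your proposal is correct and follows essentially the same route as the paper. The shared steps are the block upper-triangular Jacobian at points of $\mathcal{L}_0$ (using $\beta_i q s=\delta_i$), the zero eigenvalue with eigenvector $(1,-1)$ tangent to the line, and the sign of $2(r_1y_1+r_2y_2)-c_\mathcal{D}$ checked at the two endpoints of the segment. Stability then comes from a center-manifold argument with the line itself as the center manifold, and instability is read off from the positive eigenvalue.

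Your version is somewhat more careful than the paper's in three places. You fix the convention $(1-\frac{c_\mathcal{D}}{2r_i})^{-1}=+\infty$ when $c_\mathcal{D}\ge 2r_i$, you identify $i=2$ as the binding index, and you address uniformity of the transverse eigenvalues up to the endpoints $\boldsymbol{p}_{10},\boldsymbol{p}_{20}$.
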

\textit{Proof:} Under the hypothesis of the theorem, the conditions in Lemma~\ref{lem:coexist}, statement (1) are satisfied,  and, consequently,  $\mcl_0$ is guaranteed to exist. Equation~\eqref{eq:line0} states that $1 - y_1 - y_2 = s = \frac{1}{qR^0}$, and so $\beta_isq = \delta_i, i\in[2]$. We use this identity, as well as substituting, $z_S = 0, z_1 = z_2 = 1$, to simplify the Jacobian, as given in~\eqref{eq:jacobian:line:l0}. %
\begin{equation}\label{eq:jacobian:line:l0}     
J(\mcl_0) = \begin{bmatrix}
    -\beta_1 y_1 q & -\beta_1 y_1 q &
    -\beta_1 y_1 s (1 - q) q & -\beta_1 y_1 s (1 - q) & 0 \\
    -\beta_2 y_2 q & -\beta_2 y_2 q &
    -\beta_2 y_2 s (1 - q) q & 0 & -\beta_2 y_2 s (1 - q) \\
    0 & 0 & 2(r_1 y_1 + r_2 y_2) - c_\mathcal{D} & 0 & 0\\
    0 & 0 & 0 & -(c_1 -c_\mathcal{D}) & 0 \\
    0 & 0 & 0 & 0 & -(c_2 - c_\mathcal{D}) \\
\end{bmatrix}
\end{equation}
\normalsize
We partition $J(\mcl_0) = [J_1(\mcl_0), J_2(\mcl_0); \bs{0}_{3\times2}, J_3(\mcl_0)]$.
Observe that $J(\mcl_0)$ is a block upper triangular matrix, so its spectrum is given by the spectrum of $J_1(\mcl_0)$ and the spectrum of $J_3(\mcl_0)$.
Let $\lambda_i$ denote the $i^{th}$ eigenvalue of $J(\mcl_0)$.
Consider $J_1(\mcl_0)$, which has the structure $[a, a; b, b]$, where  $a = -\beta_1 y_1 q$ and $b = -\beta_1 y_2 q$. This yields the eigen pairs $\{(0, [1,-1]^\top), (a+b,[a/b,1]^\top\}$.
By Assumption~\ref{assume:betadelta}, $\beta_i,\delta_i > 0, i\in[2]$; with the assumption $qR^0 > 1$ we have $0 < \frac{1}{q R^0} < 1$. Then the line equation requires $y_1 + y_2 = 1 - \frac{1}{q R^0}\in(0,1)$. Therefore, since, by Assumption~\ref{assume:q}, $q \in (0,1]$, we have that $\lambda_2 = a + b = (-q(\beta_1 y_1 + \beta_2 y_2)) < 0$.
Next, we consider a perturbation along the direction of the eigenvector corresponding to $\lambda_1 = 0$: Let $\Delta y := \epsilon\cdot[1, -1]^\top$ for some arbitrarily small $|\epsilon|$. Observe that if $y^*\in\mcl_0$, then $y^* + \Delta y\in\mathcal{L}_0$. This follows from substitution into the line equation:
$1 - y_1 - y_2 = \frac{1}{qR^0}
\iff 1 - (y_1 + \epsilon) - (y_2 - \epsilon) = \frac{1}{qR^0}$.
Therefore, the null-space of $J_1(\mcl_0)$ corresponds to the line of equilibria.

Consider $J_3(\mcl_0) = \diag(2(r_1 y_1 + r_2 y_2) - c_\mathcal{D}, c_1 - c_\mcd, c_2 - c_\mcd)$. Being a diagonal matrix, the eigenvalues of $J_3(\mcl_0)$ are its diagonal entries. 
Thus, $\lambda_3 = 2(r_1 y_1 + r_2 y_2) - c_\mathcal{D}$, $\lambda_4 = c_\mcd - c_1$ and $\lambda_5 = c_\mcd - c_2$. Since by Assumption~\ref{assume:c}, $c_\mcd < c_i, i\in[2]$, we have $\lambda_4 < 0$ and $\lambda_5 < 0$.
Consider $\lambda_3$. For it to be negative, we require that $2(r_1 y_1 + r_2 y_2) - c_\mathcal{D} < 0$.
The line equation for $\mcl_0$ \eqref{eq:line0} gives $y_2 = 1 - \frac{1}{qR^0} - y_1$. Then, in terms of $y_1$, we get
$y_1 (r_1 - r_2) < \frac{c_\mathcal{D}}{2} - r_2 \left(1 - \frac{1}{qR^0}\right)$.
The line equation also gives the minimum and maximum values for $y_1$ as $0$ and $1 - \frac{1}{qR^0}$, respectively.
Substituting into the inequality and solving yields the condition: 
\begin{equation}\label{eq:cond:l0}
\left(1 - \frac{1}{qR^0} < \frac{c_\mathcal{D}}{2r_2}\right) \land
\left(1 - \frac{1}{qR^0} < \frac{c_\mathcal{D}}{2r_1}\right)
\end{equation}
\normalsize
By the theorem hypothesis, the condition in \eqref{eq:cond:l0} is satisfied. Therefore, $\lambda_3 < 0$. 
Thus, $\mcl_0$ is the center eigenspace of $J(\mcl_0)$, and the corresponding center manifold is stable \cite[Theorem~7.26]{sastry1999nonlinear}. 
If  $1 - \frac{1}{qR^0} > \frac{c_\mathcal{D}}{2r_1}$ or if $1-\frac{q}{R^0}>\frac{c_\mcd}{2r_2}$, then by a straightforward reversal of the arguments presented above, it can be seen that $\lambda_3>0$, which implies that $s(J(\mcl_0))>0$, which, from \cite[Theorem~5.42]{sastry1999nonlinear}, guarantees that the line $\mcl_0$ is unstable.~$\square$
\subsection{Stability of line $\mathcal L_1$}
Our second main result is the following theorem.
\begin{theorem}\label{thm:l1}
    Consider system~\eqref{eq:main} under Assumptions~\ref{assume:betadelta}, \ref{assume:q}, and \ref{assume:c}.
    Suppose further that $R^0 := R_1^0 = R_2^0$ and $q^2 R^0 > 1$.
    The line $\mcl_1$ is locally exponentially stable if
    $q^2 R^0 > \max_{i\in[2]}(1 -  \frac{c_\mathcal{D}}{2r_i})^{-1}$; it is unstable if $q^2 R^0 < \max_{i\in[2]}(1 -  \frac{c_\mathcal{D}}{2r_i})^{-1}$.
\end{theorem}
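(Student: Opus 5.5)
The plan is to mirror the proof of Theorem~\ref{thm:l0}: linearize at an arbitrary point of $\mcl_1$, exploit block upper triangular structure, identify the zero eigenvalue with the tangent direction of the line, and sign the remaining eigenvalues. First, under the hypotheses $R_1^0=R_2^0=R^0$ and $q^2R^0>1$, Lemma~\ref{lem:coexist}, statement (2), guarantees that $\mcl_1$ exists. On $\mcl_1$ we have $s=1-y_1-y_2=\frac{1}{q^2R^0}$, hence $\beta_i s q^2=\delta_i$ for $i\in[2]$. We also have $z_\S=z_1=z_2=1$, so $q_{z_\S}=q_{z_i}=q$ and $z_\S(1-z_\S)=0$.

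Substituting these into~\eqref{eq:jacobian:gen}, the $(1,1)$ entry becomes $\beta_1(s-y_1)q^2-\delta_1=-\beta_1y_1q^2$, and similarly for $(2,2)$. The first two entries of the third row vanish, and the $(3,3)$ entry becomes $-(2(r_1y_1+r_2y_2)-c_\mcd)=c_\mcd-2\textbf{r}^\top\textbf{y}$. The Jacobian is therefore block upper triangular, $[J_1, J_2;\bs{0}_{3\times 2}, J_3]$, with
\begin{equation*}
J_1=\begin{bmatrix}-\beta_1y_1q^2 & -\beta_1y_1q^2\\ -\beta_2y_2q^2 & -\beta_2y_2q^2\end{bmatrix},
\end{equation*}
and $J_3=\diag(c_\mcd-2\textbf{r}^\top\textbf{y},\, c_\mcd-c_1,\, c_\mcd-c_2)$. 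Exactly as for $\mcl_0$, $J_1$ has the eigenpair $(0,[1,-1]^\top)$, whose eigenvector is tangent to $\mcl_1$ (since $y_1+y_2$ is preserved). Its other eigenvalue is $-q^2(\beta_1y_1+\beta_2y_2)<0$, because $y_1+y_2=1-\frac{1}{q^2R^0}\in(0,1)$. By Assumption~\ref{assume:c}, $\lambda_4,\lambda_5<0$.

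The remaining, and decisive, step is the sign of $\lambda_3=c_\mcd-2(r_1y_1+r_2y_2)$, which must be negative uniformly on the segment. Writing $y_2=1-\frac{1}{q^2R^0}-y_1$ with $y_1\in[0,1-\frac{1}{q^2R^0}]$, the quantity $\textbf{r}^\top\textbf{y}$ is affine in $y_1$. By Assumption~\ref{assume:r} ($r_1<r_2$), it is minimized at the endpoint $y_2=0$, where it equals $r_1(1-\frac{1}{q^2R^0})$. Hence $\lambda_3<0$ on all of $\mcl_1$ if and only if $1-\frac{1}{q^2R^0}>\frac{c_\mcd}{2r_i}$ for both $i$. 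This is equivalent to $q^2R^0>\max_{i\in[2]}(1-\frac{c_\mcd}{2r_i})^{-1}$, where this condition forces $c_\mcd<2r_1$ and the maximum is attained at $i=1$.

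Given this, the center eigenspace at each point is one-dimensional and coincides with the tangent of the line of equilibria, while all other eigenvalues are negative. Invoking \cite[Theorem~7.26]{sastry1999nonlinear} as in Theorem~\ref{thm:l0} then yields local exponential stability of $\mcl_1$. Conversely, if $q^2R^0<\max_i(1-\frac{c_\mcd}{2r_i})^{-1}$, reversing the inequality shows that $\lambda_3>0$ at points of $\mcl_1$ near the $y_2=0$ end. There $s(J)>0$, and \cite[Theorem~5.42]{sastry1999nonlinear} gives instability of the line.

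The main obstacle is this last point. The instability is pointwise on part of the line rather than uniform over it, so I would phrase the conclusion as \enquote{the line is not stable as a set}, in the same sense as the paper uses for $\mcl_0$.
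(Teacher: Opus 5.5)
Your route is the paper's: block upper-triangular Jacobian on $\mcl_1$, null vector $[1,-1]^\top$ tangent to the line, second eigenvalue of $J_1$ equal to $-q^2(\beta_1y_1+\beta_2y_2)<0$, and $\lambda_4,\lambda_5<0$. As in the paper, everything then reduces to the sign of $\lambda_3=c_\mcd-2(r_1y_1+r_2y_2)$. Your use of $r_1<r_2$ to locate the binding endpoint $y_2=0$ is correct.

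\textbf{Where the proof fails.} The failing step is the claimed equivalence between the uniform condition $1-\frac{1}{q^2R^0}>\frac{c_\mcd}{2r_i}$ (for both $i$) and the stated hypothesis $q^2R^0>\max_{i}(1-\frac{c_\mcd}{2r_i})^{-1}$. Only the first implies the second; the max-condition does \emph{not} force $c_\mcd<2r_1$. If $c_\mcd>2r_1$, the $i=1$ term is negative and imposes nothing. Meanwhile the uniform condition is impossible, since $\frac{c_\mcd}{2r_1}>1>1-\frac{1}{q^2R^0}$.

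Two concrete cases show this. If $c_\mcd>2r_2$, both terms are negative, so the hypothesis holds for every $q^2R^0>1$; yet $\lambda_3\ge c_\mcd-2r_2(1-\frac{1}{q^2R^0})>0$ on all of $\mcl_1$. With $r_1=1$, $r_2=2$, $c_\mcd=3$, $q^2R^0=5$, the max is $\max\{-2,4\}=4<5$, yet $\lambda_3=\frac{7}{5}>0$ at the endpoint $y=(\frac45,0)$.

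So the stability half, read literally, is false in this regime, and no argument can close that step. What your computation proves is stability under the extra proviso $c_\mcd<2r_1$, i.e.\ with the hypothesis read as the conjunction over $i$. The paper's proof proves exactly the same thing, since it only verifies~\eqref{eq:cond:l1:stable}. The instability half is unaffected: whenever the max is defined and exceeds $q^2R^0$, $\lambda_3>0$ at least near the $y_2=0$ end, and on the whole line if $2r_1<c_\mcd<2r_2$.

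\textbf{The obstacle you flagged (set instability) closes easily.} Take a point of $\mcl_1$ with $y_1,y_2>0$ and $\lambda_3>0$, and let $v$ be a $\lambda_3$-eigenvector.
\begin{itemize}
\item Rows 4--5 of $(J-\lambda_3 I)v=0$ force $v_4=v_5=0$.
\item This gives $v_3\neq 0$: otherwise rows 1--2 reduce to $(J_1-\lambda_3 I)(v_1,v_2)^\top=0$, and $J_1-\lambda_3 I$ is invertible because $\lambda_3>0\ge$ both eigenvalues of $J_1$.
\item $\lambda_3$ is the only positive eigenvalue, so the one-dimensional unstable manifold is tangent to $v$. By uniqueness it lies in the invariant set $\{z_1=z_2=1\}$.
\item Its branch with $z_\S<1$ lies in $\Gamma$ and leaves the hyperplane $\{z_\S=1\}\supset\mcl_1$ transversally.
\end{itemize}
Hence there are initial conditions arbitrarily close to $\mcl_1$ whose trajectories exit a fixed neighbourhood of it. So the line is unstable as a set, not merely at one equilibrium.
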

\textit{Proof:}
Under the hypothesis of the theorem, the conditions for Lemma~\ref{lem:coexist} statement (2) are satisfied; consequently, $\mcl_1$ is guaranteed to exist.
From Equation~\eqref{eq:line1}, we get $1 - y_1 - y_2 = s = (q^2 R^0)^{-1}$; note the additional $q$ in the denominator of the right-hand side, when compared to the right-hand side of~\eqref{eq:line0}. Under Assumptions~\ref{assume:betadelta},\ref{assume:q}, $(q^2 R^0)^{-1}\in(0,1)$.
The Jacobian for $\mcl_1$ can be put in terms of the Jacobian for $\mcl_0$: $J(\mcl_1) = \diag(q, q, -1, 1, 1) J(\mcl_0)$. Then, using an identical partitioning as in Theorem~\ref{thm:l0}, we get $J_1(\mcl_1) = q \cdot J_1(\mcl_0)$. Therefore, the eigenvalues of $J_1(\mcl_1)$ are the same as for $J_1(\mcl_0)$ scaled by $q$, and, since by Assumption~\ref{assume:q} $q$ is positive, do not differ in sign.
For $J_3(\mcl_1)$, we get the same eigenvalues as for $J_3(\mcl_0)$, except that the expression for $\lambda_3$ is negated.
This gives the reverse inequality to that in Theorem~\ref{thm:l0}; i.e.,  $2(r_1y_1 + r_2y_2) > c_\mcd$. By a similar derivation as in Theorem~\ref{thm:l0}, it is straightforward to show that if 
\begin{equation}\label{eq:cond:l1:stable}
\left(1 - \frac{1}{q^2R^0} > \frac{c_\mathcal{D}}{2r_2}\right) \land
\left(1 - \frac{1}{q^2R^0} > \frac{c_\mathcal{D}}{2r_1}\right)
\end{equation}
\normalsize
then $\lambda_3 < 0$.
The rest of the  proof follows identically to the proof of stability of line $\mcl_0$ in Theorem~\ref{thm:l0}. 
Analogously, if 
\begin{equation}\label{eq:cond:l1:unstable}
\left(1 - \frac{1}{q^2R^0} < \frac{c_\mathcal{D}}{2r_2}\right) \lor
\left(1 - \frac{1}{q^2R^0} < \frac{c_\mathcal{D}}{2r_1}\right),
\end{equation}
\normalsize
then it is clear that $\lambda_3 > 0$, which is sufficient for $\mcl_1$ to be unstable.~$\square$
\begin{remark}[Exclusivity of stability for $\mcl_0$ and $\mcl_1$]
Combining the stability conditions for $\mcl_0$ and $\mcl_1$ yields the inequality:
$q R^0 < (1 - \frac{c_\mathcal{D}}{2r_i})^{-1} < q^2 R^0, i\in[2]$,
which yields the condition: $q R^0 < q^2R _0$. Under Assumptions~\ref{assume:betadelta} and \ref{assume:q} this is unsatisfiable. Therefore, $\mcl_0$ and $\mcl_1$ cannot simultaneously be stable.~$\square$
\end{remark}
\subsection{Stability of the curve $\mcl_\S$} In this subsection, we identify a condition for local exponential stability of the continuum of coexistence equilibria, $\mcl_\S$.
\begin{theorem}\label{thm:ls}
    Consider system~\eqref{eq:main} under Assumptions~\ref{assume:betadelta}, \ref{assume:q}, and \ref{assume:c}. 
    Suppose further that $R^0 := R_1^0 = R_2^0$ and $qR^0 > 1$.
    If equations~\eqref{eq:lineS}-\eqref{eq:lineS:upper} in Lemma~\ref{lem:coexist} 
    hold, then $\mcl_\S$ is locally exponentially stable.
\end{theorem}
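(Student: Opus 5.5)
The plan mirrors the proof of Theorem~\ref{thm:l0}: evaluate the Jacobian~\eqref{eq:jacobian:gen} at an arbitrary point $x^*=(y_1^*,y_2^*,z_\S^*,1,1)\in\mcl_\S$, show that it has exactly one zero eigenvalue and that all other eigenvalues have negative real part, and show that the zero eigenvector is tangent to $\mcl_\S$. Stability of the continuum then follows from center manifold theory (\cite[Theorem~7.26]{sastry1999nonlinear}), as in Theorem~\ref{thm:l0}.

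\emph{Simplifying the Jacobian.} At $x^*$ we use three facts.
\begin{itemize}
\item Lemma~\ref{lem:coexist}(3) gives $z_\S^*\in(0,1)$, because \eqref{eq:lineS}--\eqref{eq:lineS:upper} hold.
\item $h_\S=0$, i.e.\ $2(r_1y_1^*+r_2y_2^*)=c_\mcd$.
\item $h_i=0$, i.e.\ $\beta_i s^* q_{z_\S^*} q=\delta_i$ for $i\in[2]$, which uses $R_1^0=R_2^0$.
\end{itemize}
With these, the $(1,1)$ and $(2,2)$ entries collapse to $-\beta_iy_i^*q_{z_\S^*}q$, and the $(3,3)$ entry vanishes. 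Set $a_i:=\beta_iy_i^*q_{z_\S^*}q>0$, $b_i:=\beta_iy_i^*s^*(1-q)q>0$ and $g:=2z_\S^*(1-z_\S^*)>0$. The Jacobian is block upper triangular. Its lower $2\times2$ block is $\diag(c_\mcd-c_1,\,c_\mcd-c_2)$, which is Hurwitz by Assumption~\ref{assume:c}. The upper $3\times3$ block is
\[
A=\begin{bmatrix}-a_1&-a_1&-b_1\\-a_2&-a_2&-b_2\\ g r_1& g r_2&0\end{bmatrix}.
\]

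\emph{Spectrum of $A$.} First, $\det A=0$ because $b_i/a_i=s^*(1-q)/q_{z_\S^*}=:\kappa$ is independent of $i$, so rows~1 and~2 are proportional. Next, the sum of the principal $2\times2$ minors is $0+g(b_1r_1+b_2r_2)$, and the trace is $-(a_1+a_2)$. Hence
\[
\det(\lambda I-A)=\lambda\bigl(\lambda^2+(a_1+a_2)\lambda+g(b_1r_1+b_2r_2)\bigr).
\]
Both coefficients of the quadratic are strictly positive. These positivity claims need $y_i^*>0$, $s^*>0$ and $z_\S^*\in(0,1)$, all guaranteed by the range \eqref{eq:lineS}. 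By the Routh--Hurwitz criterion for quadratics, the two nonzero eigenvalues then have negative real part. So $J(x^*)$ has a simple zero eigenvalue, and the remaining four eigenvalues lie in the open left half-plane.

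\emph{Main obstacle: the null direction.} I expect the hardest step to be identifying the null direction with the tangent of $\mcl_\S$. A kernel vector $v$ of $A$ satisfies $r_1v_1+r_2v_2=0$ and $v_1+v_2+\kappa v_3=0$. For the curve, differentiate its parameterization in $y_1$:
\begin{itemize}
\item $dy_2/dy_1=-r_1/r_2$, which matches the first condition;
\item $z_\S$ is determined by $s=1-y_1-y_2$ via $q_{z_\S}=1/(qR^0s)$, so $dz_\S=-\,q_{z_\S}\,ds/\bigl((1-q)s\bigr)$.
\end{itemize}
Substituting $ds=-(dy_1+dy_2)$ gives $dy_1+dy_2+\kappa\,dz_\S=0$, which matches the second condition. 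I would verify this algebra carefully, since a sign or factor slip would break the correspondence.

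Once the two directions agree, the center eigenspace at every point of $\mcl_\S$ is tangent to the curve of equilibria. The curve is therefore a normally hyperbolic, attracting manifold of equilibria, and \cite[Theorem~7.26]{sastry1999nonlinear} gives local exponential stability of $\mcl_\S$, exactly as in Theorem~\ref{thm:l0}.
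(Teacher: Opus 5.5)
Your proposal is correct and follows essentially the same route as the paper: the same block-triangular split of the Jacobian on $\mcl_\S$, the same factorization of the characteristic polynomial of the $3\times3$ block as $\lambda$ times a quadratic whose roots have negative real part, and the same center-manifold conclusion via \cite[Theorem~7.26]{sastry1999nonlinear}. You also improve on the paper by getting the zero eigenvalue directly from $\det A=0$ (rows~1 and~2 are proportional because $b_1/a_1=b_2/a_2$), rather than from the paper's appeal to a one-dimensional center manifold. In your tangent check the intermediate formula has a sign slip: it should be $dz_\S=+\,q_{z_\S}\,ds/\bigl((1-q)s\bigr)$, since $z_\S$ increases with $s$ along $\mcl_\S$, and only this sign yields your (correct) relation $dy_1+dy_2+\kappa\,dz_\S=0$. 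The check is automatic anyway: differentiating $f(\gamma(t))\equiv 0$, with $f$ the vector field of \eqref{eq:main} and $\gamma$ a parameterization of $\mcl_\S$, gives $J(\gamma(t))\gamma'(t)=0$, and your simple zero eigenvalue makes that kernel one-dimensional.
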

\textit{Proof:}
By Lemma~\ref{lem:coexist}, if equations~\eqref{eq:lineS}-\eqref{eq:lineS:upper} hold, then $\mcl_\S$ exists in $\Gamma$.
Recall that we 
parameterize $\mcl_S$ as follows:
$$y_2 = \frac{c_\mcd}{2r_2} - \frac{r_1}{r_2}y_1;
\text{\quad and\quad}
z_S = \frac{1}{1-q} - \frac{1}{q(1-q)R^0(1-y_1-y_2)}$$
Then, with our parameterization of $z_\S$, we 
 simplify $q_{z_\S}$:
\begin{align*}
q_{z_\S} = 1 - z_\S(1 - q)
= (q R^0 s)^{-1}
\end{align*}
With Assumption~\ref{assume:c}, $z_1^* = 1$ and so $q_{z_1^*} = q$; we can, hence, simplify the Jacobian at $\mcl_S$ yielding,
\begin{equation}\label{eq:jacobian:line:ls}  
J(\mcl_\S) = \begin{bmatrix}
-\delta_1 y_1 s^{-1} &
-\delta_1 y_1 s^{-1} &
-\beta_1 y_1 s (1 - q) q &
-\delta_1 y_1 \frac{1 - q}{q} &
0 \\
-\delta_2 y_2 s^{-1} &
-\delta_2 y_2 s^{-1} &
-\beta_2 y_2 s (1 - q) q &
0 &
-\delta_2 y_2 \frac{1 - q}{q} \\
z_S(1 - z_S) 2 r_1 & z_S(1 - z_S) 2 r_2 & 0 & 0 & 0 \\
0 & 0 & 0 & c_\mathcal{D} - c_1 & 0 \\
0 & 0 & 0 & 0 & c_\mathcal{D} - c_2 \\
\end{bmatrix}
\end{equation}
Note that the Jacobian can be written in the following block triangular form: $J(\mcl_S) = [\begin{smallmatrix}
A & B \\ \textbf{0}_{2\times3} & C
\end{smallmatrix}]$, where,
\begin{equation}\label{eq:jacobian:line:ls:A}  
A := \begin{bmatrix}
-\delta_1 y_1 s^{-1} &
-\delta_1 y_1 s^{-1} &
-\beta_1 y_1 s (1 - q) q \\
-\delta_2 y_2 s^{-1} &
-\delta_2 y_2 s^{-1} &
-\beta_2 y_2 s (1 - q) q \\
z_S(1 - z_S) 2 r_1 & z_S(1 - z_S) 2 r_2 & 0
\end{bmatrix};
C := \begin{bmatrix}
c_\mcd - c_1 & 0 \\
0 & c_\mcd - c_1
\end{bmatrix}
\end{equation}
and $B$ is conformal.
Thus, it is sufficient to consider the spectrum of $A$ and $C$.
Let $\lambda_1,\lambda_2,\lambda_3$ be the eigenvalues of $A$, and $\lambda_4, \lambda_5$ be the eigenvalues of $C$.
Immediately, $\lambda_4 = c_\mcd - c_1$ and $\lambda_5 = c_\mcd - c_2$, both of which are negative by Assumption~\ref{assume:c}.
Consider the characteristic polynomial of $A$:
\begin{align*}
\det(A - \lambda\I)
&= -\lambda^3
+ \lambda^2 (A_{11} + A_{21})
+ \lambda(A_{13} A_{31} + A_{23} A_{32}) \\
&+ A_{11} A_{23} (A_{31} - A_{32}) + A_{12} A_{21}(A_{32} - 1)
\end{align*}
By construction, we know that the center manifold of $\mcl_\S$ is $1$-dimensional \cite[Theorem 8.1]{khalil2002nonlinear}. Therefore, there is one eigenvalue with real part zero in the spectrum of $J(\mcl_\S)$. Furthermore, since there is exactly one---and complex eigenvalues of real matrices come in conjugate pairs---it is exactly $0$. Since, $\lambda_4,\lambda_5 \ne 0$, we know the zero eigenvalue corresponds to the spectrum of $A$.
Let $\lambda_3 = 0$ be this zero eigenvalue. Since $\lambda_3$ is a solution of the characteristic polynomial of $A$, we know the following:
\begin{align*}
\det(A - \lambda\I) &= 0 = A_{11} A_{23} (A_{31} - A_{32}) + A_{12} A_{21}(A_{32} - 1)
\end{align*}
That is, the zeroth order term in the characteristic polynomial of $A$ simplifies to zero. Therefore, we have, 
\begin{align*}
\det(A - \lambda\I)
&= -\lambda^3
+ \lambda^2 (A_{11} + A_{21})
+ \lambda(A_{12} A_{31} + A_{23} A_{32}) \\
&= -\lambda [ \lambda^2
- \lambda (A_{11} + A_{21})
- (A_{12} A_{31} + A_{23} A_{32})]
\end{align*}
Thus, the last two eigenvalues are given by the quadratic:
\begin{align*}
0 &= \lambda^2 - \lambda (A_{11} + A_{21}) - (A_{12} A_{31} + A_{23} A_{32}) \\
\{\lambda_1, \lambda_2\}
&= \frac{1}{2} \left[(A_{11} + A_{21})\pm\sqrt{(A_{11} + A_{21})^2 + 4(A_{12} A_{31} + A_{23} A_{32})}\right]
\end{align*}
Observe that on $\mcl_\S$, we have $s, y_1, y_2 > 0$. Recall also Assumptions \ref{assume:betadelta} and \ref{assume:q}. Thus, $A_{11}, A_{21} < 0$ and $A_{11} + A_{21} < 0$. Additionally, $A_{13}, A_{23} < 0$ and $A_{31}, A_{32} > 0$. Thus, $4(A_{13} A_{31} + A_{23} A_{32}) < 0$. Following similar argumentation as in the proof of Proposition~\ref{prop:fp2:global}, the remaining two eigenvalues must have negative real part.
Therefore, $\mcl_\S$ has a $4$-dimensional stable manifold and a $0$-dimensional unstable manifold and is locally exponentially stable \cite[Theorem~7.26]{sastry1999nonlinear}.

\section{Simulations}\label{sec:sims}
In this section, we first illustrate our theoretical findings via numerical simulations, and thereafter explore the effect of incorporating a co-ordination term into the dynamics which can be thought of as representing the possibility of social distancing being socially reinforced by those participating.
\subsection{Illustration of the theoretical findings}
For each type of equilibrium of system~\eqref{eq:main} identified in Section~\ref{sec:analysis},
we provide a brief description,
example parameters for its existence, and initial conditions
for which the dynamics of our system, where possible, converge to an equilibrium of interest; we provide several plots showing the system converging in simulation.

We select $x(0)$ in $\interior\Gamma$. Because the state variables often remain fixed at the boundary of their domain (e.g., $z_\S(t_k) = 1 \implies z_\S(t) = 1, \forall t \ge t_k$), we wish to demonstrate the system converging to the boundary equilibria, even when the system does not start on the boundary. 
Note that for stable equilibria we may always select initial conditions in $\interior\Gamma$ that converge to the 
said equilibria. Simulations were performed using fourth-order Runge-Kutta approximation with step size $h = 10^{-4}$.  
For ease of exposition, we write the indexed parameters in vector notation: $\beta := [\beta_1, \beta_2]^\top$, $\delta := [\delta_1, \delta_2]^\top$, $r := [r_1, r_2]^\top$, and $c := [c_1, c_2]^\top$.

For the first simulation, we aim to show convergence to $\bs{p}_{\DFE 0} := [0,0,0,1,1]^\top$.
To this end, we choose $\beta=[\begin{smallmatrix}
    0.8&&0.7
\end{smallmatrix}]$, $\delta=[\begin{smallmatrix}
    0.4&&0.4
\end{smallmatrix}]$, $r=[\begin{smallmatrix}
    3&&3
\end{smallmatrix}]$, $c=[\begin{smallmatrix}
    9&&9
\end{smallmatrix}]$, $c_{\mcd}=1$ and $q=0.5$.
With this choice of parameters, the conditions for Proposition~\ref{thm:DFE}, statement~iv) 
are satisfied. 
Let $x(0) = [0.4, 0.6, 0.1, 0.1, 0.2]^\top\in\interior\Gamma$. 
The 
\seb{equilibrium} $\bs{p}_{\DFE 0}$ is stable and attractive, as demonstrated in Figure~\ref{figure:P_DFE0}.
This shows the system converging to a DFE even though the basic reproduction numbers are both greater that unity, $\beta_i/\delta_i > 1, i\in[2]$ and the initial population is completely infected: $y_1(0) + y_2(0) = 1$.

\begin{figure}[h!]\centering
\begin{minipage}{0.45\textwidth}\centering
\includegraphics[width=.9\textwidth]{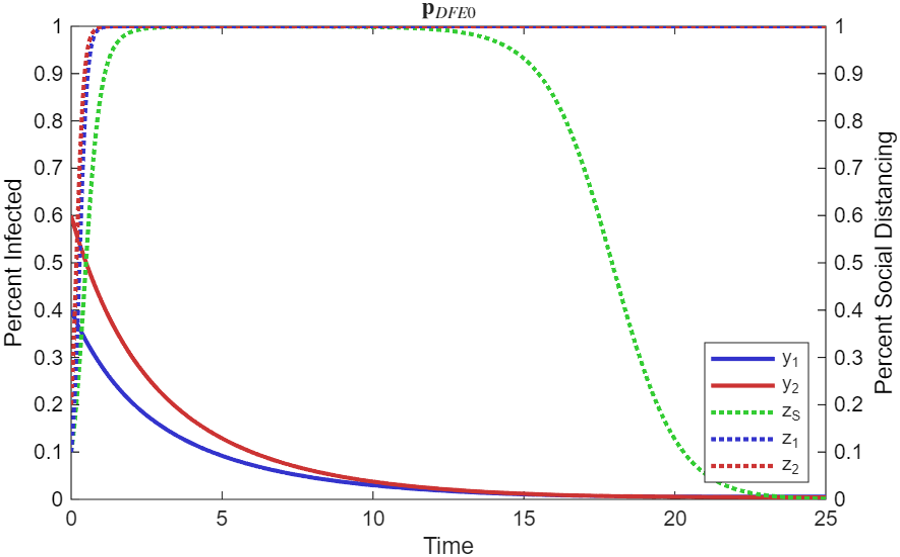}
\caption{
Simulation with $\bs{p}_{\DFE0}$ stable.
}
\label{figure:P_DFE0}
\end{minipage}
\begin{minipage}{0.45\textwidth}\centering
\includegraphics[width=.9\textwidth]{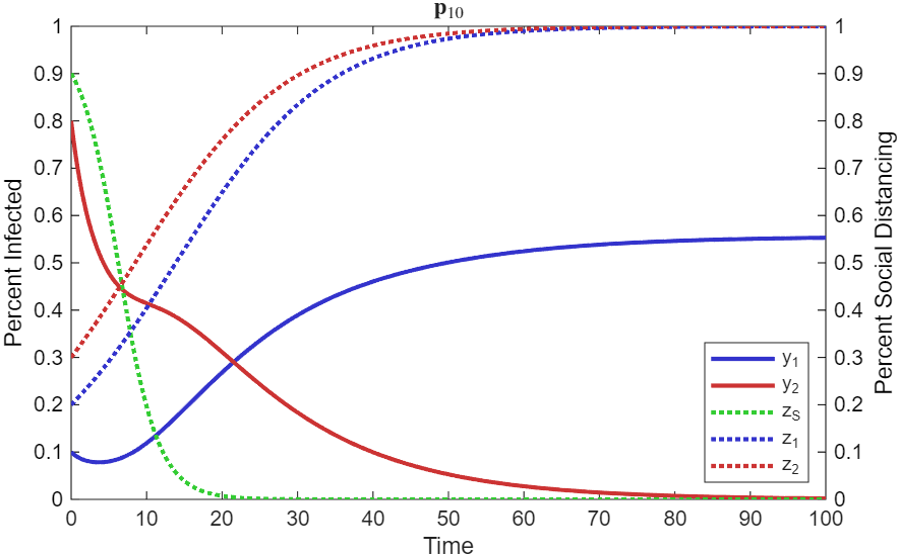}
\caption{
Simulation with $\bs{p}_{10}$ stable.
}
\label{figure:P_10}
\end{minipage}
\end{figure}




\noindent\textbf{Unilateral equilibrium with no SD:} For $\bs{p}_{10}$, we choose $\beta=[\begin{smallmatrix}
    0.9&&0.6
 \end{smallmatrix}]$, $\delta=[\begin{smallmatrix}
    0.2&&0.2
 \end{smallmatrix}]$, $r=[\begin{smallmatrix}
    0.2&&0.1
 \end{smallmatrix}]$, $c=[\begin{smallmatrix}
    0.6&&0.6
\end{smallmatrix}]$, $c_{\mcd}=0.5$ and $q=0.5$.
With this choice of parameters, $R_1^* = 9/4, R_2^* = 3/2$. Thus, $\max(1, R_2^*) < R_1^*$ is satisfied and
Theorem~\ref{thm:fp:global} implies convergence to $\bs{p}_{10} = [5/9,0,0,1,1]$.
Let $x(0) = [0.1, 0.8, 0.9, 0.2, 0.3]^\top\in\interior\Gamma$.
Consistent with Theorem~\ref{thm:fp:global}, we see that virus~$1$ is endemic while virus~$2$ becomes extinct, with no social distancing in the susceptible population; see Figure~\ref{figure:P_10}.

\noindent\textbf{Unilateral equilibrium with all SD:} For $\bs{p}_{11}$, we choose $\beta=[\begin{smallmatrix}
    2&&0.3
\end{smallmatrix}]$, $\delta=[\begin{smallmatrix}
    0.1&&0.6
\end{smallmatrix}]$, $r=[\begin{smallmatrix}
    0.6&&0.4
\end{smallmatrix}]$, $c=[\begin{smallmatrix}
    0.5&&0.5
\end{smallmatrix}]$, $c_{\mcd}=0.4$ and $q=0.3$.
With this choice of parameters, $R_1^* = 9/5, R_2^* = 9/200$. Thus, $\max(1, R_2^*) < R_1^*$ and $R_1^* > (1 - c_\mcd/2r_1) = 3/2$ are satisfied and Proposition~\ref{prop:fp2:global} implies convergence to $\bs{p}_{11} = [4/9,0,0,1,1]$.
Let $x(0) = [0.1, 0.9, 0.5, 0.4, 0.7]^\top\in\interior\Gamma$.
Indeed, in line with the result in Proposition~\ref{prop:fp2:global} we see that, with the entire healthy population social distancing, virus~$1$ is endemic while virus~$2$ has been eradicated; see Figure~\ref{figure:P_11}.


\begin{figure}[h!]\centering
\begin{minipage}{0.45\textwidth}\centering
\includegraphics[width=0.9\textwidth]{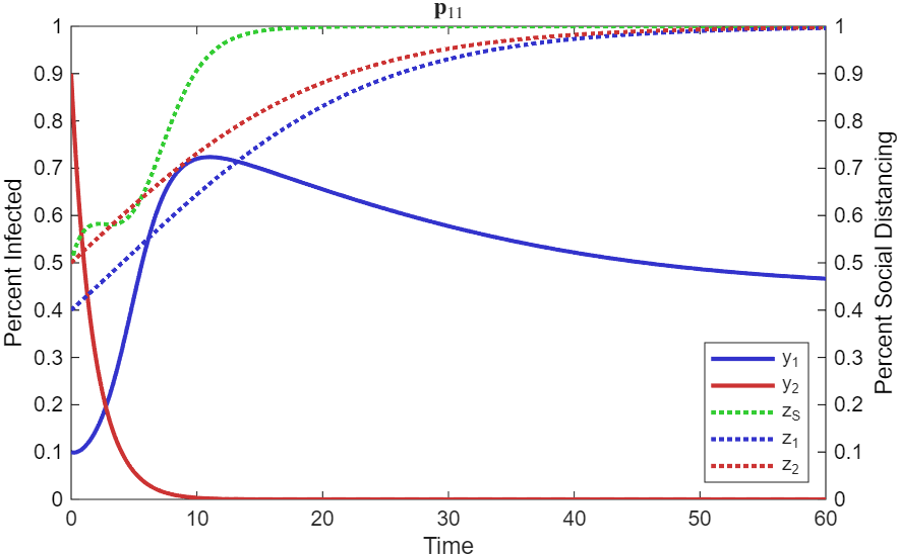}
\caption{
Simulation with $\bs{p}_{11}$ stable.
}
\label{figure:P_11}
\end{minipage}
\begin{minipage}{0.45\textwidth}\centering
\includegraphics[width=0.9\textwidth]{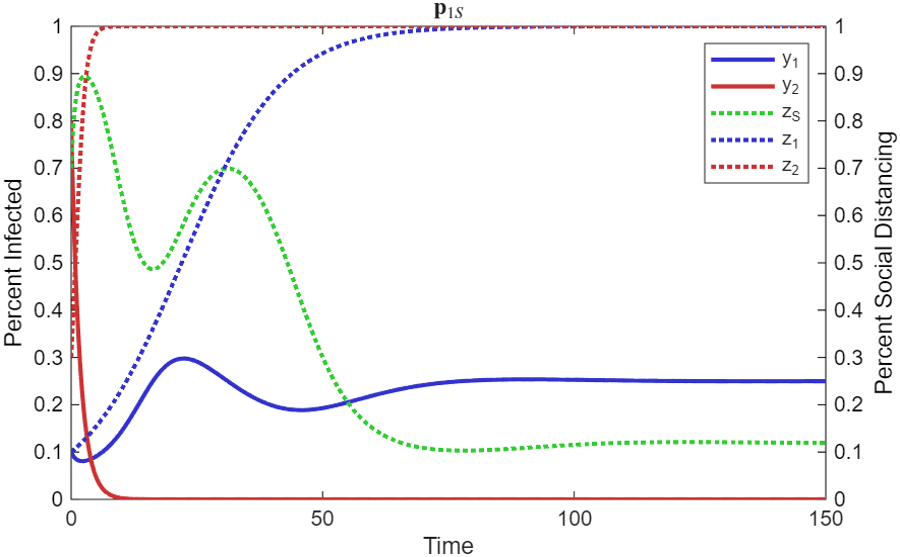}
\caption{
Simulation with $\bs{p}_{1\S}$ stable.
}
\label{figure:P_1S}
\end{minipage}
\end{figure}

\noindent\textbf{Unilateral equilibrium with some SD:}
From Lemma~\ref{lem:unilateral}, the existence of $\bs{p}_{1\S}$ requires $c_\mcd < 2 r_1$ so that $y_1\in(0,1)$ and $q(1-c_\mcd/(2r_i)) < (q\beta_i/\delta_i)^{-1} < (1-c_\mcd/(2r_i)), i\in[2]$ so that $z_\S\in(0,1)$.
We choose 
$\beta=[\begin{smallmatrix}
    0.7&&0.1
\end{smallmatrix}]$, $\delta=[\begin{smallmatrix}
    0.3&&0.6
\end{smallmatrix}]$, $r=[\begin{smallmatrix}
    1&&1
\end{smallmatrix}]$, $c=[\begin{smallmatrix}
    .6&&1.5
\end{smallmatrix}]$, $c_{\mcd}=0.5$ and $q=0.6$.
Observe that with such a choice of parameters, $q\beta_1/\delta_1 = 7/5$, $q\beta_2/\delta_2 = 1/10$, and $1 - c_\mcd/(2r_1) = 1 - c_\mcd/(2r_2) = 3/4$.
Then, the condition $q(1-c_\mcd/(2r_i)) < (q\beta_i/\delta_i)^{-1} < (1-c_\mcd/(2r_i))$ becomes $9/16 < 5/7 < 3/4$ which is true. Moreover, $\beta_1/\delta_1 > \beta_2/\delta_2$ so Proposition~\ref{prop:p1s} implies convergence to $\bs{p}_{1\S} = [1/4, 0, 5/42, 1, 1]^\top$.
Then, we can observe, in Figure~\ref{figure:P_1S}, virus~$1$ remains endemic while virus~$2$ dies out, with partial social distancing in the healthy population.

\noindent\textbf{Coexistence equilibria with no SD:}
For $\mcl_0$ to exist, Lemma~\ref{lem:coexist} requires that $R^0 = \beta_1/\delta_1 = \beta_1/\delta_1$ and $qR^0 > 1$.
We choose
$\beta=[\begin{smallmatrix}
    0.4&&0.4
\end{smallmatrix}]$, $\delta=[\begin{smallmatrix}
    0.2&&0.2
\end{smallmatrix}]$, $r=[\begin{smallmatrix}
    1&&2
\end{smallmatrix}]$, $c=[\begin{smallmatrix}
    4&&4
\end{smallmatrix}]$, $c_{\mcd}=2$ and $q=0.8$.
Such a choice guarantees the existence of a continuum of coexistence equilibria, specifically the line $\mcl_0$ (see Equation~\eqref{eq:line0}). Moreover, said choice also satisfies the condition in Theorem~\ref{thm:l0} for $\mcl_0$ to be stable.
Here we show multiple trajectories converging to $\mcl_0$.
We use the initial conditions, $x(0) \in\{ [0.1, 0.8, 0.9, 0.9, 0.9]^\top, [0.1, 0.1, 0.5, 0.5, 0.5]^\top, [0.8, 0.1, 0.5, 0.5, 0.5]^\top$, \break $[0.1, 0.1, 0.5, 0.5, 0.5]^\top\}$.
For clarity, we project onto $\Delta$; see Figure~\ref{figure:L_0:traj}.
The initial conditions are shown as red dots; the equilibria points are shown as cyan crosses, and the colored lines represent the trajectory of the system for each initial condition. The line $\mcl_0$ is shown as the dashed green line.
Note that the trajectory starting at $y = [.1,.8]^\top$ overshoots the line before converging.


\noindent\textbf{Coexistence equilibria with all SD:}
The $\mcl_{1}$ case is nearly identical to the $\mcl_{0}$ case. The $qR^0$ term gains a $q$ becoming $q^2 R^0 > 1$, and the stability condition is inverted to $2r^\top y > c_\mcd$.
Thus, we use similar parameters as for the $\mcl_0$, except we select $c_\mcd = 0.1$ so that $2 r_2 \left(1 - \frac{1}{qR^0}\right) > c_\mathcal{D}$.
Using the same initial conditions as for the $\mcl_0$ example, we observe similar behavior in Figure~\ref{figure:L_1:traj}. Observe that the line $\mcl_1$ has a lower $y_2$-intercept than the example for $\mcl_0$, even though they use the same $\beta$, $\delta$, and $q$ parameters; this is because $q^2 R^0 < q R^0$.
In line with the result in Theorem~\ref{thm:l1}, we observe that even with the entire healthy population social distancing both viruses are endemic, and the system converges to a coexistence equilibrium along the line $\mcl_1$; see Figure~\ref{figure:L_1:traj}.


\begin{figure}[h!]\centering
\begin{minipage}{0.45\textwidth}\centering
    \includegraphics[width=.9\textwidth]{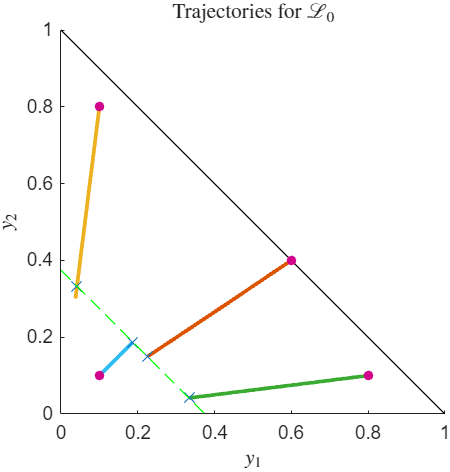}
    \caption{
    The dashed green line is a line of coexistence equilibria ($\mcl_{0}$) that is also locally stable.}
    \label{figure:L_0:traj}
\end{minipage}
\hspace{5mm}
\begin{minipage}{0.45\textwidth}\centering
    \includegraphics[width=0.9\textwidth]{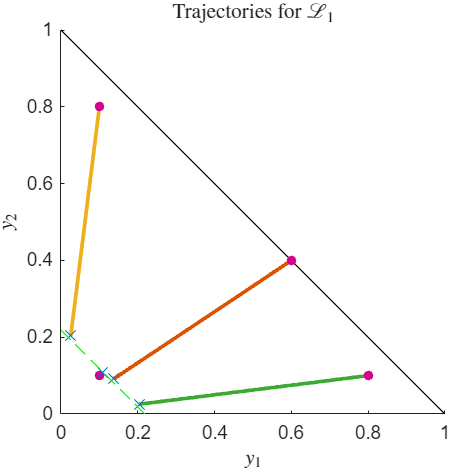}
    \caption{
    The dashed green line is a line of coexistence equilibria ($\mcl_{1}$) that is also  locally stable.}
    \label{figure:L_1:traj}
\end{minipage}
\end{figure}

\noindent\textbf{Coexistence equilibria with some SD:}
From the analysis of $\mcl_\S$, we get several constraints. With $R^0 := \frac{\beta_1}{\delta_1} = \frac{\beta_2}{\delta_2}$, $y_1$ must satisfy inequality~\eqref{eq:lineS}.
Recall the assumption $r_1 < r_2$ which implies $r_1 / r_2 < 1$.
This requires at least that
$\frac{c_\mcd}{2r_2} + \frac{1}{qR^0} < 1$.
It must also be that
$\beta_1\ell(y_1)q^2 < \delta_1 < \beta_1\ell(y_1)q$
so that $z_\S = \frac{1}{1-q} - \frac{\delta_1}{\beta_1\ell(y_1)q(1-q)} \in(0,1)$;
notice that if $q = 1$, i.e., social distancing has no effect, then the condition for $z_\S$ is unsatisfiable.
We note that our example---see Figure~\ref{figure:L_S}---appears to be locally stable with a small domain of attraction being sensitive to change in initial conditions.

\begin{figure}[h!]\centering
\includegraphics[width=0.65\textwidth]{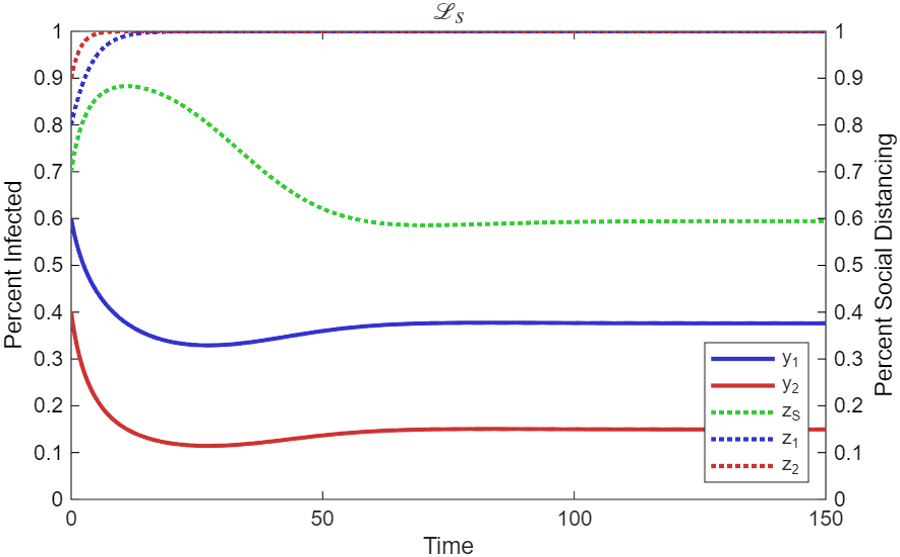}
\caption{Simulation 
showing both viruses endemic and partial social distancing in the healthy population, i.e., the system converging to $\mcl_\S$.}
\label{figure:L_S}
\end{figure}

\subsection{Effect of adding a co-ordination term to the dynamics}
\seb{Suppose instead that social distancing is socially reinforced by those participating. We modify equation~\eqref{eq:payoff:sn}, the payoff for not social distancing, as
\begin{align}
    \pi_{\S\mcn}(y) &= -ry + (1 - z_\S) \label{eq:payoff:sn:revised} \\
    \implies  \dot{z}_\S &= z_\S(1-z_\S)(2ry - c_\mcd + z_\S - 1)
\end{align}
Equation~\eqref{eq:payoff:sn:revised} contains the \emph{coordination term} $(1-z_S)$, which indicates that  the larger the  fraction of the susceptible population that is social  distancing (resp. not social distancing), the smaller the payoff for not social distancing (resp. social distancing). 
Therefore, the modified system has the following dynamics:
\begin{equation}\label{eq:main:revised}
    \begin{aligned}
        \dot{y}_1 &= y_1 (\beta_1 s (1 - z_\S(1-q))(1 - z_1(1-q)) - \delta_1) \\
        \dot{y}_2 &= y_2 (\beta_2 s (1 - z_\S(1-q))(1 - z_2(1-q)) - \delta_2) \\
        \dot{z}_\S &= z_\S(1-z_\S)(2ry - c_\mcd + z_\S - 1) \\
        \dot{z}_1 &= z_1(1-z_1)(c_1 - c_\mcd) \\
        \dot{z}_2 &= z_2(1-z_2)(c_2 - c_\mcd) 
    \end{aligned}
\end{equation}

Observe that, for system~\eqref{eq:main:revised}, the quantity $2ry - c_\mcd$ does not directly determine whether $z_\S$ is increasing or decreasing but rather the inflection point above which $z_\S$ is increasing and below which it is decreasing.
In this model, healthy individuals may not social distance even when the virus is prevalent, if few others are already social distancing ($z_\S(0) \ll 1$).
We choose the parameters
$\beta=[\begin{smallmatrix}
    0.8&&0.8
\end{smallmatrix}]$, $\delta=[\begin{smallmatrix}
    0.1&&0.1
\end{smallmatrix}]$, $r=[\begin{smallmatrix}
    1.1&&1.1
\end{smallmatrix}]$, $c=[\begin{smallmatrix}
    0.2&&0.3
\end{smallmatrix}]$, $c_{\mcd}=1$ and $q=0.1$.
Observe that $q \beta_i/\delta_i < 1 < \beta_i/\delta_i$ and $2r_i - c_\mcd - 1 \in(0,1)$.
Let $x(0) = [0.5, 0.3, 0.6, 0.1, 0.2]^\top\in\interior\Gamma$. In Figure~\ref{figure:coord_limit_cycle}, we can observe that the system appears to enters a limit cycle under such conditions.
If we change only $\beta_1 = 0.9$, then the reproduction numbers are no longer identical and the virus with the larger reproduction number has 
a limit cycle whose amplitude is increasing 
(see the blue line in Figure~\ref{figure:coord_limit_cycle_gen}), whereas the virus with the smaller reproduction number has \sebcancel{oscillations with decaying amplitude} 
(see the red line in Figure~\ref{figure:coord_limit_cycle_gen}). That is, the virus with the larger reproduction number dominates. 
\begin{figure}[h!]\centering
\begin{minipage}{0.45\textwidth}\centering
\includegraphics[width=0.9\textwidth]{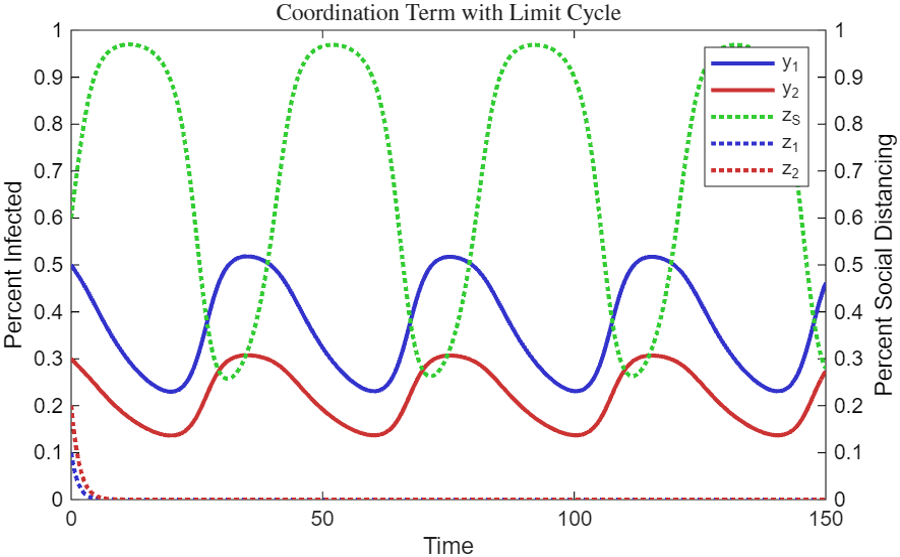}
\caption{Simulation of coordination term system with limit cycle and identical viral parameters.}
\label{figure:coord_limit_cycle}
\end{minipage}
\begin{minipage}{0.45\textwidth}\centering
\includegraphics[width=0.9\textwidth]{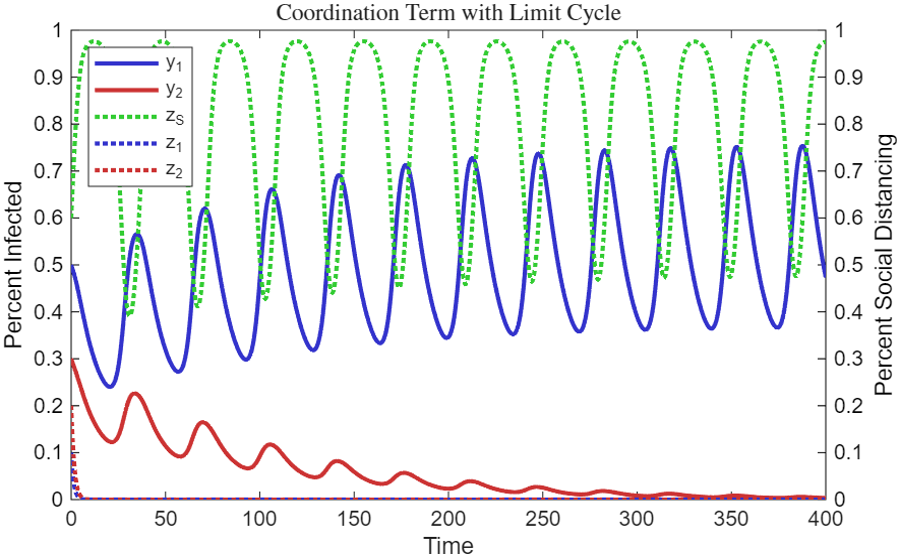}
\caption{Simulation of coordination term system with limit cycle and generic parameters.}
\label{figure:coord_limit_cycle_gen}
\end{minipage}
\end{figure}
We note that convergence 
is sensitive to initial conditions.
With the initial conditions
$x(0) \in\{[0.5, 0.3, 0.6, 0.1, 0.2]^\top,\allowbreak [0.1, 0.8, 0.9, 0.9, 0.9]^\top,\allowbreak  [0.1, 0.1, 0.5, 0.5, 0.5]^\top,\allowbreak [0.8, 0.1, 0.5, 0.5, 0.5]^\top\}$,
Figure~\ref{figure:coord_limit_cycle_traj} 
indicates different trajectories either converging to 
a unilateral 
\seb{equilibrium} or to the limit cycle, depending on their initial conditions. More specifically, the green dots denote the initial conditions, while the crosses denote 
\seb{equilibria} (assumed).
Figure~\ref{figure:coord_limit_cycle_traj} (a) clearly shows virus~2 dying out in all cases and virus~1 remaining endemic.
Figure~\ref{figure:coord_limit_cycle_traj} (b) illustrates the cyclic behavior of the blue and red trajectories, oscillating in $y_1$ and $z_\S$. Note that the yellow and purple trajectories appear to converge to a unilateral 
\seb{equilibrium} even from initial conditions far away, while the limit cycle appears to be attractive within a small domain.
}
\begin{figure}[h!]\centering
\begin{minipage}{0.45\textwidth}\centering
\includegraphics[width=0.9\textwidth]{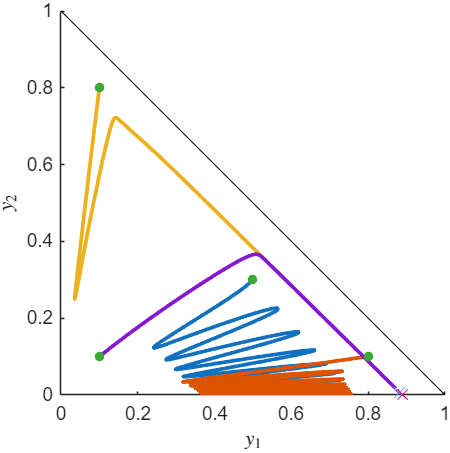}
\ (a)
\end{minipage}
\begin{minipage}{0.45\textwidth}\centering
\includegraphics[width=0.9\textwidth]{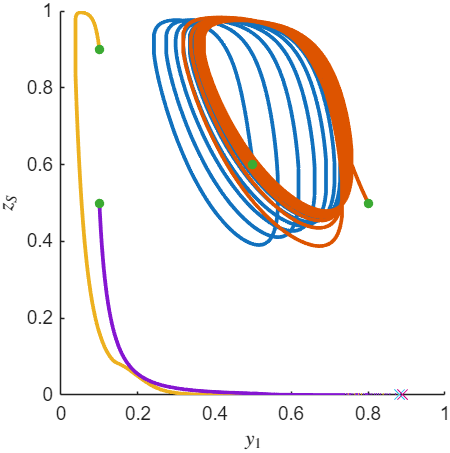}
\ (b)
\end{minipage}
\vspace{-3mm}
\caption{(a): The trajectories of system~\eqref{eq:main:revised} projected onto the $y_1$-$y_2$ plane; (b): The trajectories of system~\eqref{eq:main:revised} projected onto the $y_1$-$z_\S$ plane. Trajectories sufficiently near the limit cycle converge to the limit cycle, while trajectories away from the limit cycle converge to a unilateral 
\seb{equilibrium} instead (virus~1 being the dominant virus with $R_1>R_2$).}
\label{figure:coord_limit_cycle_traj}
\end{figure}

\section{Conclusion}\label{sec:conclusions}

We have presented and analyzed a novel competitive bi-virus model with mediation by social distancing, where social distancing behavior is dynamic.
We showed that the proposed model is not monotone, unlike the classic bi-virus model; consequently, 
understanding the limiting behavior of our model is far more challenging, since one cannot leverage the rich literature on MDS.
Next, by assuming that the costs of taking social distancing measures and not social distancing (i.e., the cost of putting other individuals at risk) were related arbitrarily, we 
listed the various DFEs that our model admits, and identified conditions for almost global stability (resp. instability) of each DFE. Thereafter, in order for ease of exposition, we assumed that the aforementioned costs followed a certain relation, and identified sufficient conditions for almost global stability (resp. instability) of the unilateral (single virus endemic) equilibria, as well as local stability (resp. instability) for 
continuua of coexistence equilibria. 

Several avenues present themselves for future work. First, we could seek to identify conditions for global stability of lines of coexistence equilibria. 
Second, as mentioned previously, one could aim to understand the limiting behavior of our model, possibly by leveraging the theory of singular perturbations. Third, it would be interesting to see if the combination of  
replicator equations which allow for periodic opinion dynamics with the classical bi-virus SIS model has the potential to introduce limit cycles or otherwise complicate the behavior of the model.
Fourth, observe that our model is a scalar model, i.e., restricted to the single-node case; embedding a network into our model might potentially introduce new behavior. 
Finally, opinion dynamics could be modeled in an additional network, which need not 
be identical to the viral transmission network. Such models could examine how local versus national opinions affect responses to disease.

\section*{Appendix}

\newenvironment{reading}{\begin{@abssec}{Further Reading}}{\end{@abssec}}
\begin{reading}
For game theory, see \cite{sandholm2010population,hofbauer2007evolution,osborne2004introduction,reny2020nash}.\\
For competitive viruses from a dynamical systems theory viewpoint, see \cite{sahneh2014competitive,ye2022competitive,liu2019analysis,pare2021multi,ye2021convergence,anderson2023equilibria,janson2024competitive}.\\
For competitive viruses from a biology viewpoint, see \cite{laurie2018evidence,wu2020interference,pepin2008asymmetric,poland1996two,remuzzi2020covid,heesterbeek2015modeling,Chan2018,funk2009spread}.
\end{reading}

\newenvironment{contributions}{\begin{@abssec}{Author declarations and contributions}}{\end{@abssec}}
\begin{contributions}
The authors declare that they have no conflicts of interest.\\
KP and SG conceptualized the work. BC and KP performed all mathematical analysis, while SG supervised BC's work. BC ran all the numerical simulations.
All authors contributed towards preparing the original draft of the manuscript.
\end{contributions}

\newenvironment{data}{\begin{@abssec}{Data availability}}{\end{@abssec}}
\begin{data}
Code for reproducing the presented simulations is available at
\url{https://github.com/BenCat8472/Bi-SIS-Game}.
\end{data}

\bibliographystyle{siamplain}
\bibliography{keith,references}
\end{document}